\newtheorem{assumption}{Assumption}
\newcommand{\continuation}{??}
\newenvironment{weakassumption}[1]
 {\renewcommand{\continuation}{\ref{#1}*}\wkassumption}
 {\endwkassumption}
\newtheorem{theorem}{Theorem}
\xpatchcmd{\@thm}{\thm@headpunct{.}}{\thm@headpunct{}}{}{}
\newcommand*\circled[1]{\tikz[baseline=(char.base)]{
            \node[shape=circle,draw,inner sep=2pt] (char) {#1};}}
            \def\indep{{\,\perp \!\!\! \perp\,}}
\newtheorem{lemma}{Lemma}
\newcommand{\blocktheorem}[1]{%
  \csletcs{old#1}{#1}
  \csletcs{endold#1}{end#1}
  \RenewDocumentEnvironment{#1}{o}
    {\par\addvspace{1.5ex}
     \noindent\begin{minipage}{\textwidth}
     \IfNoValueTF{##1}
       {\csuse{old#1}}
       {\csuse{old#1}[##1]}}
    {\csuse{endold#1}
     \end{minipage}
     \par\addvspace{1.5ex}}
}
\newcommand{\ts}{\textsuperscript}
\colorlet{shadecolor10}{gray!10}
\colorlet{shadecolor20}{gray!20}
\colorlet{shadecolor40}{gray!40}
\colorlet{shadecolor50}{gray!50}
\colorlet{shadecolor60}{gray!60}
\colorlet{shadecolor80}{gray!80}
\colorlet{shadecolor70}{gray!70}
\colorlet{shadecolor90}{gray!90}
\colorlet{gray10}{black!10}
\colorlet{gray20}{black!20}
\colorlet{gray40}{black!40}
\colorlet{gray50}{black!50}
\colorlet{gray60}{black!60}
\colorlet{gray70}{black!70}
\colorlet{gray80}{black!80}
\colorlet{gray90}{black!90}
\begin{document}

\title{\textbf{Transportability of Principal Causal Effects}}
 
\author{Justin M. Clark\thanks{\textbf{Email}: \href{mailto:clar2272@umn.edu}{clar2272@umn.edu}, \textbf{Mailing Address}: 2221 University Ave SE, Suite 200
Minneapolis, MN 55414, USA} \hspace{1cm} Kollin W. Rott \hspace{1cm} James S. Hodges \hspace{1cm} Jared D. Huling\thanks{\textbf{Email}:\href{mailto:huling@umn.edu}{ huling@umn.edu}}\\ \\
\normalsize Division of Biostatistics and Health Data Science \\ \normalsize University of Minnesota School of Public Health, MN, USA}
\date{}

\maketitle

\begin{abstract}
Recent research in causal inference has made important progress in addressing challenges to the external validity of trial findings. Such methods weight trial participant data to more closely resemble the distribution of effect-modifying covariates in a well-defined target population. In the presence of participant non-adherence to study medication, these methods effectively transport an intention-to-treat effect that averages over heterogeneous compliance behaviors. In this paper, we develop a principal stratification framework to identify causal effects conditioning on both compliance behavior and membership in the target population. We also develop non-parametric efficiency theory for and construct efficient estimators of such ``transported'' principal causal effects and characterize their finite-sample performance in simulation experiments. While this work focuses on treatment non-adherence, the framework is applicable to a broad class of estimands that target effects in clinically-relevant, possibly latent subsets of a target population.

\noindent \textbf{Keywords:} causal inference, doubly robust estimation, generalizability, noncompliance, principal stratification
\end{abstract}
\newpage{}

\section{Introduction}
\label{introduction}

Randomized controlled trials (RCTs) are often characterized as the gold standard for evaluating treatment efficacy, owing to their internal validity \citep{juni_assessing_2001}. RCTs may nonetheless lack external validity if trial participants are not drawn from target populations relevant to clinicians and policy makers. Recent advances in causal inference have helped overcome these challenges by combining data from an RCT and a target population to produce effect estimates that are interpretable for the target population \citep{degtiar_review_2023}. Intercurrent events may, however, complicate interpretation of transported effects; without further adjustment, these complications carry over into transported treatment effects \citep{ich_guidelines}. Responding to these issues, we propose causal estimands and accompanying efficient estimators that generalize causal effects in the presence of post-randomization events.

Variation in treatment adherence is a common post-randomization event with important implications for interpretation of trial findings. According to the intention-to-treat (ITT) principle, treatment groups in RCTs are defined by treatment assigned rather than treatment taken, so reported average treatment effects reflect the impact of assignment rather than the treatment itself. Thus, generalizability analyses based on RCT data typically transport ITT effects, which average over distinct compliance behaviors \citep{dahabreh_itt}. Such an analysis implicitly assumes that the distribution of compliance behaviors in the RCT reflects that in the target population. If the treatment effect varies significantly across compliance patterns, then the reported effect estimate---transported or not---may fail to fully characterize the impact of novel interventions. In particular, this mischaracterization can be large enough to obscure real treatment effects.  

For example, an RCT studying the widely-lauded ``Health Care Hotspotting'' intervention, aimed at improving the delivery of health care for patients with excess health care utilization, gave a null point estimate of the hotspotting intervention's effect on utilization outcomes \citep{finkelstein_health_2020}. However, a recent secondary analysis of the same data found pronounced and statistically significant intervention effects among patients with a higher probability of engagement with their assigned intervention \citep{yang_hospital_2023}. Clinicians or health system administrators whose patients did not participate in the original RCT may respond to these findings with two questions. First, ``do the treatment effects estimated in the initial analysis apply to patients in my health system?'' Second, ``do high engagers in my health system reap the same benefits as high engagers in the RCT population?'' This paper provides a causal framework and statistical methods equipped to answer such questions by simultaneously addressing generalizability and non-adherence behavior. More broadly, our work provides a framework for dealing with any post-randomization event when transporting effects to new populations. Our method extends recent work in principal stratification \citep{ding_principal_2017, jiang_multiply_2022}, to allow transportation of so-called principal causal effects. We further develop a general efficiency theory for this setting.

Challenges to trial generalizability stem in part from the idiosyncratic process of trial recruitment. Even if all trial participants satisfy eligibility criteria, they may still not be a random sample from the population defined by those criteria \citep{dahabreh_extending_2020}. Further, decision-makers commonly seek to assess a treatment's effect in a different population altogether. Thus, effects estimated from an RCT sample may not be relevant to all clinical questions pertaining to a treatment.

The field of causal inference has made important progress in designing approaches that directly address such challenges to the external validity of RCTs. These approaches typically require individual patient data (IPD) from the RCT and from a representative sample of the target population. To estimate treatment effects in the target population, the two data sources are combined to ``transport'' effect estimates from the RCT to the target population \citep{colnet_causal_2024}. Loosely speaking, weighting-based transportation estimators re-weight trial participants so that their covariate data, in the aggregate, more closely resemble that of the target population. Under various assumptions, such weighting estimators consistently estimate the effect of treatment assignment in the target population. Other estimators model the outcome as a function of effect-modifying covariates, or combine weighting and outcome-model estimators in a so-called ``doubly-robust'' estimator \citep{degtiar_review_2023}. 

Transportability analyses typically have a common style. First, analysts define a causal estimand that quantifies an intervention's effect in a well-defined target population. Next, data from trials applying that intervention are combined with information characterizing the target population to marginalize observed effects over the target's distribution of possibly effect-modifying covariates. The resulting estimate is then interpreted under causal assumptions as the expected treatment effect in the target population. Just as differences between trial participants and target populations can obscure a trial's interpretation, so too can intercurrent events, such as nonadherence. In particular, in the presence of nonadherence, the transported treatment effect may or may not reflect the expected treatment effect in the target population due to differences in adherence patterns. 
This paper's aim is to develop methods using that same  framework in the presence of complex post-randomization events in the trial data to produce more interpretable effect estimates.

Several different types of approaches carefully define and identify treatment effects under varied compliance behaviors and complex intercurrent events more generally. Most begin by conceptualizing random variables $C(1)$ and $C(0)$, corresponding to the \textit{potential} treatment received under assignment to treatment and placebo, respectively. We let $C(a)=1$ imply receipt of active treatment under assignment to $a$ and $C(a)=0$ imply non-receipt of treatment under assignment to $a$. These approaches to dealing with compliance differ, however, in both the estimands constructed as a function of $C(1)$ and $C(0)$ and in the assumptions that connect these estimands to observed data.

For example, a mediation approach to compliance views the received treatment $C$ as being on a particular causal pathway by which treatment \textit{assignment} affects outcomes. The effect of treatment assignment $A$ mediated by $C$ is typically called the ``indirect effect" of assignment on outcomes, while the unmediated impact of $A$ is called the ``direct effect." Conceptually, mediation analysis treats $C$ as amenable to intervention, i.e., investigators can ask what outcomes would be observed if compliance were ``set" to a particular value \citep{robins_greenland}. We might consider, for example, the effect of assignment to treatment if compliance were set to its level under placebo \citep{pearl_2001}. 

Other approaches instead treat compliance or treatment receipt behavior $(C(1), C(0))$ as defining latent sub-populations, e.g., participants who would always take active treatment regardless of assignment or who always take the treatment assigned to them. Treatment effects are then defined conditional on membership in such sub-populations \citep{angrist_identification_1996}. While there are formal relationships between effects among compliers and the direct/indirect effects of mediation analysis, in general they are not equal and proceed from different scientific goals \citep{vanderweele_relations}. 
Approaches that target these conditional estimands differ in the types of assumptions needed for causal identification. Instrumental variables (IV) analysis is one such approach, identifying the treatment effect among participants who comply with assigned treatment, typically called the ``local [or complier] average causal effect" \citep{ding_causal_intro}. As defined in \citet{angrist_identification_1996} this identification strategy assumes that treatment assignment affects outcomes only through compliance behavior, an assumption typically termed the ``exclusion restriction." 

A more general approach to addressing post-treatment variables is principal stratification \citep{frangakis_rubin_2002}, which includes complier average causal effects (CACE) as a special case but can also be used to estimate effects conditional on any post-treatment variable the values of which define a subpopulation of interest. This generality and emphasis on subpopulations make principal stratification, in our view, especially well suited to transportability in the presence of complex intercurrent events. As we demonstrate below, principal stratification allows us to define causal effects conditional on membership in the intersection of a target population and the latent groups defined by post-treatment variables. Defining our overarching scientific goal using this kind of transportability problem permits development of tools, techniques, and intuitions from generalizability and transportability.

As described, IV analysis of nonadherence can be viewed as a special case of principal stratification with its own assumptions and inferential targets. Methods for generalizability in the IV setting have been studied previously in \citet{rudolph_robust_2017}. In this work, we instead develop a principal stratification framework with an identification strategy based on alternative assumptions, for two main reasons. First, we hope to provide an alternative to the exclusion restriction that may be more reasonable in certain settings. For example, in complex, community-based interventions like the hotspotting RCT, assignment is unblinded and might plausibly affect outcomes directly through the knowledge of being assigned to the intervention, thereby violating the exclusion restriction. 

Second, we aim to develop a statistical framework applicable to populations beyond those targeted by the complier average causal effect. Principal stratification allows us to define effects in a variety of scientifically-relevant populations. For example, the survivor average causal effect is defined for individuals who would survive regardless of treatment assignment. This population is important in clinical trials where some participants die before their primary outcome is observed \citep{rubin_censoring_death}. Another relevant population comes from the hotspotting trial, in which analyses of engagement focus on post-treatment events in the treatment group alone, which would require marginalizing over different behaviors in the control group. In both cases, effects in a meaningful population are not captured by standard CACE estimands. While this work focuses on compliance, our framework applies to causal effects at the intersection of a target population and any possibly latent principal strata. 

This paper builds on principal ignorability as an alternative to the exclusion restriction; we use this assumption to identify principal stratum membership with principal scores. Principal scores were introduced by \citet{follmann_2000}, generalizing propensity scores to predicting compliance behavior. Principal ignorability is a useful alternative to parametric approaches (see, e.g., \citet{parametric_princ_strata}), the assumptions of which may not be tenable in some situations. This concept has been developed further in, e.g., \citet{jo_stuart_principal} and \citet{feller_principal_2017}. We draw specifically on the theoretical contributions of \citet{ding_principal_2017} and \citet{jiang_multiply_2022}, which extended principal score techniques to estimating principal causal effects beyond those related to compliance and applied such extensions using modern causal inference techniques. 

The rest of this paper is organized as follows. Section \ref{defining-target-estimands} defines notation and target estimands. Section \ref{identification} introduces causal assumptions to identify our novel estimands, combining those of principal ignorability and generalizability. Section \ref{sec: estimation} considers estimation using ideas from nonparametric efficiency theory to improve efficiency over plug-in approaches. Section \ref{sec: simulation} uses simulation experiments to characterize our estimation methods. Section \ref{sec: hotspotting_analysis} applies our framework to data from the Healthcare Hotspotting Trial. Section \ref{sec: discussion} concludes and highlights avenues for future work.

\section{Defining Target Estimands}
\label{defining-target-estimands}

\subsection{Intuition for Transported Principal Causal Effects}
\label{sub: intuitionestimands_princ_strat}

To gain intuition about what is being estimated in our setting, consider how a weighting estimator might be constructed to transport a treatment effect from a trial to compliers in the target population. Given individual patient data (IPD) from the trial and from the target population, we might first use a function of covariates to characterize trial participants with (1) a high probability of compliance to assigned treatment and (2) greater similarity to the target population. These individuals' outcomes are weighted more highly, giving the desired transported effect estimate under some assumptions.

Figure \ref{fig:compliance-diagram} illustrates our approach in a highly stylized setting with two covariates $X_1$ and $X_2$. Supposing that each population --- trial participants, compliers, and the target population --- share a region of common covariate support, we informally highlight areas of high covariate density using the different shapes. For instance, compliers have highest covariate density in the rhombus shape and are more likely to have lower values of $X_2$ and higher values of $X_1$. Our methods focus on treatment effects averaged over the distribution of covariates for compliers in the target population, represented by the shaded region --- again, this region is loosely thought of as the region where compliers in the target population have highest covariate density. If $X_1$ and $X_2$ are effect-modifying, then typical principal causal effects  of compliers in the trial population will not necessarily reflect complier average effects in the target population.

\begin{figure}[h]
    \centering
    \resizebox{0.45\textwidth}{!}{%
    \begin{tikzpicture}[scale=1.25]

  \draw[thick,dashed,->] (0,0) -- (6,0) node[anchor=north east] {\small $X_1$};
  \draw[thick,dashed,->] (0,0) -- (0,5) node[anchor=north east] {\small $X_2$};

  \begin{scope}
    \clip[rotate around={0:(3.5,4)}] (3.5,1.25) -- (5,2) -- (3.5,2.75) -- (2,2) -- cycle;
    \fill[gray40, rotate around={-40:(2,3)}] (2.6,2.5) ellipse (2cm and 0.8cm);
  \end{scope}
  

  \draw[thick, gray50, rotate around={-40:(2,3)}, -stealth,
  postaction={decorate,decoration={text effects along path,
    text={ \;\; \;\; \;\; \;\; \;\; \;\; \;\; \;\; \;\; \;\; \;\;\;\;\;\;\; \;\; \;\;\;\;\; Target population}, 
    text align/align=center, 
    text effects/.cd, font=\small, 
      text along path, 
      every character/.style={yshift=0.5ex}}}] (2.6,2.5) ellipse (2cm and 0.8cm);

  \draw[thick, shadecolor70, rotate around={0:(3.5,4)}, -stealth,
  postaction={decorate,decoration={text effects along path,
    text={ \;\: Trial population}, 
    text align/align=left, 
    reverse path,
    text effects/.cd, font=\small, 
      text along path, 
      every character/.style={yshift=0.5ex}}}] (0.5,0.5) -- (2.75,4.25) -- (4.75,0.75) -- cycle;

  \draw[thick, gray80, rotate around={0:(3.5,4)}, -stealth,
  postaction={decorate,decoration={text effects along path,
    text={  \;\;\;\;\;\;\; All compliers }, 
    text align/align=center, 
    reverse path,
    text effects/.cd, font=\small, 
      text along path, 
      every character/.style={yshift=0.5ex}}}] (3.5,1.25) -- (5,2) -- (3.5,2.75) -- (2,2) -- cycle;

  \draw[very thick, dashed, black, -stealth,
  postaction={decorate,decoration={text effects along path,
    text={\;\;\;\;\;\;\;\;\;\;\;\;\;\; Region of common support}, 
    text align/align=center, 
    reverse path,
    text effects/.cd, font=\small, 
      text along path, 
      every character/.style={yshift=0.5ex}}}] plot [smooth cycle, tension=0.6] coordinates {(0.5,0.25) (3.5,0.4) (5.5,1)  (4,3.5) (2.5,4.45) (0.5,3.75) (0.5,2)};

\end{tikzpicture}
}
    \caption{Stylized illustration of possible differences between compliers in a trial population and compliers in a new target population. The shapes defined by solid lines are stylized and indicate regions of high covariate density rather than regions of covariate support. The group of compliers in the target population is the solid gray shaded area. }
    \label{fig:compliance-diagram}
\end{figure}
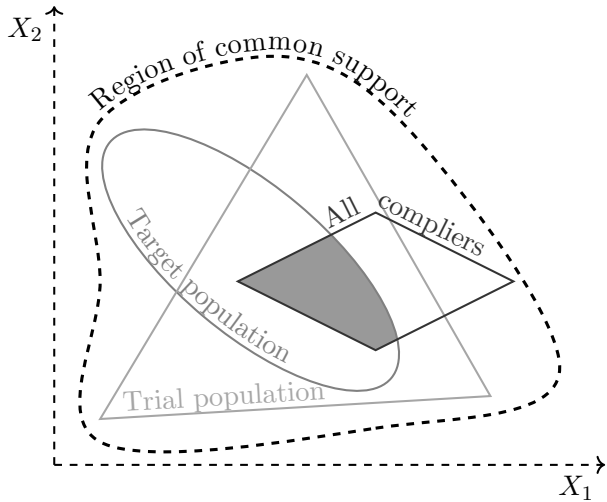


\subsection{Standard Estimands with Principal Strata}
\label{sub: estimands_princ_strat}

We first define notation for data from an RCT alone, excluding any reference to target populations. From a single RCT, we observe i.i.d.~tuples $(Y_i, A_i, X_i, C_i)$ where $Y_i$, $A_i$, and $X_i$ refer to outcome, treatment assignment, and baseline covariates, respectively. We also observe an indicator variable $C_i$, where $C_i=1$ if participant $i$ received active treatment and $C_i=0$ if participant $i$ did not receive active treatment. For example, if $A_i=1$ and $C_i=1$, then participant $i$ adhered to their assigned active treatment. While we frame our work in the context of non-adherence, $C$ could represent other intercurrent events, e.g., survival in the case of truncation by death \citep{lyu_bayesian_2023}. 

We examine compliance through the lens of principal stratification \citep{frangakis_rubin_2002, bornkamp_principal_2021}, which places trial participants in distinct principal strata if they experience distinct intercurrent events, e.g., distinct compliance patterns. Even though we observe such events only after randomization, we conceptualize latent compliance patterns as existing \textit{a priori} and thus can treat them like potential outcomes.

Let $C(a)$ denote treatment received under assignment to treatment $A=a$. If $C_i(1)=1$, for instance, then if participant $i$ were assigned to active treatment, they would receive active treatment. Using both potential outcomes and principal strata notation, the average treatment effect among compliers is
\begin{equation}
E[Y(1)-Y(0)|C(1)=1, C(0)=0]. \label{eq: cace}
\end{equation}
A full definition of ``compliance" requires counterfactual knowledge of participant behavior under assignment to both placebo and active treatment. The obstacle to identifying these quantities is that strata defined by observed compliance $C_i$ are mixtures of strata defined by latent potential compliance $C_i(a)$. For example, compliance under assignment to control cannot be observed for patients assigned to treatment. To make progress, we need assumptions about the relationship between observed covariates $X$, observed compliance patterns, and latent compliance patterns. In this work, we apply monotonicity and principal ignorability to identify principal strata; Section \ref{identification} gives more detail.

\subsection{Novel Estimands among Principal Strata in Target Populations}
\label{sub: incorp_target_pop_data}

Now considering generalizability and transportability, we expand the observed data to include baseline covariates in a target population. The observed data then has the form $(R_iY_i, R_iA_i, R_iC_i,  X_i, R_i)$ where the indicator $R_i$ is 1 if individual $i$ is a member of the trial population and 0 if they are a member of the target population. Next, let $U=c_1c_0$ describe the population of individuals with $C_i(1)=c_1$ and $C_i(0)=c_0$, as in \citet{jiang_principal_2016}. Then the target causal quantity is
\begin{equation}
\tau^0_{10}=E[Y(1)-Y(0)|R=0, U=10], \label{eq: target_estimand_compliance}
\end{equation}
the expected treatment effect among target population members who are also compliers. Referring to Figure \ref{fig:compliance-diagram}, this estimand defines a new target population at the intersection of latent compliers with the larger target population. Because the degree of overlap with latent compliers may differ between the trial and target populations, treatment effects among compliers in the trial population may not be representative of compliers in the target population. Our framework addresses this problem directly.

\section{Identification of Principal Effects in the Target Population}
\label{identification}

\subsection{Identification Assumptions}
\label{sub: identification_assumptions}

Our causal quantity $\tau^0_{10}$ is defined in terms of two sets of possibly unobserved variables: potential outcomes $Y(1)$ and $Y(0)$ and potential compliance $C(1)$ and $C(0)$. Unlike the overall average treatment effect in the target population, $E[Y(1)-Y(0)|R=0]$, $\tau^0_{10}$ conditions on an unobserved population, the intersection of $R=0$ with compliers. These challenges require assumptions connecting these unobserved quantities to the observed data. Our assumptions combine those of principal stratification (e.g., Assumptions 1-3 of \citet{jiang_multiply_2022}) and generalizability (e.g., Assumptions 1-6 of \citet{degtiar_review_2023}), and introduce new conditions specific to our setting. We group our assumptions according to these different contexts.  While we focus on identifying effects in compliers, the assumptions below apply generally to any population defined by post-randomization variables $C(1)$ and $C(0)$. Whether such assumptions are appropriate for a given problem depends on the scientific context.

\begin{assumption}[Consistency] \label{consistency}
$Y_i=Y_i(1)A_i+Y_i(0)(1-A_i)$ and $C_i=C_i(1)A_i+C_i(0)(1-A_i)$.
\end{assumption}
\begin{assumption}[Treatment Ignorability] \label{treatment_ignorability} $A\indep (Y(1), Y(0), C(1), C(0))|X, R=1$.
\end{assumption}
Assumptions \ref{consistency} and \ref{treatment_ignorability} are typical in causal inference generally. An important condition embedded in Assumption \ref{consistency} is that participants do not directly affect each others' compliance behavior. Especially in policy-type interventions applied to a single community---as in the hotspotting example---we might expect violations of this assumption. Future work might build on existing literature on interference (e.g., \citet{hudgens_interference}) to incorporate compliance. Assumption \ref{treatment_ignorability} implies that $X$ contains enough information to remove dependence between treatment assignment and potential compliance. This holds by design in RCTs.
\begin{assumption}[Monotonocity] \label{monotonicity} $C(1)\geq C(0)$. \end{assumption}
\begin{assumption}[Principal Ignorability] \label{princignorability}
$E[Y(1)|U=10, R=1, X]=E[Y(1)|U=11, R=1, X]$ and $E[Y(0)|U=00, R=1, X]=E[Y(0)|U=10, R=1, X].$ 
\end{assumption}
Assumptions \ref{monotonicity} and \ref{princignorability} restate conditions common to principal stratification, e.g., Assumptions 2 and 3 in \citet{jiang_multiply_2022}. Principal ignorability is essential to identifying outcomes in a principal stratum when the observed data is a mixture of multiple strata. The first condition in Assumption \ref{princignorability} implies that the conditional mean of potential outcomes under treatment is identical for those with $C(1)=1$ and $C(0)=0$ and those with $C(1)=1$ and $C(0)=1$. Under Assumption \ref{consistency}, we know $C(1)=1$ for those in the treatment group who received treatment. Principal ignorability implies that we do not need to know those participants' behavior under control, i.e., $C(0)$, to identify their conditional average potential outcomes. This strong assumption invokes cross-world conditions applied simultaneously on $Y(a)$, $C(a)$, and $C(1-a)$ that are untestable from data. 

\begin{assumption}[Principal Stratum Exchangeability]\label{stratexch}
$R \indep (C(1), C(0))|X$.
\end{assumption} 
\begin{assumption}[Mean Exchangeability Among Principal Strata]\label{mean_exch}
$E[Y(a)|R=1, U=c_1c_0, X=x]=E[Y(a)|R=0, U=c_1c_0, X=x]$ for $a=1, 0$.
\end{assumption}

We introduce Assumptions \ref{stratexch} and \ref{mean_exch} to connect principal stratification to generalizability and transportability. Assumption \ref{mean_exch} implies the covariates $X$ are rich enough that learning about conditional average potential outcomes among compliers in the trial is as good as learning about such conditional outcomes among compliers in the target population. This extends similar  assumptions in transportability, e.g., Assumption 4 in \citet{dahabreh_extending_2020}. The key difference here is the inclusion of principal stratum membership $U=c_1c_0$. This assumption extends beyond compliance and applies to any potential outcome conditional on principal strata. Similarly, Assumption \ref{stratexch} implies that, conditional on $X$, the distribution of latent compliance patterns is identical in the trial and target populations. That is, $X$ can capture any relationship between latent compliance and the covariate patterns in $R=0$. Of course, this does not imply \textit{marginal} independence between $R$ and $C(1), C(0)$. It is precisely that the distribution of compliers may differ between the target and trial populations that motivates this work.

\subsection{Identification of Transported Principal Effects}
\label{identification_res}

This section gives identification results based on the foregoing assumptions, which connect $\tau^0_{10}$ to observed functions of the data. To define some such functions, let $p_a(X)=P(C=1|A=a, R=1, X)$, $p_a = E_{X}\left[p_a(X)\right]$, $\rho(X)=P(R=1|X)$, $\rho = E_X[P(R=1|X)]$, and $\mu_{ac}(X)=E[Y|A=a, C=c, R=1, X]$. We present three approaches to identifying $\tau^0_{10}$, distinguished in part by their reliance on separate nuisance parameters. Theorem \ref{ident_theorem} gives a simple plug-in identification as a function of $p_a(X)$, $\rho(X)$, and $\mu_{ac}(X)$. Theorems \ref{ipw_identification_thm} and \ref{om_identification_thm} give alternative approaches. Proofs of all theorems are in Section \ref{supp_identification} of the Supplementary Material.

\begin{theorem}[Plug-In Identification]
\label{ident_theorem}
Under Assumptions \ref{consistency} through \ref{mean_exch}, and assuming $\rho(x)>0$, $1-\rho(x)>0$, and $p_a(x)>0$ for $a=1, 0$ and all $x$ in the support of $X$,
\begin{equation}
E[Y(1)-Y(0)|U=10, R=0] = \frac{E\left[\left\{p_1(X)-p_0(X)\right\}\left\{1-\rho(X)\right\}\left(\mu_{11}(X)-\mu_{00}(X)\right)\right]}{E\left[\left\{p_1(X)-p_0(X)\right\}\left\{1-\rho(X)\right\}\right]}. \label{eq: identification_res}
\end{equation}
\end{theorem}

\noindent This result fits into a larger context of weighting-based methods that ``tilt" a given distribution of covariates toward a different distribution; see \citet{fan_li_tilting} for a broader discussion. Here, if we let $f(x)$ denote the marginal distribution of covariates across the trial and target populations, and $g(x)$ the distribution of covariates among compliers in the target population, we can show that $g(x)\propto f(x)(p_1(x)-p_0(x))(1-\rho(x))$. From \citet{jiang_principal_2016}, we know that $p_1(x)-p_0(x)$ identifies the conditional probability of compliance. Since $1-\rho(x)$ is the probability of membership in the target population, $(p_1(x)-p_0(x))(1-\rho(x))$ identifies the conditional probability of being a complier in the target population. Then, for example, the result in (\ref{eq: identification_res}) follows by taking the expectation of $\mu_{11}(X)-\mu_{00}(X)$ with respect to $g(x)$, including the normalizing constant $E[\{p_1(X)-p_0(X)\}\{1-\rho(X)\}]$. 

The two theorems below are alternative identification results relying on different sets of nuisance functions. Theorem \ref{ipw_identification_thm}'s identification result has a form similar to inverse probability weighting, whereas Theorem \ref{om_identification_thm}'s  identification result uses conditional mean outcome functions.

\begin{theorem}[IPW-Based Identification]
\label{ipw_identification_thm}
\textit{Under Assumptions \ref{consistency}-\ref{monotonicity}, and assuming $\rho(x)>0$, $1-\rho(x)>0$, and $p_a(x)>0$ for $a=1, 0$ and all $x$ in the support of $X$, we have
\begin{equation}
E[Y(1)|U=10, R=0] = \frac{1}{D}E\left[C\cdot A\cdot R \cdot \frac{p_1(X)-p_0(X)}{p_1(X)}\cdot\frac{1}{\pi(X)}\cdot \frac{1-\rho(X)}{\rho(X)}\cdot Y\right] \label{eq: ipw_identification_1}
\end{equation}
where $\pi(X)=P(A=1|R=1, X)$ and $D=E[(1-\rho(X))(p_1(X)-p_0(X))]$. Similarly, 
\begin{equation}
E[Y(0)|U=10, R=0] = \frac{1}{D}E\left[(1-C)\cdot (1-A)\cdot R \cdot \frac{p_1(X)-p_0(X)}{1-p_0(X)}\cdot\frac{1}{1-\pi(X)}\cdot \frac{1-\rho(X)}{\rho(X)}\cdot Y\right]. \label{eq: ipw_identification_0}
\end{equation}}
The contrast $\tau^0_{10}$ is identified by taking the difference of the above expressions.
\end{theorem}
\noindent The expressions (\ref{eq: ipw_identification_1}) and (\ref{eq: ipw_identification_0}) imply estimators that are simply weighted averages of the outcomes.
Each term in (\ref{eq: ipw_identification_1}) and (\ref{eq: ipw_identification_0}) serves a distinct purpose in re-weighting outcomes observed among compliers in either treatment arm of the study to resemble those of compliers in the target population. For instance, the weights in (\ref{eq: ipw_identification_1}) break down as follows: 
\begin{equation*}
\underbrace{\frac{p_1(X) - p_0(X)}{p_1(X)}}_{\parbox{4cm}{\center \scriptsize reweight $C=1$ group in arm $A=1$ to compliers}}\times\underbrace{\frac{1}{\pi(X)}}_{\parbox{2.25cm}{\center \scriptsize reweight $A=1$ to study sample}}\times\underbrace{\frac{1-\rho(X)}{\rho(X)}}_{\parbox{2.5cm}{\center \scriptsize transport RCT to target}}.
\end{equation*}
The next theorem uses the outcome model (OM) $\mu_{ac}(X)$ for identification.

\begin{theorem}[OM-Based Identification]
\label{om_identification_thm}
\textit{Under Assumptions \ref{consistency}-\ref{monotonicity}, and assuming $p_a(x)>0$ for $a=1, 0$ and all $x$ in the support of $X$, we have
\begin{equation}
E[Y(1)-Y(0)|U=10, R=0] = \frac{E\left[\{p_1(X)-p_0(X)\}(1-R)(\mu_{11}(X)-\mu_{00}(X))\right]}{E\left[\{p_1(X)-p_0(X)\}(1-R)\right]} .\label{eq: om_identification}
\end{equation}}
\end{theorem}

As is evident from Theorem \ref{om_identification_thm}, the principal stratification and generalizability setting resists straightforward application of, e.g., g-computation approaches \citep{what_if_causal_book}:  even if we had a correctly specified model for conditional mean outcomes among compliers, the distribution of $X$ over which we want to standardize such conditional outcomes is unidentified without further assumptions. Thus, identifying and using the principal scores is essential in \eqref{eq: om_identification}.

\section{Estimation and Efficiency Theory} \label{sec: estimation}

\subsection{Plug-in Estimation} \label{sub: plugins}

\noindent Each of Theorems \ref{ident_theorem}, \ref{ipw_identification_thm} and \ref{om_identification_thm} suggests  an estimator for $\tau^0_{10}$, which we denote $\hat{\tau}_{\text{Plug-In}}$, $\hat{\tau}_{\text{IPW}}$ and $\hat{\tau}_{\text{OM}}$, respectively, where additional superscripts and subscripts are omitted for simplicity:
\begin{align}
\hat{\tau}_{\text{Plug-In}} &= \frac{\mathbb{P}_n\left[\{\hat{p}_1(X)-\hat{p}_0(X)\}\{1-\hat{\rho}(X))\}\{\hat{\mu}_{11}(X)-\hat{\mu}_{00}(X)\}\right]}{\mathbb{P}_n\left[\{\hat{p}_1(X)-\hat{p}_0(X)\}\{1-\hat{\rho}(X)\}\right]}, \label{eq: plug_in_estimator}\\\nonumber\\
\hat{\tau}_{IPW} &= \frac{1}{\hat{D}}\mathbb{P}_n\left[C\cdot A\cdot R \cdot \frac{\hat{p}_1(X)-\hat{p}_0(X)}{\hat{p}_1(X)}\cdot\frac{1}{\hat{\pi}(X)}\cdot \frac{1-\hat{\rho}(X)}{\hat{\rho}(X)}\cdot Y\right] \label{eq: ipw_estimator}\\[8pt]
&\hspace{.75cm} - \frac{1}{\hat{D}}\mathbb{P}_n\left[(1-C)\cdot (1-A)\cdot R \cdot \frac{\hat{p}_1(X)-\hat{p}_0(X)}{1-\hat{p}_0(X)}\cdot\frac{1}{1-\hat{\pi}(X)}\cdot \frac{1-\hat{\rho}(X)}{\hat{\rho}(X)}\cdot Y\right], \text{ and} \nonumber\\ \nonumber\\
\hat{\tau}_{OM} &= \frac{\mathbb{P}_n\left[\{\hat{p}_1(X)-\hat{p}_0(X)\}(1-R)\{\hat{\mu}_{11}(X)-\hat{\mu}_{00}(X)\}\right]}{\mathbb{P}_n\left[\{\hat{p}_1(X)-\hat{p}_0(X)\}(1-R)\right]}, \label{eq: om_estimator}
\end{align}
where $\mathbb{P}_n(f(Z_i))$ denotes the sample average $\frac{1}{n}\sum_{i=1}^n f(Z_i)$ and $\hat{D} = \mathbb{P}_n\left[\{\hat{p}_1(X)-\hat{p}_0(X)\}(1-\hat{\rho}(X))\right]$. The consistency of each estimator relies on correctly specifying a particular set of nuisance parameters. It may be more desirable to construct a single estimator consistent across this range of misspecification scenarios. To do so, we first derive the efficient influence function (EIF) for the parameters identified in Theorem \ref{ident_theorem} and then use the EIF to construct robust, efficient estimators of the treatment effect among compliers in the target population.

\subsection{Efficient Influence Function} \label{sub: eif}

To construct estimators of $\tau^0_{10}$ with desirable efficiency properties, we derive the efficient influence function (EIF) for the parameters identified in Theorem \ref{ident_theorem} and then the EIF for $\tau^0_{10}$. The EIF is especially important in our case because we need to estimate several nuisance functions. Estimators based on the EIF have desirable statistical properties even when some of the nuisance functions are inconsistent or converge at rates slower than $\sqrt{n}$ \citep{kennedy_review}. 

Before presenting the EIF, we first define some additional quantities, all of which have analogues in \citet{jiang_multiply_2022}. These quantities are intermediate expressions used in the EIF for $\tau^0_{10}$ and are a bridge between our setting and the non-transportability setting of \citet{jiang_multiply_2022}.
\begin{align*}
&\psi_{f(Y_{a, r}, C_{a, r}, X)} \equiv \frac{I(A=a)I(R=r)\left[f(Y, C, X)-E\left\{f(Y, C, X)|X, A=a, R=r\right\}\right]}{P(A=a|R=1, X)P(R=r|X)}
\\&\hspace{3cm}+E\left\{f(Y, C, X)|X, A=a, R=r\right\},\\[2pt]
&\psi_{f(R)}=f(R)-E[f(R)|X],\\[6pt]
&\phi^0_{1, 10} \equiv \frac{e_{10}(X)}{p_1(X)}\left\{1-\rho(X)\right\}\psi_{Y_{1, 1}C_{1, 1}}-\mu_{11}(X)\left\{1-\rho(X)\right\}\left\{\psi_{C_{0, 1}}-\frac{p_0(X)}{p_1(X)}\psi_{C_{1, 1}}\right\}\\[2pt]
&\hspace{1.25cm}+e_{10}(X)\psi_{1-R}\mu_{11}(X), \\[6pt]
&\phi^0_{0, 10} \equiv \frac{e_{10}(X)}{1-p_0(X)}\left\{1-\rho(X)\right\}\psi_{Y_{0, 1}(1-C_{0, 1})}-\mu_{00}(X)\left\{1-\rho(X)\right\}\left\{\psi_{1-C_{1, 1}}-\psi_{1-C_{0, 1}}\left(\frac{1-p_1(X)}{1-p_0(X)}\right)\right\}\\[2pt]
&\hspace{1.25cm}+e_{10}(X)\psi_{1-R}\mu_{00}(X), \text{ and}\\
&\lambda_{10} \equiv \left[\psi_{C_{1, 1}}-\psi_{C_{0, 1}}\right]\left\{1-\rho(X)\right\}+e_{10}(X)\psi_{1-R},
\end{align*}
where $e_{10}(X)=p_1(X)-p_0(X)$. Although these expressions are complex, it can be shown that $\phi^0_{1, 10}$, $\phi^0_{0, 10}$, and $\lambda_{10}$ have familiar forms of uncentered influence functions, i.e., mean-zero residual terms plus a parameter of interest. For example, in Equation (\ref{phi_eqn}) in Section \ref{eif_deriv_supp} of the Supplement we show that the mean-zero centered EIF for $E\left[\left\{e_{10}(X)\right\}\left\{1-\rho(X)\right\}\mu_{11}(X)\right]$ is $\phi^0_{1, 10}-E\left[\left\{e_{10}(X)\right\}\left\{1-\rho(X)\right\}\mu_{11}(X)\right]$ so that $\phi^0_{1, 10}$ is centered at $E\left[\left\{e_{10}(X)\right\}\left\{1-\rho(X)\right\}\mu_{11}(X)\right]$. We can use such intermediate results to derive the overall EIF for more complex parameters of interest, namely
\begin{align*}
\psi^0_{10} &= \frac{E\left\{\left\{p_1(X)-p_0(X)\right\}\{1-\rho(X)\}\left[\mu_{11}(X)-\mu_{00}(X)\right]\right\}}{E\left\{\left\{p_1(X)-p_0(X)\right\}\{1-\rho(X)\}\right\}},\\[6pt]
\psi^0_{1, 10} &=  \frac{E\left[\left\{p_1(X)-p_0(X)\right\}\left\{1-\rho(X)\right\}\mu_{11}(X)\right]}{E\left[\left\{p_1(X)-p_0(X)\right\}\left\{1-\rho(X)\right\}\right]}, \text{ and}\\[6pt]
\psi^0_{0, 10} &= \frac{E\left[\left\{p_1(X)-p_0(X)\right\}\left\{1-\rho(X)\right\}\mu_{00}(X)\right]}{E\left[\left\{p_1(X)-p_0(X)\right\}\left\{1-\rho(X)\right\}\right]}.
\end{align*}
Note that $\tau^0_{10} = \psi^0_{10} = \psi^0_{1, 10} - \psi^0_{0, 10}$ under the assumptions in Section \ref{sub: identification_assumptions}.
Theorem \ref{eif_thm} references the above quantities and derives the EIF for our parameter of interest.  

\begin{theorem}[Derivation of Efficient Influence Functions]
\label{eif_thm}
The EIF for $\psi^0_{1, 10}$ is 
\begin{equation}
 \varphi_{\psi^0_{1, 10}}=\frac{\phi^0_{1, 10}}{E\left[e_{10}(X)\left\{1-\rho(X)\right\}\right]}-\psi^0_{1, 10}\left(\frac{\lambda_{10}}{E\left[e_{10}(X)\{1-\rho(X)\}\right]}\right). \label{eq: eif_mu_1}
\end{equation}
Similarly, the EIF for $\psi^0_{0, 10}$ is 
\begin{equation}
\varphi_{\psi^0_{0, 10}} = \frac{\phi^0_{0, 10}}{E\left[e_{10}(X)\left\{1-\rho(X)\right\}\right]}-\psi^0_{0, 10}\left(\frac{\lambda_{10}}{E\left[e_{10}(X)\{1-\rho(X)\}\right]}\right). \label{eq: eif_mu_0}
\end{equation}
Finally, the difference of (\ref{eq: eif_mu_1}) and (\ref{eq: eif_mu_0}) is the EIF for $\psi^0_{10}$, the identified treatment effect:
\begin{equation}
\varphi_{\psi^0_{10}}=\frac{\phi^0_{1, 10}-\phi^0_{0, 10}-\psi^0_{10}\lambda_{10}}{E\left[e_{10}(X)\left\{1-\rho(X)\right\}\right]}. \label{eq: eif_tau}
\end{equation}
\end{theorem}
\noindent The Supplementary Material gives a proof of these results. (Note that $\phi^0_{1, 10}$ and $\phi^0_{0, 10}$ play a role similar to analogous functions in \citet{jiang_multiply_2022} but are not identical to them.) As $\psi^0_{10}$, $\psi^0_{1, 10}$, and $\psi^0_{10}$ are all ratios, it is instructive to note that the forms of Equations (\ref{eq: eif_mu_1}), (\ref{eq: eif_mu_0}) and (\ref{eq: eif_tau}) all have a structure similar to EIFs of ratio parameters. As stated above, $\phi^0_{1, 10}$, $\phi^0_{0, 10}$, and $\lambda_{10}$ are themselves uncentered EIFs for various components of our ratio parameters. The quantities in (\ref{eq: eif_mu_1}), (\ref{eq: eif_mu_0}), and (\ref{eq: eif_tau}) then follow from direct application of theorems describing the form of EIFs for ratio parameters as a function of the EIFs for their constituent parts. Lemma \ref{eif_fraction} of the Supplementary Material gives such a result.

\subsection{EIF-Based Estimation and Asymptotic Results}
\label{sub: eif_estimation}

The EIF in (\ref{eq: eif_tau}) has expectation zero; applying this fact, we can solve for $\psi^0_{10}$ to obtain:

\begin{equation}
\psi^0_{10}=\frac{E[\phi^0_{1, 10}-\phi^0_{0, 10}]}{E\left[\lambda_{10}\right]}. \label{eq: est_eqn}
\end{equation}
One estimator for the treatment effect $\tau^0_{10}=E[Y(1)-Y(0)|R=0, U=10]$ follows from \ref{eq: est_eqn}  by setting $\mathbb{P}_n(\hat{\varphi}_{\psi^0_{10}})=0$ and solving for $\hat{\psi}^0_{10}$:
\begin{equation}
\hat{\tau}_{\text{EIF}} = \frac{\mathbb{P}_n[\hat{\phi}^0_{1, 10}-\hat{\phi}^0_{0, 10}]}{\mathbb{P}_n\left[\hat{\lambda}_{10}\right]}, \label{eq: eif_estimator}
\end{equation}
where $\hat{\phi}^0_{1, 10}$, $\hat{\phi}^0_{0, 10}$, and $\hat{\lambda}_{10}$ are plug-in estimators of their respective population quantities.

Estimator (\ref{eq: eif_estimator}) has desirable theoretical properties, summarized in Theorems \ref{consistency_theorem} and \ref{eif_asymp_theorem}. However, estimating both $\varphi_{\psi^0_{10}}$ and its constituent nuisance functions using the same sample requires further assumptions that limit the complexity of the functions' estimators and limits. \citet{dahabreh_biometrics} and \citet{jiang_multiply_2022} do this by assuming the nuisance functions and estimates are members of a Donsker class. We propose using sample splitting to avoid such assumptions. These techniques ensure that each individual's contribution to estimating $\varphi_{\psi^0_{10}}$ depends only on nuisance function estimates constructed using other individuals' data. Our sample-splitting procedure is analogous to the one in \citet{kennedy_sharp_2020}.

Like \citet{zeng_efficient_2023}, in our asymptotic results we assume for simplicity that the nuisance functions are estimated from a separate, independent sample. In practice, and in our simulation study, we use sample splitting with more than two folds. We establish $\hat{\tau}_{EIF}$'s asymptotic properties in two theorems. The first, Theorem \ref{consistency_theorem}, focuses on consistency of $\hat{\tau}_{EIF}$ under different combinations of nuisance function misspecification. This robustness of consistency alone is sometimes termed ``weak'' robustness because it provides no guarantee of the rate at which $\hat{\tau}_{EIF}$ converges to $\psi^0_{10}$ \citep{wager_notes}. It is, however, still helpful for understanding the properties of $\hat{\tau}_{EIF}$.

\begin{theorem}[Consistency Properties of EIF-Based Estimator]
\label{consistency_theorem}
Suppose estimated nuisance functions $\hat{\pi}(x)$, $1-\hat{\pi}(x)$, $\hat{p}_a(x)$, $1-\hat{p}_a(x)$, $\hat{\rho}(x)$, and $1-\hat{\rho}(x)$ are all greater than some $\delta>0$ for all $x$ in the support of $X$ and $a=0, 1$ for all $n$. Suppose further that each nuisance function $\hat{f}(x)\in \{\hat{\pi}(x), \hat{\rho}(x), \hat{p}_a(x), \hat{\mu}_{ac}(x): a=0,1; c=0,1\}$ converges in probability to a limit $\tilde{f}(x)$ for all $x$ in the support of $X$. If at least one of the following conditions hold for our nuisance function estimators, then $\hat{\tau}_{\text{EIF}}$ is consistent for $\psi^0_{10}$:
\begin{enumerate}
\item $\hat{\pi}(x)$, $\hat{\rho}(x)$, and $\hat{p}_a(x)$ are consistent for their true values for all $x$ in the support of $X$
\item $\hat{\pi}(x)$, $\hat{\rho}(x)$, and $\hat{\mu}_{ac}(x)$ are consistent for their true values for all $x$ in the support of $X$
\item $\hat{p}_a(x)$ and $\hat{\mu}_{ac}(x)$ are consistent for their true values for all $x$ in the support of $X$.
\end{enumerate}
\end{theorem}

\noindent Thus, three distinct conditions of correct specification of nuisance parameter models result in consistency of $\hat{\tau}_{EIF}$ for $\tau^0_{10}$, so one can derive a variety of estimators of $\tau^0_{10}$ by selecting which nuisance parameter models in $\hat{\tau}_{EIF}$ to set to 0.

Theorem \ref{eif_asymp_theorem} below gives conditions under which $\hat{\tau}_{EIF}$ is asymptotically normal and converges at a $\sqrt{n}$-rate:  $\sqrt{n}$ convergence of $\hat{\tau}_{EIF}$ depends on the \textit{product} of the bias of certain nuisance parameters which themselves converge at $\sqrt{n}$ rates. Dependence on the product of biases rather than the biases alone is sometimes called ``strong'' robustness \citep{wager_notes}.

\begin{theorem}[Asymptotic Properties of EIF-Based Estimator]
\label{eif_asymp_theorem}
Suppose nuisance functions ($\hat{\pi}$, $\hat{p}_a$, $\hat{\mu}_{ac}$, $\hat{\rho}$) are estimated from a separate, independent sample. Suppose further that $\hat{\pi}(x)$, $1-\hat{\pi}(x)$, $\hat{p}_a(x)$, $1-\hat{p}_a(x)$, $\hat{\rho}(x)$, and $1-\hat{\rho}(x)$ are all greater than some $\delta>0$ for all $x$ in the support of $X$ and $a=0, 1$ and that there exists $C\in \mathbb{R}$ such that $\{|\mu_{ac}(x)|, |\hat{\mu}_{ac}(x)|\}<C$ for all $x\in \mathcal{X}$, for each $n$. Further, suppose the following consistency conditions hold for the uncentered influence functions: $||\hat{\phi}^*_{1, 10}-\phi^0_{1, 10}||=o_p(1)$, $||\hat{\phi}^*_{0, 10}-\phi^0_{0, 10}||=o_p(1)$, and $||\hat{\lambda}_{10}-\lambda_{10}||=o_p(1)$,
where $||X||=\left\{E|X|^2\right\}^{1/2}$ for a random variable $X$. These three expressions are the uncentered EIFs for various components of our ratio parameter $\psi^0_{10}$.
Finally, let
\begin{align*}
R^1_n &= ||\mu_{ac}-\hat{\mu}_{ac}||\left\{||p_a-\hat{p}_a||+||p_{1-a}-\hat{p}_{1-a}||\right\},\\
R^2_n &= ||\mu_{ac}-\hat{\mu}_{ac}||\left\{||\pi-\hat{\pi}||+||\rho-\hat{\rho}||\right\}, \text{ and }\\
R^3_n &= \left\{||p_a-\hat{p}_a||+||p_{1-a}-\hat{p}_{1-a}||\right\}\left\{||\pi-\hat{\pi}||+||\rho-\hat{\rho}||\right\}
\end{align*} for $a=0, 1$. Then
\begin{equation}
\hat{\tau}_{\text{EIF}}-\tau^0_{10}=\mathbb{P}_n\left(\varphi_{\psi^0_{10}}\right)+O_p\left(R^1_n+R^2_n+R^3_n\right)+o_p(1/\sqrt{n}).
\end{equation}
If, further, $R^1_n=o(1/\sqrt{n})$, $R^2_n=o(1/\sqrt{n})$, and $R^3_n=o(1/\sqrt{n})$, then  
\begin{equation}
\hat{\tau}_{\text{EIF}}-\tau^0_{10}=\mathbb{P}_n\left(\varphi_{\psi^0_{10}}\right)+o_p(1/\sqrt{n})
\label{eq: eif_asymp_expansion}
\end{equation}
and, therefore, the EIF-Based Estimator in (\ref{eq: eif_estimator}) is $\sqrt{n}$-consistent, asymptotically normal, and its asymptotic variance achieves the non-parametric efficiency bound.
\end{theorem}

\noindent Theorem \ref{eif_asymp_theorem} gives the robustness properties of $\hat{\tau}_{\text{EIF}}$, showing that its consistency generally and $\sqrt{n}$-convergence specifically hold under a variety of asymptotic behaviors of the nuisance functions, including misspecification. We demonstrate these properties below in simulation studies. 

The asymptotic properties of estimators relying on two datasets depend on further assumptions about their sample sizes. \citet{zhang_double_2023} give a clarifying discussion of these issues in semi-supervised inference. Their work is especially helpful because it makes clear that, for semi-supervised inference, the random variable $R$ is simply a labeling convention to discriminate between samples of one source versus another. This framing is also useful in the context of generalizability and transportability. There, too, $R$ is not a fundamental quantity that distinguishes instrumentally who \textit{could} and \textit{could not} participate in a trial; instead, it is just a mechanism to define certain distributions of (possibly unobserved) covariates as distinguished between the trial and target samples. \citet{colnet_causal_2024} also discuss this idea by distinguishing $R$ from a variable $S$ that refers to being \textit{sampled} into the trial.

In proving Theorem \ref{eif_asymp_theorem}, we made the simplifying assumption that the ratio of the trial sample size, $n_1=\sum_{i=1}^n R_i$, to the target sample size, $n_0=\sum_{i=1}^n (1-R_i)$, converges to a constant $c$ as these sizes grow to infinity. Our simulation studies explore our estimator's properties for different values of $c$.

For inference, Theorem \ref{eif_asymp_theorem} implies that the asymptotic variance of $\hat{\tau}_{\text{EIF}}$ is $E\left[\left\{\varphi_{\psi^0_{10}}\right\}^2\right]$. We can estimate this quantity using the sample average of the square of the estimated centered EIF. An alternative approach---applicable to $\hat{\tau}_{\text{EIF}}$, $\hat{\tau}_{\text{IPW}}$, and $\hat{\tau}_{\text{OM}}$---is a non-parametric bootstrap.

\section{Simulation Study} \label{sec: simulation}

The simulation study illustrates the asymptotic properties in Theorem \ref{eif_asymp_theorem} and investigates efficiency trade-offs when using $\hat{\tau}_{\text{EIF}}$ in relatively small samples. To facilitate comparison with the motivating work of \citet{jiang_multiply_2022}, we build on their simulation setup with new features specific to our case.

\subsection{Simulation setup}

Table \ref{table: sim_settings} summarizes the main simulation settings, where TP ($\pi$) is the treatment probability model, OM ($\mu_{ac}$) is the outcome model, PP ($\rho$) is the participation probability model, and PS ($p_{a}$) is the principal score model.

\begin{table}[ht]
\centering
\begin{tabular}{c c}
Parameter & Possible Settings\\ \midrule \\
Ratio of Trial Sample Size to Total Sample Size & (1/21, 1/3, 1/2)\\ \\
Trial Sample Size & (500, 1000, 2000, 50000)\\ \\
\multirow{5}{10em}{Model Misspecification} & TP, OM, PP, and PS Correctly Specified \\
& TP, PP, and PS Correctly Specified\\
& PS and OM Correctly Specified\\
& TP, OM, and PP Correctly Specified\\
& No Models Correctly Specified\\
\end{tabular}
\caption{Parameters Varied in the Simulation Study}
\label{table: sim_settings}
\end{table}

For each of the 12 possible total sample sizes $n_{total}$, we simulate $n_{total}$ draws from the joint distribution of covariates. Each draw includes five independent covariates, $X_1$, \dots, $X_5$, where $X_1, \dots, X_4$ $~\sim~\text{Unif}(-2, 2)$ and $X_5~\sim~\text{Bernoulli}(0.55)$. We follow \citet{jiang_multiply_2022} in constructing two sets of derived variables to define model misspecification:  $\tilde{C}_1, \dots, \tilde{C}_4$ where $\tilde{C}_i=(X_i^2-1)/\sqrt{2}$ and $C_1, \dots, C_4$ where $C_i=X_i-0.25$. 

As noted after Theorem \ref{eif_asymp_theorem}, the asymptotic results hold when the relative sample sizes of the trial and target samples converge to a constant. To simulate data in a way that reflects this assumption, we fixed the ratio of the trial to total sample sizes at one of $1/21, 1/3$, or $1/2$. After simulating covariates marginally as described above, we assign them to either the treatment or target sample according to the model $P(R=1|X)~=~\text{expit}\left(\beta_{0}+\tilde{C}_1+\tilde{C}_2+\tilde{C}_3\right)$, where $\beta_0$ is chosen to reflect the desired marginal proportion using the method outlined in \citet{balancing_intercept}.

Similar to \citet{jiang_multiply_2022}, we specified nuisance functions as follows:
\begin{align*}
p_a(X)&=\text{expit}\left((2/5)\times(2a-(a-1)\times(\tilde{C}_1+\tilde{C}_2+\tilde{C}_3+\tilde{C}_4)\right),\\
\pi(X) &= \text{expit}\left((2/5)\times (\tilde{C}_1+\tilde{C}_2+\tilde{C}_3+\tilde{C}_4)\right), \text{ and}\\
\mu_{ac}(X) &= ((1+a+c)/4)(\tilde{C}_1+\tilde{C}_2+\tilde{C}_3+\tilde{C}_4+X_5).
\end{align*}
We induced model misspecification in our simulation by erroneously substituting $C_i$ for $\tilde{C}_i$ as applicable when fitting each regression function.

\subsection{Results}

We compared the performance of three estimators: an OM-based estimator motivated by Theorem \ref{om_identification_thm}, an IPW-based estimator motivated by Theorem \ref{ipw_identification_thm}, and, finally, the EIF-based estimator  in Equation (\ref{eq: eif_estimator}). Here, we use 10-fold sample splitting to construct the EIF-based estimator. For both the EIF- and IPW-based estimator, we normalize weights to give H\'ajek-style estimators. See, e.g., \citet{chattopadhyay_balancing_2020} for more information on normalization. Results presented in this section and in the Supplementary Material are averaged over 1000 simulation runs for each simulation setting. The plots below show each estimator's bias for different degrees of misspecification. In Figure \ref{fig:sim_study_res}, we present results only for the simulation setting where the trial to overall sample size ratio is 1/21.

\begin{figure}[ht]
    \includegraphics[width=\textwidth]{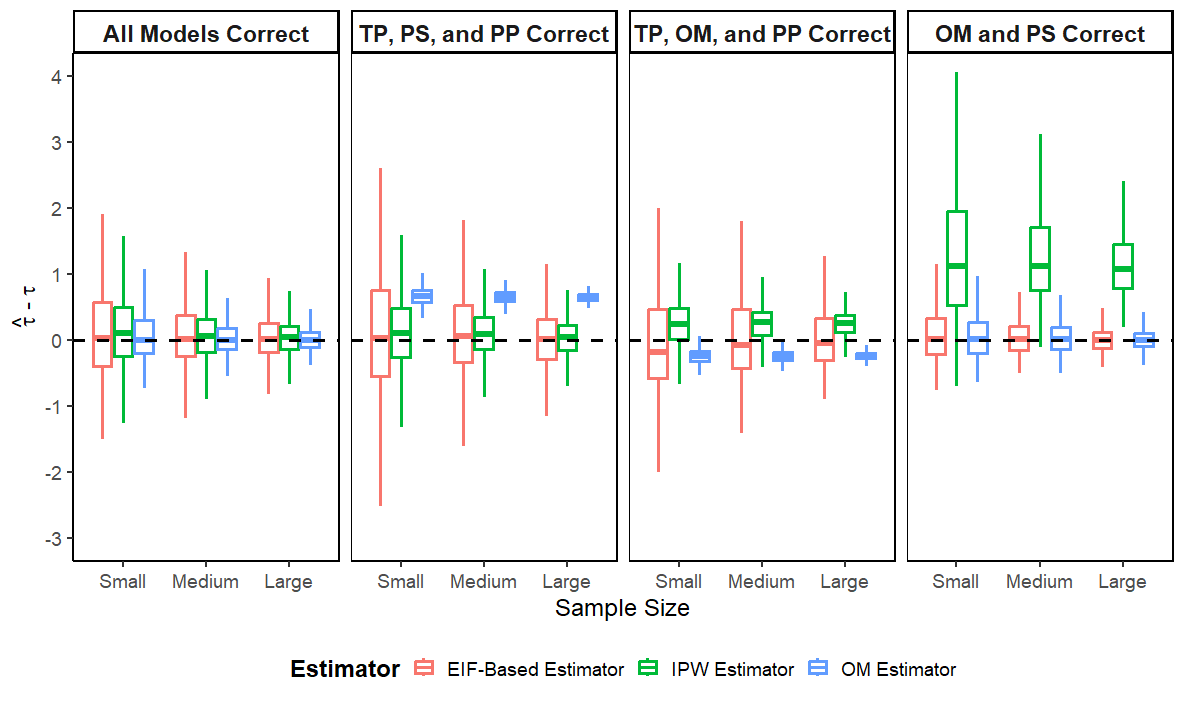}
    \caption{Selected bias results from simulation study. Each panel summarizes the error in the EIF-based, IPW, and OM estimators across 1000 simulation runs under different misspecification regimes.}
    \label{fig:sim_study_res}
\end{figure}

In the following, ``small" refers to simulations where the trial sample size is 500, ``medium" where the trial sample size is 1000, and ``large" where the trial sample size is 2000. Since the ratio of the trial to total sample size is 1/21 for the results in Figure \ref{fig:sim_study_res}, the respective target sample sizes are 10,000, 20,000, and 40,000. This is meant to mimic a scenario in which a very large EHR-derived database is coupled with data from a relatively large trial for the purpose of transportability. 

The key takeaway from Figure \ref{fig:sim_study_res} is that the EIF-based estimator is consistent in all model misspecification scenarios. As expected given their respective theorems, the OM- and IPW-based estimators are asymptotically unbiased in certain scenarios but not all. Supplementary Section \ref{sub: sim_res} presents full bias and root mean square error (RMSE) results. Our findings echo those in Kang and Schafer's 2007 work on so-called doubly-robust estimators \citep{kang_demystifying_2007}. At small to moderate sample sizes, the high variability of the many weights in the EIF-based estimators dramatically increases RMSE relative to a simpler estimator with fewer estimated weights. This variability diminishes at larger sample sizes, consistent with Theorem \ref{eif_asymp_theorem}'s asymptotic results. We note, however, that in cases where the IPW and OM estimators are consistent, in our simulation setting they are correctly specified parametric models, which may have lower variance than a {nonparametrically} efficient estimator.

Supplementary Section \ref{sub: sim_res} gives results for coverage of 95\% confidence intervals based on Theorem \ref{eif_asymp_theorem}. The standard error used to construct such intervals is computed using the empirical variance of the estimated EIF. As shown, intervals constructed  this way have relatively poor coverage, especially when the treatment probability and participation probability models are misspecifed. We suspect the poor coverage results from practical positivity violations of the kind explored in \citet{robust_var_tmle}. We hope that future work will develop more targeted variance estimators.

\section{Analysis of Hotspotting Data}
\label{sec: hotspotting_analysis}

We illustrate our methods by transporting complier average treatment effects from participants in the ``Health Care Hotspotting" RCT to patients in a Midwestern U.S. academic health center. As described earlier, the Hotspotting RCT randomly assigned ``superutilizer" patients with high healthcare needs to either standard of care or targeted, intensive follow-up after an inpatient admission. While the original trial resulted in a null finding, a recent secondary analysis among ``high engagers" found statistically significant beneficial intervention effects \citep{yang_hospital_2023}. Our analysis aims to transport such ``high engager" effects from the hotspotting RCT population to patients in a separate health system.

\subsection{Estimand Construction for the Hotspotting Intervention}

Our methods can target a variety of treatment effects at the intersection of a target population and well-defined principal strata. The running example throughout this work has been treatment effects among ``compliers," i.e., patients who would take their assigned treatment in both the control and intervention arms. However, the hotspotting RCT secondary analysis focused only on compliance behavior in the intervention arm, as patients assigned to control were unable to participate in hotspotting. We follow this approach and define our estimand of interest as the average treatment effect among members of the target population who would be high-engagers in the intervention arm. We use ``engagement" in lieu of ``compliance" both to distinguish the hotspotting trial from traditional pharmaceutical RCTs and to more accurately describe the treatment received in complex, community-based healthcare interventions. 

The relevant estimand is therefore similar to that in \citet{qu_general_2020} for effects among patients who adhere to active intervention alone. Such an estimand conditions only on $C(1)=1$ and averages over different adherence behavior in the control group. We assume in the hotspotting data that ``superutilizers" assigned to control have no access to the active intervention. This assumption has been termed ``strong" monotonicity or one-sided noncompliance (\citet{Imbens_Rubin_2015, jiang_multiply_2022}). Under strong monotonocity, conditioning on $C(1)=1$ is equivalent to conditioning on both $C(1)=1$ and $C(0)=0$. That is,
\begin{equation}
E[Y(1)-Y(0)|C(1)=1] = E[Y(1)-Y(0)|C(1)=1, C(0)=0]. \label{eq: ate_equiv}
\end{equation}

This implies in particular that the observed stratum of participants with $A=1$ and $C=1$ coincide with the unobserved strata of compliers, given that there is no variability in those participants' adherence behavior under counterfactual assignment to control. Therefore, while we define our estimand for the hotspotting analysis as
\begin{equation*}
\tau^0_{1} = E[Y(1)-Y(0)|R=0, C(1)=1]
\end{equation*}
we know by (\ref{eq: ate_equiv}) that $\tau^0_{1}=\tau^0_{10}$ defined in (\ref{eq: target_estimand_compliance}). The same identification formulas applicable to $\tau^0_{10}$ are thus applicable to $\tau^0_{1}$ with the simplification that $
e_{10}(X) = p_{1}(X)$.
The primary outcome in the Hotspotting RCT was hospital readmission 180 days following an index admission. The effect defined in (\ref{eq: ate_equiv}) therefore corresponds to a risk difference in percentage points.

\subsection{Identification Assumptions for Hotspotting Data}

Under one-sided noncompliance, some groups defined by observed $A$ and $C$ are no longer mixtures of latent strata. We summarize this in the table below, adapted from \citet{jiang_multiply_2022}:

\begin{table}[h!]
\centering
\begin{tabular}{c|c|c}
 & $C=0$ & $C=1$\\
    \hline
    $A=0$ & $C(0)=0, C(1)\in (1, 0)$ & N/A\\
    \hline
     $A=1$ & $C(1)=0, C(0)=0 $ & $C(1)=1, C(0)=0$
\end{tabular}
\caption{Relationship between potential and observed compliance under strong monotonicity.}
\label{table: one_sided_mixtures}
\end{table}
\noindent The absence of the principal stratum $U=11$ obviates the need for principal ignorability assumptions on mean potential outcomes under assignment to treatment. In fact, several identification assumptions needed more generally can be significantly weakened in this setting. The weaker variants of our original Assumptions \ref{consistency}-\ref{mean_exch} are:

\begin{weakassumption}{monotonicity}
$C(0)=0$
\end{weakassumption}

\begin{weakassumption}{princignorability}
$E[Y(0)|U=00, R=1, X]=E[Y(0)|U=10, R=1, X]$, and
\end{weakassumption}

\begin{weakassumption}{stratexch}
$R \indep C(1)|X$.
\end{weakassumption}

Assumptions \ref{monotonicity}$^*$-\ref{stratexch}$^*$ replace Assumptions \ref{monotonicity}-\ref{stratexch} and these weaker assumptions allow alternative identification results specific to one-sided noncompliance. Lemma \ref{one_sided_lemma} is one such result. Other identification results can be proven similarly. The EIFs for these identified parameters also follow from our more general results by letting $p_0(X)=0$ wherever that nuisance function appears.

\subsection{Estimates of Generalized Principal Causal Effects for Hotspotting Data}
\label{hotspotting_estimates_section}

This section presents analysis results for three estimation procedures, which target different estimands. First, we apply our EIF-based estimator to estimate the effect of hotspotting on 180-day readmission rates among engagers in the target population. As a baseline, we also apply the EIF-based estimator of \citet{jiang_multiply_2022} to estimate that same effect in the trial population. Finally, we use the distillation method as applied in \citet{yang_hospital_2023}. The distillation estimator does not correspond to a single causal quantity but can be roughly interpreted as ``the effect of hotspotting among those most likely to engage." To avoid problems induced by a small number of individuals with extreme participation probabilities, the combined trial and target dataset was trimmed to remove cases with $P(R=1|X)$ estimates below 0.10 and above 0.90 \citep{trimming_article}. Section \ref{descriptive_stats_hotspotting} provides illustrations of overlap by participation probability and a descriptive comparison of the trial and target samples.

Nuisance functions in the EIF-based methods were estimated using the SuperLearner algorithm applied to penalized logistic regression and random forest models \citep{super_learner}. When applying distillation, we used the same gradient-boosting machine algorithm as \citet{yang_hospital_2023} to estimate principal scores and construct the distilled sample of high-engagers. However, the EIF-based estimators---both transported and untransported---target average risk differences, whereas distillation in \citet{yang_hospital_2023} quantifies the treatment effect of hotspotting as a conditional odds ratio from a logistic regression model. We instead applied g-computation with logistic regression to estimate marginal risk differences in the distilled population and used the delta method to obtain confidence intervals. This was implemented using the \texttt{marginaleffects} package in R \citep{marginal_effects_preprint}. Confidence intervals from the EIF-based methods were also based on asymptotic approximations summarized in Theorem \ref{eif_asymp_theorem} of this work and Theorem 4 of \citet{jiang_multiply_2022}. 

\begin{table}[ht]
\centering
\resizebox{\textwidth}{!}{%
\begin{tabular}{lrrr}
  \toprule
 \textbf{Population} & \textbf{180-Day TE} & \textbf{90-Day TE} & \textbf{30-Day TE}\\
 \midrule 
  Engagers in the Target Population & 8.73 (-3.45, 20.91) & 7.68 (-4.92, 20.29) & -10.50 (-21.26, 0.25) \\ \\
  Engagers in the Trial Population & 6.94 (0.48, 13.40) & 4.93 (-1.78, 11.64) & -4.73 (-10.60, 1.13) \\ \\
  Distilled Sample of the Trial Population & 6.73 (-0.73, 14.18) & 4.98 (-2.72, 12.69) & 0.31 (-6.56, 7.18)\\ \bottomrule
\end{tabular}%
}
\caption{Estimates of the treatment effect (TE) of hotspotting on readmission and corresponding 95\% confidence intervals. Estimates are risk differences with positive values indicating an increase in the probability of readmission.}
\label{tab:hotspotting_results}
\end{table}

The results are in Table \ref{tab:hotspotting_results}. While most of the estimated risk differences are positive---indicating a negative effect of hotspotting among engagers---few are estimated with sufficient precision to reject the null hypothesis of no effect. We suspect that differences between our application of distillation and the results of \citet{yang_hospital_2023} are driven by differences in our definition of engagement, which cannot be mimicked exactly in the public-use dataset \citet{hotspotting_dataset} used in this work. Another important difference is our targeting of the marginal effect on the risk difference scale versus the conditional odds ratio as in \citet{yang_hospital_2023}. Such marginal effects are expected to differ from conditional effects \citep{conditional_vs_marginal}. We do note that all of the estimated effects show increases in benefit (or decrease in harm) for earlier-assessed timepoints among engagers in the trial and target populations. This echoes results shown in Figure 2 of  \citet{yang_hospital_2023}.

Another important challenge in transportability problems is the potential for data mismatch between the trial and target population samples. The methods presented here require all covariates $X$ be included in both data sources. This is an important limitation for this analysis, as many of the rich demographic characteristics in the hotspotting dataset are unavailable in the EHR data from the target population. These unavailable variables include many of the covariates found to be most strongly related to engagement in \citet{yang_hospital_2023}, including arrest history, family and social support, and housing status. Various methods have been developed to address such mismatch in transportability, including \citet{steingrimsson_missing}, \citet{rudolph_efficient_2024}, and \citet{zeng_efficient_2023}. The latter two in particular may help to improve precision by allowing nuisance function estimation using all covariates available in the trial data. An important avenue for future research will be applying insights from these works to our setting.

\section{Discussion}
\label{sec: discussion}

This paper developed a causal framework for transporting treatment effects from randomized trials to target populations when those trials are subject to non-adherence. The principal stratification framework of \citet{frangakis_rubin_2002} allows us to specify a well-defined population---the intersection of latent compliers with the target population---amenable to generalization and transportation. This approach also makes it possible to transport results to populations with latent adherence patterns differing from those who would comply under both treatment and placebo. For instance, \citet{qu_general_2020} define causal estimands among those who would adhere under assignment to experimental treatment, marginalizing over varying counterfactual adherence under assignment to placebo control. 

We contrast transportability of principal stratum-specific treatment effects with transportability of the complier average causal effect (CACE) under instrumental variable (IV) assumptions \citep{rudolph_robust_2017}. The latter requires the exclusion restriction (ER) assumption common to IV settings which, as discussed by \citet{jiang_multiply_2022}, can be replaced by the principal ignorability conditions in Assumption \ref{princignorability}. Whether the ER or principal ignorability assumptions are more plausible is situation-dependent. Thus, our work adds to \citet{rudolph_robust_2017} an alternative framework that investigators can use in  transporting complier average effects depending on which assumption fits their case better. Moreover, the assumptions and identification strategies used here are relevant to causal estimands in populations defined by principal strata corresponding to any intercurrent event, and need not be directly related to adherence. 

This flexibility in both estimands and assumptions can be extended to other data types, e.g., a set of clinical trials as in meta-analysis. For example, \citet{zhou_cace_meta} provide methods to estimate the CACE in a meta-analysis under the exclusion restriction. An important challenge to such methods is the extent to which the population of compliers (and, indeed, participants) varies across trials. Future work may extend our approach---in conjunction with recent advances in causally-interpretable meta-analysis by \citet{dahabreh_biometrics}---to address these challenges by shifting inferential focus to a single population of compliers in the target population.

While theoretically grounded estimates derived from our parameter's EIF may be unbiased and efficient asymptotically, high variability driven by its component estimators may induce poor behavior in small samples. Future work could tackle these estimation problems, for example, we might avoid nuisance function estimation by using a balancing approach, where trial participant outcomes are weighted in terms of their similarity to high-probability compliers in the target population \citep{chattopadhyay_balancing_2020}. Alternatively, approaches based on targeted maximum likelihood estimation have asymptotic properties similar to the standard EIF-based estimator's while potentially improving small-sample performance \citep{tmle_book}.  Our work defines causal quantities and gives identification assumptions necessary to connect those quantities to observed data; future research will build on the estimators presented here to make these methods yet more relevant to practical, data-driven challenges faced by clinicians and policy-makers.

\section{Acknowledgements}

The authors would like to thank Dawn Wiest and Aaron Truchil of the Camden Coalition for helpful discussions regarding the hotspotting data. The analyses and conclusions in this paper do not necessarily reflect the views of the Camden Coalition or its staff. Compilation of the EHR data in Section \ref{sec: hotspotting_analysis} was supported by the University of Minnesota’s NIH Clinical and Translational Science Award: UM1TR004405; we especially thank Talia Wiggen of the University of Minnesota Clinical and Translational Science Institute's Best Practices Integrated Informatics Core for her contributions to that effort.

\ifSubfilesClassLoaded{
  \bibliography{bibliography}
}{}

\end{document}
\bibliography{bibliography}
\newpage

\setcounter{section}{0}
\renewcommand*{\theHsection}{S.\the\value{section}}
\renewcommand{\thesection}{S.\arabic{section}}

\setcounter{page}{1}
\renewcommand*{\thepage}{S\arabic{page}}

\maketitle

\section{Supplementary Material Organization}

The proofs in this section are relatively lengthy, and we first provide a brief roadmap:

\begin{enumerate}
\item Section \ref{supp_identification} gives proofs for Theorems \ref{ident_theorem}, \ref{ipw_identification_thm}, and \ref{om_identification_thm}. These apply various assumptions to express our causal parameter $\tau^0_{10}$ as a function of the observed data.
\item Section \ref{supp_eif_derivation} derives the EIF for the identified parameter $\psi^0_{10}$ as expressed in (\ref{eq: identification_res}). This quantity is important because it defines the minimum variance achievable among regular, asymptotically linear estimators for $\psi^0_{10}$.
\item Section \ref{consistency_proofs} illustrates the consistency results for our EIF-based estimator of $\psi^0_{10}$ summarized in Theorem \ref{consistency_theorem}.
\item Section \ref{ral_proofs} characterizes the asymptotic properties laid out in Theorem \ref{eif_asymp_theorem}. In particular, we show that $\hat{\tau}_{EIF}$'s influence function coincides with the efficient influence function derived in Section \ref{supp_eif_derivation}.
\item Section \ref{one_sided_identification} identifies generalized treatment effects among compliers under one-sided noncompliance. This setting is relevant to our motivating Hotspotting RCT data analysis.
\item Section \ref{sub: sim_res} gives more detailed results from our simulation experiments.
\end{enumerate}

\newpage

\section{Identification}
\label{supp_identification}

\subsection{Proof of Theorem \ref{ident_theorem}}
\label{plug_in_identification_proof}

Here, we want to show that
\begin{equation*}
\tau^0_{10} = \frac{E\left\{\left[p_1(X)-p_0(X)\right](1-\rho(X))\left[\mu_{11}(X)-\mu_{00}(X)\right]\right\}}{E\left\{\left[p_1(X)-p_0(X)\right](1-\rho(X))\right\}}
\end{equation*}
To start, we prove several lemmas, each of which correspond to identification of separate components of $\tau^0_{10}$. These proofs draw on results from \citet{jiang_multiply_2022} and our own identification assumptions:
\begin{lemma}
\label{denom_identification}
Under our identification assumptions, we identify the marginal probability of compliance in the target population as:
\begin{equation*}
P(U=10, R=0)=E[(p_1(X)-p_0(X))(1-\rho(X))].
\end{equation*}
\end{lemma}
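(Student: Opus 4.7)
The plan is to decompose $P(U=10, R=0)$ by iterated expectation over $X$, then exploit a transportability (exchangeability) assumption on the principal-stratum distribution to shift from the target population to the source population, and finally apply the standard monotonicity-plus-ignorability identification of the compliance type from observed treatment-receipt probabilities.

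First, I would write
\begin{equation*}
P(U=10, R=0) = E\bigl[\mathbf{1}\{U=10\}\, \mathbf{1}\{R=0\}\bigr] = E\bigl[P(U=10 \mid X, R=0)\, P(R=0 \mid X)\bigr],
\end{equation*}
and then note that the second factor is $1 - \rho(X)$ by definition of $\rho(X) = P(R=1 \mid X)$. This reduces the problem to identifying $P(U=10 \mid X, R=0)$ in terms of source-population distributions, since the target population ($R=0$) has no observed treatment-receipt data.

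Second, I would invoke the transportability assumption the paper makes on principal strata, which should give exchangeability of the conditional stratum distribution across populations, namely $P(U=10 \mid X, R=0) = P(U=10 \mid X, R=1)$. Within the source population, I would then use monotonicity (no defiers) together with conditional ignorability of treatment assignment $A \perp (S(0), S(1)) \mid X, R=1$ to obtain
\begin{equation*}
P(S=1 \mid A=1, X, R=1) - P(S=1 \mid A=0, X, R=1) = P(U=10 \mid X, R=1),
\end{equation*}
since under monotonicity the left-hand side equals (compliers $+$ always-takers) $-$ (always-takers). By the definitions of $p_1(X)$ and $p_0(X)$, the left-hand side is $p_1(X) - p_0(X)$, and combining the pieces yields the stated identity.

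The main obstacle I anticipate is not algebraic but bookkeeping: verifying that the exact form of the transportability assumption stated in the paper does license the substitution $P(U=10 \mid X, R=0) = P(U=10 \mid X, R=1)$ at the level of principal strata (rather than only at the level of marginal potential-outcome distributions), and checking that the paper's ignorability and SUTVA-type assumptions are phrased conditionally on $R=1$ so that the monotonicity step is justified for the source population specifically. A minor secondary point is ensuring positivity, $\rho(X) < 1$ on the relevant support, so that the target population contributes mass in the integrand; this should follow from the paper's overlap assumption and need not be re-derived here.
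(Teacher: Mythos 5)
Your proposal is correct and follows essentially the same route as the paper's proof: iterated expectation over $X$, use of the strata--selection exchangeability assumption to factor (equivalently, transport) the conditional stratum probability, and identification of $P(U=10\mid X)$ as $p_1(X)-p_0(X)$ via monotonicity and treatment ignorability. The only cosmetic difference is that the paper phrases the exchangeability step as a factorization $E\{I(U=10,R=0)\mid X\}=E\{I(U=10)\mid X\}E\{I(R=0)\mid X\}$ rather than as the substitution $P(U=10\mid X,R=0)=P(U=10\mid X,R=1)$, which is equivalent under the stated conditional independence.
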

\begin{proof}[Proof of Lemma \ref{denom_identification}]
We have that
\begin{align*}
P(U=10, R=0) &= E[I(U=10, R=0)]\\
&= E[E\{I(U=10, R=0)|X\}]\\
&= E[E\{I(U=10)|X\}E\{I(R=0)|X\}] & \text{by Assumption \ref{stratexch}}\\
&= E[P(U=10|X)(1-\rho(X)]\\
&= E[(p_1(X)-p_0(X))(1-\rho(X))]. & \text{by Assumptions \ref{treatment_ignorability} and \ref{monotonicity}}
\end{align*}
\end{proof}
\begin{lemma}
\label{te_1_identification}
Under our identification and positivity assumptions, we identify the outcome under assignment to treatment among compliers in the target population as:
\begin{equation*}
E[Y(1)|U=10, R=0] = \frac{E[\mu_{11}(X)(p_1(X)-p_0(X))(1-\rho(X))]}{E[(p_1(X)-p_0(X))(1-\rho(X))]}
\end{equation*}
\end{lemma}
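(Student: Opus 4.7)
The plan is to mirror the structure of Lemma~\ref{denom_identification} and decompose the target conditional expectation as a ratio,
\[
E[Y(1)\mid U=10, R=0] \;=\; \frac{E[Y(1)\,I(U=10, R=0)]}{P(U=10, R=0)},
\]
so that the denominator is immediately handled by Lemma~\ref{denom_identification} and only the numerator requires new work.

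For the numerator, I would first apply iterated expectation conditional on $X$ and invoke Assumption~\ref{stratexch} to separate the sampling indicator $R$ from the pair $(Y(1), U)$, giving
\[
E[Y(1)\,I(U=10, R=0)] \;=\; E\bigl[\,E\{Y(1)\,I(U=10)\mid X\}\,(1-\rho(X))\bigr].
\]
I would then split $E\{Y(1)\,I(U=10)\mid X\} = E\{Y(1)\mid U=10, X\}\,P(U=10\mid X)$, use Assumptions~\ref{treatment_ignorability} and~\ref{monotonicity} to equate $P(U=10\mid X)$ with $p_1(X)-p_0(X)$ exactly as in Lemma~\ref{denom_identification}, and finally invoke treatment ignorability (together with the relevant positivity assumption so the conditioning events have positive probability) to equate $E\{Y(1)\mid U=10, X\}$ with the observed-data regression $\mu_{11}(X)$.

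The step I expect to be the main obstacle is this last identification of the stratum-specific treated mean. In general, $E\{Y\mid A=1, S=1, X, R=1\}$ is a mixture over more than one principal stratum, so matching it to the pure complier mean requires the paper's additional structural assumption (e.g., principal ignorability equating the complier and always-taker treated means within $X$, or a one-sided noncompliance setting that eliminates always-takers altogether so that the $A=1, S=1$ cell is populated only by compliers). Once that identification step is in place, assembling the three pieces yields
\[
E[Y(1)\,I(U=10, R=0)] \;=\; E\bigl[\mu_{11}(X)\,(p_1(X)-p_0(X))\,(1-\rho(X))\bigr],
\]
and dividing by the expression from Lemma~\ref{denom_identification} produces the stated formula. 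The rest of the lemma is then bookkeeping: swap the order of the conditional expectation and the multiplicative factors, and verify that every conditional expectation in the chain is well-defined thanks to positivity.
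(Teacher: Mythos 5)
Your overall route is essentially the paper's, reordered: the paper first reduces the inner conditional mean to $\mu_{11}(X)$ and only then writes $E[\mu_{11}(X)\mid U=10, R=0]$ as a ratio and reuses the argument of Lemma~\ref{denom_identification}, whereas you do the ratio decomposition up front. You also correctly anticipate the main structural obstacle — that $E[Y\mid X, C=1, A=1, R=1]$ is a mixture over compliers and always-takers, resolved by principal ignorability (Assumption~\ref{princignorability}) together with treatment ignorability and consistency — so that part of your plan matches the paper's chain of equalities.

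There is, however, one genuine gap: you attribute the factorization
\[
E\bigl[Y(1)\,I(U=10)\,I(R=0)\mid X\bigr] \;=\; E\bigl[Y(1)\,I(U=10)\mid X\bigr]\,\bigl(1-\rho(X)\bigr)
\]
entirely to Assumption~\ref{stratexch}. But \ref{stratexch} is the stratum-level exchangeability used in Lemma~\ref{denom_identification} to separate $I(U=10)$ from $I(R=0)$ given $X$; it says nothing about the potential outcome $Y(1)$. To pull $R$ out of an expression involving $Y(1)$ you additionally need the mean exchangeability assumption (Assumption~\ref{mean_exch}), i.e.\ $E\{Y(1)\mid X, U=10, R=0\}=E\{Y(1)\mid X, U=10, R=1\}$, which is precisely the step the paper performs second in its chain and which carries the entire ``transportability'' content of the result: it is what licenses using the trial regression $\mu_{11}(X)$ (a functional of the $R=1$ data) to describe compliers in the $R=0$ population. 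The same omission recurs at the end of your argument, where you equate the population-level $E\{Y(1)\mid U=10, X\}$ with the trial-only quantity $\mu_{11}(X)$. Adding an explicit invocation of Assumption~\ref{mean_exch} at the factorization step (or, equivalently, before passing to the $R=1$ stratum) closes the gap; the rest of the proposal then goes through as you describe.
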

\begin{proof}[Proof of Lemma \ref{te_1_identification}]
We then apply Lemma \ref{denom_identification} to identify $E[Y(1)|U=10, R=0]$, the expected potential outcome under assignment to treatment among compliers in the target population. Again, many of the following steps draw directly from proofs given in \citet{jiang_multiply_2022}.
\begin{align*}
& E[Y(1)|U=10, R=0] \\ &= E[E\{Y(1)|X, U=10, R=0\}|U=10, R=0]\\
&= E[E\{Y(1)|X, U=10, R=1\}|U=10, R=0] & \text{by Assumption \ref{mean_exch}}\\
&= E[E\{Y(1)|X, U=10 \text{ or } U=11, R=1\}|U=10 , R=0] & \text{by Assumption \ref{princignorability}}\\
&= E[E\{Y(1)|X, A=1, U=10 \text{ or } U=11, R=1\}|U=10 , R=0] & \text{by Assumption \ref{treatment_ignorability}}\\
&= E[E\{Y(1)|X, C=1, A=1, R=1\}|U=10, R=0] & \text{by Assumption \ref{consistency}}\\
&= E[E\{Y|X, C=1, A=1, R=1\}|U=10, R=0] & \text{by Assumption \ref{consistency}}\\
&= E[\mu_{11}(X)|U=10, R=0] & \text{by definition}\\
&= \frac{E[I(U=10, R=0)\mu_{11}(X)]}{P(U=10, R=0)}\\
&= \frac{E[I(U=10, R=0)\mu_{11}(X)]}{E[(p_1(X)-p_0(X))(1-\rho(X))]} & \text{by Lemma \ref{denom_identification}}\\
&= \frac{E[\mu_{11}(X)E\{I(U=10, R=0)|X\}}{E[(p_1(X)-p_0(X))(1-\rho(X))]}\\
&= \frac{E[\mu_{11}(X)(p_1(X)-p_0(X))(1-\rho(X))]}{E[(p_1(X)-p_0(X))(1-\rho(X))]},
\end{align*}
where the last equality follows from identical reasoning as that given in Lemma \ref{denom_identification}.
\end{proof}
\begin{lemma}
\label{te_0_identification}
Under our identification and positivity assumptions, we identify the outcome under assignment to control among compliers in the target population as:
\begin{equation*}
E[Y(0)|U=10, R=0] = \frac{E[\mu_{00}(X)(p_1(X)-p_0(X))(1-\rho(X))]}{E[(p_1(X)-p_0(X))(1-\rho(X))]}
\end{equation*}
\end{lemma}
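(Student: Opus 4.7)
The plan is to prove this lemma in direct parallel with Lemma \ref{te_1_identification}, replacing the treatment potential outcome $Y(1)$ with $Y(0)$ and adjusting the principal stratum expansion accordingly. The mechanics of the argument (iterated expectations, switching from the $R=0$ to $R=1$ population, enlarging the principal stratum, invoking consistency, and applying Bayes' rule together with Lemma \ref{denom_identification}) are entirely analogous; only the specific stratum appearing at the principal ignorability step changes.

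First I would condition on $X$ via iterated expectations to write $E[Y(0)|U=10, R=0] = E[\,E\{Y(0)|X, U=10, R=0\} \mid U=10, R=0\,]$. Then I would apply mean exchangeability (Assumption \ref{mean_exch}) to replace $R=0$ with $R=1$ inside the inner expectation, moving to the source population where outcomes are observed. The key substantive step is the next one: rather than expanding the stratum from compliers to $\{U=10, U=11\}$ as in Lemma \ref{te_1_identification} (the set relevant under $A=1$), I would invoke principal ignorability (Assumption \ref{princignorability}) in the direction that equates the conditional mean of $Y(0)$ across strata sharing the same observed behavior under control, namely $\{U=10, U=00\}$.

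From there the proof continues mechanically. Treatment ignorability (Assumption \ref{treatment_ignorability}) lets me add $A=0$ to the conditioning set, and within $\{U=10, U=00\}$ the event $A=0$ coincides with the observed received-treatment event $C=0$. Consistency (Assumption \ref{consistency}) then replaces $Y(0)$ with $Y$, so the inner expectation reduces to $E[Y \mid X, C=0, A=0, R=1] = \mu_{00}(X)$ by definition. Finally, the outer conditioning on $\{U=10, R=0\}$ can be rewritten via Bayes' rule as $E[I(U=10, R=0)\mu_{00}(X)]/P(U=10, R=0)$, and one more application of iterated expectations together with Lemma \ref{denom_identification} yields the claimed expression.

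I do not anticipate substantial obstacles, since every step has a direct counterpart in the proof of Lemma \ref{te_1_identification}. The only point requiring care is that the invocation of principal ignorability for $Y(0)$ must equate compliers with never-takers (rather than with always-takers, as in the $Y(1)$ case), so that the resulting regression $\mu_{00}(X)$ is well-defined from the observed data. This relies on the assumption being stated symmetrically in $Y(0)$ and $Y(1)$, which is standard in this literature and (I assume) built into the formulation of Assumption \ref{princignorability} in the main text.
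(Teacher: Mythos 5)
Your proposal is correct and matches the paper's intent exactly: the paper's own proof of this lemma is a one-line remark that the reasoning is almost identical to that of Lemma \ref{te_1_identification}, and the chain you describe (iterated expectations, mean exchangeability to move from $R=0$ to $R=1$, principal ignorability to enlarge the stratum to $\{U=10, U=00\}$, treatment ignorability to insert $A=0$, consistency, then Bayes' rule with Lemma \ref{denom_identification}) is precisely the analogous argument with the correct substitution of never-takers for always-takers. No gaps.
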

\begin{proof}[Proof of Lemma \ref{te_0_identification}]
The reasoning in this case is almost identical as that of Lemma \ref{te_1_identification}.
\end{proof}

\begin{proof}[Proof of Theorem \ref{ident_theorem}]
That 
\begin{equation*}
\tau^0_{10} = \frac{E\left\{\left[p_1(X)-p_0(X)\right](1-\rho(X))\left[\mu_{11}(X)-\mu_{00}(X)\right]\right\}}{E\left\{\left[p_1(X)-p_0(X)\right](1-\rho(X))\right\}}
\end{equation*}
follows immediately from Lemmas \ref{denom_identification}, \ref{te_1_identification}, and \ref{te_0_identification}. We have:
\begin{align*}
\tau^0_{10} &= E[Y(1)-Y(0)|R=0, U=10]\\
&= E[Y(1)|R=0, U=10]-E[Y(0)|R=0, U=10]\\
&= \frac{E[\mu_{11}(X)(p_1(X)-p_0(X))(1-\rho(X))]}{E[(p_1(X)-p_0(X))(1-\rho(X))]}-\frac{E[\mu_{00}(X)(p_1(X)-p_0(X))(1-\rho(X))]}{E[(p_1(X)-p_0(X))(1-\rho(X))]}\\
&= \frac{E\left\{\left[p_1(X)-p_0(X)\right](1-\rho(X))\left[\mu_{11}(X)-\mu_{00}(X)\right]\right\}}{E\left\{\left[p_1(X)-p_0(X)\right](1-\rho(X))\right\}}
\end{align*}
where the penultimate equality follows from the above Lemmas.
\end{proof}

\subsection{Proof of Theorems \ref{ipw_identification_thm} and \ref{om_identification_thm}}
\label{ipw_identification_proof}
Here, we proceed similarly as the proof of Theorem \ref{ident_theorem}: first giving results for each component of the causal contrast, then combining such lemmas to identify $\tau^0_{10}$. 

\begin{lemma}
\label{lemma_ipw_1}
Under Assumptions  \ref{consistency} through \ref{stratexch}, and assuming
$\rho(x)>0$, $1-\rho(x)>0$, and $p_a(x)>0$ for $a=1, 0$ and all $x$ in the support of $X$, we have
\begin{equation*}
E[Y(1)|U=10, R=0] = \frac{1}{D}E\left[C\cdot A\cdot R \cdot \frac{p_1(X)-p_0(X)}{p_1(X)}\cdot\frac{1}{\pi(X)}\cdot \frac{1-\rho(X)}{\rho(X)}\cdot Y\right]
\end{equation*}
where $\pi(X)=P(A=1|R=1, X)$ and $D=E[(1-\rho(X))(p_1(X)-p_0(X))]$
\end{lemma}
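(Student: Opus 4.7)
The plan is to combine the already-proven Lemma \ref{te_1_identification}, which expresses $E[Y(1)|U=10, R=0]$ as $E[\mu_{11}(X)(p_1(X)-p_0(X))(1-\rho(X))]/D$, with an iterated-expectation argument that shows the IPW functional in the statement is an equivalent way of writing the same numerator. Since the denominator $D$ already matches Lemma \ref{denom_identification}, the entire problem reduces to verifying the integral identity
$$
E\!\left[\frac{C A R\, (p_1(X)-p_0(X))(1-\rho(X))\, Y}{p_1(X)\,\pi(X)\,\rho(X)}\right]
\;=\;
E[\mu_{11}(X)(p_1(X)-p_0(X))(1-\rho(X))].
$$

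The first step is to condition on $X$ and pull every $X$-measurable factor outside the inner expectation; this reduces matters to the pointwise identity $E[CARY\mid X] = \mu_{11}(X)\, p_1(X)\,\pi(X)\,\rho(X)$. To establish it, I would decompose the joint event $\{C=1,A=1,R=1\}$ sequentially in the order $R\to A\to C$: by definition $P(R=1\mid X)=\rho(X)$ and $P(A=1\mid R=1,X)=\pi(X)$; the remaining piece $P(C=1\mid A=1,R=1,X)$ corresponds, under monotonicity and consistency for received treatment, to $P(U=10 \text{ or } U=11\mid A=1,R=1,X)$, which Assumptions \ref{treatment_ignorability} and \ref{stratexch} collapse to the baseline stratum probability $p_1(X)$.

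On the event $\{C=1,A=1,R=1\}$, Assumption \ref{consistency} identifies $Y$ with $Y(1)$, and the chain of equalities already worked out in Lemma \ref{te_1_identification} gives $E[Y\mid C=1,A=1,R=1,X]=\mu_{11}(X)$. Plugging these ingredients into the inner expectation yields the claimed pointwise identity, and substituting into the outer expectation cancels all three inverse weights, producing exactly $E[\mu_{11}(X)(p_1(X)-p_0(X))(1-\rho(X))]$. Dividing by $D$ completes the proof.

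The main obstacle I anticipate is the careful justification of $P(C=1\mid A=1,R=1,X) = p_1(X)$: the quantity $p_1(X)$ is defined as a marginal-over-assignment stratum probability in the trial population, so one must simultaneously invoke the randomized-assignment ignorability (to drop the conditioning on $A=1$) and the strata exchangeability between trial and target (to align the $R=1$ conditioning with the definition of $p_1$). The positivity conditions on $\rho$, $1-\rho$, and $p_a$ are used implicitly throughout to ensure the inverse weights are well-defined and that the manipulations involve no division by zero.
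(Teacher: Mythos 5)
Your proposal is correct and takes essentially the same route as the paper: both arguments rest on the factorization $P(C=1,A=1,R=1\mid X)=\rho(X)\,\pi(X)\,p_1(X)$ together with $E[Y\mid C=1,A=1,R=1,X]=\mu_{11}(X)$ and iterated expectations, the only difference being that the paper inserts the inverse-probability representation directly into the identification chain of Lemma \ref{te_1_identification} and reapplies the ratio trick with $I(U=10,R=0)$, whereas you cite that lemma's final formula and verify that the two observed-data functionals coincide. Your anticipated obstacle is milder than you fear, since $p_1(X)$ is defined as the observed conditional probability $P(C=1\mid A=1,R=1,X)$, so that step is definitional rather than assumption-laden.
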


\begin{proof}[Proof of Lemma \ref{lemma_ipw_1}]
\begin{align*}
E[Y(1)|U=10, R=0] &= E[E\{Y|X, C=1, A=1, R=1\}|U=10, R=0]\\
&= E\left[E\left\{\frac{I(C=1, A=1, R=1)}{P(C=1, A=1, R=1|X)}Y|X\right\}|U=10, R=0\right]\\
&= E\left[E\left\{\frac{ARC}{p_1(X)\pi(X)\rho(X)}Y|X\right\}|U=10, R=0\right]\\
&= \frac{E\left[I(U=10, R=0)E\left\{\frac{ARC}{p_1(X)\pi(X)\rho(X)}Y|X\right\}\right]}{P(U=10, R=0)}\\
&= \frac{E\left[\frac{f(X)}{p_1(X)\pi(X)\rho(X)}E\left\{I(U=10, R=0)|X\right\}\right]}{P(U=10, R=0)} \hspace{.5cm} \text{where $f(X)=E[ARYC|X]$}\\
&= \frac{E\left[\frac{f(X)}{p_1(X)\pi(X)\rho(X)}(p_1(X)-p_0(X))(1-\rho(X))\right]}{E[(p_1(X)-p_0(X))(1-\rho(X))]} \hspace{.5cm} \text{ Lemma \ref{denom_identification}}\\
&= \frac{E\left[\frac{(p_1(X)-p_0(X))(1-\rho(X))}{p_1(X)\pi(X)\rho(X)}E\{ARYC|X\}\right]}{E[(p_1(X)-p_0(X))(1-\rho(X))]}\\
&= \frac{E\left[C\cdot A\cdot R\cdot \frac{p_1(X)-p_0(X)}{p_1(X)}\cdot \frac{1}{\pi(X)}\cdot \frac{(1-\rho(X))}{\rho(X)}\cdot Y\right]}{E[(p_1(X)-p_0(X))(1-\rho(X))]}, \hspace{0.5cm} \text{Iterated Expectation}
\end{align*}
which corresponds to the expression given in Lemma \ref{lemma_ipw_1}.
\end{proof} 
As with the other identification results, we have an analogous lemma for the average potential outcomes among compliers in the target population under assignment to control:
\begin{lemma}
\label{lemma_ipw_2}
Under Assumptions  \ref{consistency} through \ref{stratexch}, and assuming
$\rho(x)>0$, $1-\rho(x)>0$, and $p_a(x)>0$ for $a=1, 0$ and all $x$ in the support of $X$, we have
\begin{equation*}
E[Y(0)|U=10, R=0] = \frac{1}{D}E\left[(1-C)\cdot (1-A)\cdot R \cdot \frac{p_1(X)-p_0(X)}{1-p_0(X)}\cdot\frac{1}{1-\pi(X)}\cdot \frac{1-\rho(X)}{\rho(X)}\cdot Y\right]
\end{equation*}
\end{lemma}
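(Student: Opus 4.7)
The plan is to mirror the proof of Lemma \ref{lemma_ipw_1}, with the treated and control arms interchanged throughout. First I would apply the same chain of arguments used in Lemma \ref{te_0_identification} (relying on Assumptions \ref{mean_exch}, \ref{princignorability}, \ref{treatment_ignorability}, and \ref{consistency}) to establish
$$E[Y(0)\mid U=10, R=0] = E[\,E\{Y \mid X, C=0, A=0, R=1\} \mid U=10, R=0\,].$$
The key fact driving this reduction is that among study participants assigned control, the compliers are exactly those who do \emph{not} take treatment, so the observable principal-stratum selector is $C=0$ rather than $C=1$.

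Next I would re-express the inner conditional expectation in IPW form. Factoring via the chain rule,
$$P(C=0, A=0, R=1 \mid X) = (1 - p_0(X))(1 - \pi(X))\rho(X),$$
so that
$$E\{Y \mid X, C=0, A=0, R=1\} = E\left\{\frac{(1-C)(1-A)R}{(1-p_0(X))(1-\pi(X))\rho(X)}\, Y \,\Big|\, X\right\}.$$
I would then follow the same sequence of steps as in Lemma \ref{lemma_ipw_1}: rewrite the outer conditioning event as a ratio $E[I(U=10, R=0)\,\cdot\,]/P(U=10, R=0)$, pull the $X$-measurable weight out of the inner conditional expectation, invoke Lemma \ref{denom_identification} to replace $P(U=10, R=0)$ with $D$ and $E\{I(U=10, R=0)\mid X\}$ with $(p_1(X)-p_0(X))(1-\rho(X))$, and finally use the tower property to collapse the nested expectation into the claimed form.

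The main obstacle is purely notational bookkeeping: the probability in the IPW denominator shifts from $p_1(X)\pi(X)\rho(X)$ to $(1-p_0(X))(1-\pi(X))\rho(X)$ because $P(C=1\mid A=1, R=1, X) = p_1(X)$ while $P(C=0 \mid A=0, R=1, X) = 1 - p_0(X)$. The numerator factor $p_1(X) - p_0(X)$ originating from Lemma \ref{denom_identification} is unaffected, which is what produces the asymmetric-looking ratio $(p_1(X) - p_0(X))/(1 - p_0(X))$ in the final expression. Monotonicity guarantees $1 - p_0(X) \geq p_1(X) - p_0(X) > 0$, so this weight is well-defined under the stated positivity conditions, and no additional substantive assumption beyond those listed is required.
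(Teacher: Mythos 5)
Your proposal is correct and takes essentially the same route as the paper: the paper's proof of Lemma \ref{lemma_ipw_2} is simply the instruction to repeat the argument of Lemma \ref{lemma_ipw_1} with $C$, $A$, $p_1(X)$, $\pi(X)$ replaced by $1-C$, $1-A$, $1-p_0(X)$, $1-\pi(X)$, which is exactly the substitution you carry out, with the correct factorization $P(C=0, A=0, R=1 \mid X) = (1-p_0(X))(1-\pi(X))\rho(X)$ identified as the only point requiring care.
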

\begin{proof}[Proof of Lemma \ref{lemma_ipw_2}]
The proof proceeds almost identically as that of Lemma \ref{lemma_ipw_1}, where we replace references to $C$ and $A$ with $(1-C)$ and $(1-A)$ and references to $p_1(X)$ and $\pi(X)$ with $(1-p_0(X))$ and $(1-\pi(X))$.
\end{proof}

\begin{proof}[Proof of Theorem \ref{ipw_identification_thm}]
The first two identification claims of Theorem \ref{ipw_identification_thm} correspond to Lemmas \ref{lemma_ipw_1} and \ref{lemma_ipw_2}, respectively. That these identification results in turn imply identification of the entire treatment effect follows from identical reasoning as in the proof of Theorem \ref{ident_theorem} after expressing $\tau^0_{10}$ as a contrast of the two causal quantities.
\end{proof}

\begin{proof}[Proof of Theorem \ref{om_identification_thm}]
The identification formulas in Theorem \ref{om_identification_thm} follow directly from the plug-in identification formulas in Theorem \ref{ident_theorem}. Here, we show that the expression on the right-hand-side of Equation (\ref{eq: om_identification}) is equal to the expression on the right hand sides of Equation (\ref{eq: identification_res}). This fact in turn connects that expression to the causal parameter $\tau^0_{10}$. Using this strategy, we have:
\begin{align*}
&\frac{E\left[\{p_1(X)-p_0(X)\}(1-R)(\mu_{11}(X)-\mu_{00}(X))\right]}{E\left[\{p_1(X)-p_0(X)\}(1-R)\right]} \\
&= \frac{E\left[E\{(p_1(X)-p_0(X))(1-R)(\mu_{11}(X)-\mu_{00}(X))|X\}\right]}{E\left[E\{(p_1(X)-p_0(X))(1-R)|X\}\right]}\\
&= \frac{E\left[(p_1(X)-p_0(X))(\mu_{11}(X)-\mu_{00}(X))E\{(1-R)|X\}\right]}{E\left[(p_1(X)-p_0(X))E\{(1-R)|X\}\right]}\\
&= \frac{E\left[(p_1(X)-p_0(X))(\mu_{11}(X)-\mu_{00}(X))(1-\rho(X))\right]}{E\left[(p_1(X)-p_0(X))(1-\rho(X))\right]}\\
&= E[Y(1)-Y(0)|U=10, R=0],
\end{align*}
where the last equality follows from Theorem \ref{ident_theorem}.
\end{proof}

\section{EIF Derivation}
\label{supp_eif_derivation}
This section proves Theorem \ref{eif_thm}, which gives the efficient influence function (EIF) for
\begin{equation*}
\psi^0_{10} = \frac{E\left\{\left[p_1(X)-p_0(X)\right](1-\rho(X))\left[\mu_{11}(X)-\mu_{00}(X)\right]\right\}}{E\left\{\left[p_1(X)-p_0(X)\right](1-\rho(X))\right\}}.
\end{equation*}
As in the statement of the theorem, we first derive the EIFs for 
\begin{equation*}
\psi^0_{1, 10} =  \frac{E\left[\left\{p_1(X)-p_0(X)\right\}\left\{1-\rho(X)\right\}\mu_{11}(X)\right]}{E\left[\left\{p_1(X)-p_0(X)\right\}\left\{(1-\rho(X)\right\}\right]}
\end{equation*}
and
\begin{equation*}
\psi^0_{0, 10} = \frac{E\left[\left\{p_1(X)-p_0(X)\right\}\left\{1-\rho(X)\right\}\mu_{00}(X)\right]}{E\left[\left\{p_1(X)-p_0(X)\right\}\left\{(1-\rho(X)\right\}\right]}
\end{equation*}
separately. We begin with some background on EIFs that will be helpful later.

\subsection*{Preliminaries}

\noindent Let $p(z)$ be the joint density for our observed data $Z=(Y, C, A, R, X)$.  Let $p_\epsilon(z) = p(z, \epsilon)$ be a parametric submodel such that $p(z)=p(z, \epsilon = 0)$. Let
\begin{equation*}
    S_\epsilon(z)=\left.\frac{\partial \log p(z, \epsilon)}{\partial \epsilon}\right\rvert_{\epsilon = 0}.
\end{equation*}
Also, let $\mathcal{H}$ be the Hilbert space of mean-zero, finite-variance functions of $Z$ and $\Lambda$ be the mean-square closure of all parametric submodel tangent spaces, i.e., the set of all parametric submodel tangent spaces along with their limit points. From Theorem 4.4 of \citet{tsiatis_book}, we know that $\mathcal{H}$ and $\Lambda$ coincide in our nonparametric setting. Returning to our parametric submodel, we decompose $S_\epsilon(Z)$ as
\begin{equation*}
S_\epsilon(Z)=S_\epsilon(R)+S_\epsilon(X|R)+S_\epsilon(A|X, R)+S_\epsilon(C|A, X, R)+S_\epsilon(Y|A, R, C, X).
\end{equation*}
This follows by decomposing the log-likelihood of $p(z, \epsilon)$. The parametric submodel tangent space can be written as the direct sum of the tangent spaces associated with each of the score vectors in the above decomposition. Applying the existence of such a decomposition across all parametric submodels, we can apply Theorem 4.5 of \href{https://link.springer.com/book/10.1007/0-387-37345-4}{Tsiatis (2006)} to decompose $\mathcal{H}$ as follows:
\begin{equation*}
    \mathcal{H} = \Lambda_1\oplus\Lambda_2\oplus\Lambda_3\oplus\Lambda_4\oplus\Lambda_5
\end{equation*}
where $\Lambda_i$, gives the mean square closure of parametric submodel tangent spaces for the parametric submodels represented by the $i$th term in the score decomposition given above. For instance, $\Lambda_1$ is the set of $h(Z)\in \mathcal{H}$ such that $E[h^T(Z)h(Z)]<\infty$ and there exists a sequence $B_jS_{\epsilon_j}(R)$ such that
\begin{equation*}
    ||h(Z)-B_jS_{\epsilon_j}(R)||^2\overset{j\rightarrow \infty}{\rightarrow} 0
\end{equation*}
for a sequence of parametric submodels indexed by $j$, where $||g(\cdot)||^2=E[g^T(\cdot)g(\cdot)||$. Our goal is to derive the EIF for $\psi^0_{10}$, denoted $\varphi_{\psi^0_{10}}.$ We know that for parametric models, the efficient influence function is the unique influence function that resides in the tangent space. Since we know that the tangent space in the nonparametric setting is the entire set of mean-zero measurable functions of $Z$, then any influence function we derive for $\psi^0_{10}$ will be the efficient influence function. To ease notation a bit in the following calculations, we let $\psi^0_{10}=\psi$ and $\varphi_{\psi^0_{10}}=\varphi$. We will return to our earlier notation later, when we need to distinguish between multiple parameters and their respective efficient influence functions.
Applying Theorem 3.2 of \citet{tsiatis_book} to our parametric submodel, we know that the influence function $\varphi(Z)$ satisfies
\begin{equation}
    E\left[\varphi(Z)S_\epsilon\right] = \left.\frac{\partial \psi(P_\epsilon)}{\partial \epsilon}\right\rvert_{\epsilon = 0}.
    \label{eq: if_condition}
\end{equation}
Ultimately, we want to show that the EIF given in Equation (\ref{eq: eif_tau}) satisfies \ref{eq: if_condition}. We will do so by first finding the EIF for components of $\psi^0_{10}$ before combining these results. We start with the following general results regarding EIFs:
\begin{lemma}
\label{eif_fraction}
    (From \citet{jiang_multiply_2022}) If a parameter $R$ can be written as a ratio $R=N/D$, then, if $\varphi_N(Z)$ is the EIF for $N$, and $\varphi_D(Z)$ is the EIF for $D$, then the EIF for $R$ is given by
    \begin{equation}
        \varphi_R(Z)=\frac{1}{D}\varphi_N(Z)-\frac{R}{D}\varphi_D(Z).
    \end{equation}
\end{lemma}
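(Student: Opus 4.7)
The plan is to verify the defining pathwise-derivative characterization \eqref{eq: if_condition} directly for the candidate function $\varphi_R = \frac{1}{D}\varphi_N - \frac{R}{D}\varphi_D$, by applying the quotient rule to the parametric submodel map $\epsilon \mapsto R(P_\epsilon) = N(P_\epsilon)/D(P_\epsilon)$. Since $R = N/D$ is assumed well-defined at the true distribution, $D(P_\epsilon)$ remains bounded away from zero in a neighborhood of $\epsilon = 0$ and the ratio is differentiable there.

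The main computation is to differentiate and evaluate at $\epsilon = 0$, obtaining
$$\left.\frac{\partial R(P_\epsilon)}{\partial \epsilon}\right|_{\epsilon=0} = \frac{1}{D}\left.\frac{\partial N(P_\epsilon)}{\partial \epsilon}\right|_{\epsilon=0} - \frac{N}{D^2}\left.\frac{\partial D(P_\epsilon)}{\partial \epsilon}\right|_{\epsilon=0} = \frac{1}{D}\left.\frac{\partial N(P_\epsilon)}{\partial \epsilon}\right|_{\epsilon=0} - \frac{R}{D}\left.\frac{\partial D(P_\epsilon)}{\partial \epsilon}\right|_{\epsilon=0},$$
where the second equality simply rewrites $N/D^2$ as $R/D$. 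By hypothesis, $\varphi_N$ and $\varphi_D$ satisfy \eqref{eq: if_condition} for $N$ and $D$ respectively, so each pathwise derivative on the right equals $E[\varphi_N(Z) S_\epsilon]$ and $E[\varphi_D(Z) S_\epsilon]$. Because $1/D$ and $R/D$ are constants at the truth, pulling them inside the expectation and using linearity gives
$$\left.\frac{\partial R(P_\epsilon)}{\partial \epsilon}\right|_{\epsilon=0} = E\!\left[\left\{\frac{1}{D}\varphi_N(Z) - \frac{R}{D}\varphi_D(Z)\right\} S_\epsilon\right],$$
which is precisely \eqref{eq: if_condition} applied to $R$ with the candidate $\varphi_R$.

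To finish I would verify that $\varphi_R$ is a bona fide element of $\mathcal{H}$: it is mean-zero because $E[\varphi_N] = E[\varphi_D] = 0$, and square-integrable as a linear combination (with constant coefficients) of two square-integrable functions. The preliminaries already established $\Lambda = \mathcal{H}$ in the nonparametric setting, so any valid influence function is automatically the unique EIF; consequently $\varphi_R$ is the EIF for $R$. I do not anticipate a real obstacle here. The only subtlety is ensuring that the parametric submodel is regular enough to justify exchanging differentiation with expectation when computing the pathwise derivatives of $N(P_\epsilon)$ and $D(P_\epsilon)$, but this is a standard regularity condition embedded in the tangent-space construction described in the preliminaries.
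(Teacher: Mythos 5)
Your proof is correct: the quotient-rule computation of the pathwise derivative, combined with the hypothesis that $\varphi_N$ and $\varphi_D$ satisfy the characterization \eqref{eq: if_condition} and the fact that the nonparametric tangent space is all of $\mathcal{H}$, is exactly the standard argument for this result. The paper itself does not spell this out---it simply defers to Lemma S2 of \citet{jiang_multiply_2022}---so your write-up supplies the same argument that the citation points to, with the mean-zero and square-integrability checks appropriately noted.
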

\begin{proof}[Proof of Lemma \ref{eif_fraction}]
The proof of this statement is given in Lemma S2 of \citet{jiang_multiply_2022}.
\end{proof}
\noindent For completeness, we also give the following Lemma, which will be useful in later proofs:
\begin{lemma}
\label{EIF_diff}
If a parameter $R$ can be written as a difference $R=N-D$, then, if $\varphi_N(Z)$ is the EIF for $N$, and $\varphi_D(Z)$ is the EIF for $D$, then the EIF for $R$ is given by
    \begin{equation}
        \varphi_R(Z)=\varphi_N(Z)-\varphi_D(Z).
    \end{equation}
\end{lemma}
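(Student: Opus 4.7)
The plan is to invoke the characterization of the EIF via the pathwise derivative condition stated in Equation (\ref{eq: if_condition}) and use linearity of differentiation. Specifically, for any regular parametric submodel $p_\epsilon$ with score $S_\epsilon$ at $\epsilon=0$, the functional $R(P_\epsilon) = N(P_\epsilon) - D(P_\epsilon)$ has pathwise derivative
\[
\left.\frac{\partial R(P_\epsilon)}{\partial \epsilon}\right\rvert_{\epsilon=0}
= \left.\frac{\partial N(P_\epsilon)}{\partial \epsilon}\right\rvert_{\epsilon=0}
- \left.\frac{\partial D(P_\epsilon)}{\partial \epsilon}\right\rvert_{\epsilon=0}
\]
by linearity of the derivative. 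Since $\varphi_N$ and $\varphi_D$ are EIFs for $N$ and $D$, each equals the corresponding partial via $E[\varphi_N(Z)S_\epsilon]$ and $E[\varphi_D(Z)S_\epsilon]$, respectively.

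Next I would subtract these two identities and use linearity of expectation to obtain
\[
E\bigl[(\varphi_N(Z)-\varphi_D(Z))\,S_\epsilon\bigr]
= \left.\frac{\partial R(P_\epsilon)}{\partial \epsilon}\right\rvert_{\epsilon=0},
\]
which shows that $\varphi_N(Z)-\varphi_D(Z)$ satisfies the defining condition (\ref{eq: if_condition}) for an influence function of $R$. Since this holds for every parametric submodel, the candidate $\varphi_N-\varphi_D$ is indeed an influence function for $R$.

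To conclude that it is in fact the \emph{efficient} influence function, I would invoke the nonparametric setting already established in the preliminaries: because $\mathcal{H}=\Lambda$, any influence function automatically lies in the tangent space, and the EIF is the unique such element. The function $\varphi_N-\varphi_D$ is mean-zero and square-integrable as a difference of two such functions, so it belongs to $\mathcal{H}$; combined with the pathwise derivative identity above and uniqueness of the EIF in $\mathcal{H}$, this yields $\varphi_R = \varphi_N - \varphi_D$, as claimed.

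I do not anticipate a serious obstacle here: the result is essentially a bookkeeping consequence of the linearity of both the derivative operator and the expectation operator, together with the fact that in a nonparametric model every influence function coincides with the efficient one. The only subtlety worth flagging is ensuring that the parametric submodels used to certify the EIF property for $N$ and for $D$ may be taken to be the same family, which is automatic since the condition is required to hold for all regular submodels.
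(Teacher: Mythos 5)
Your proposal is correct and follows essentially the same route as the paper: verify that $\varphi_N-\varphi_D$ satisfies the pathwise-derivative condition (\ref{eq: if_condition}) by linearity of expectation and differentiation, then appeal to the nonparametric tangent-space argument from the preliminaries to conclude it is the \emph{efficient} influence function. The paper's proof is just the same chain of equalities written out, with the uniqueness step left implicit.
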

\begin{proof}[Proof of Lemma \ref{EIF_diff}]
That this holds follows directly from the rules of differentiation. That is, 
\begin{align*}
    E[(\varphi_R(Z)S_\epsilon(Z)] &= E[(\varphi_N(Z)-\varphi_D(Z))S_\epsilon(Z)]\\
    &= E[\varphi_N(Z)S_\epsilon(Z)]-E[\varphi_D(Z)S_\epsilon(Z)]\\
    &= \left.\frac{\partial N(P_\epsilon)}{\partial \epsilon}\right\rvert_{\epsilon=0}-\left.\frac{\partial D(P_\epsilon)}{\partial \epsilon}\right\rvert_{\epsilon=0}\\
    &= \left.\frac{\partial( N(P_\epsilon)-D(P_\epsilon))}{\partial \epsilon}\right\rvert_{\epsilon=0}\\
    &= \left.\frac{\partial R(P_\epsilon)}{\partial \epsilon}\right\rvert_{\epsilon=0},
\end{align*}
which proves the result.
\end{proof}

\subsection{Deriving the EIF of \texorpdfstring{$\psi^0_{1, 10}$}{TEXT}}
\label{eif_deriv_supp}

\noindent Now, we turn to deriving the EIF for the component of $\psi^0_{10}$ given by the parameter 
\begin{equation*}
\psi^0_{1, 10} = \frac{E\left[\left\{p_1(X)-p_0(X)\right\}\left\{1-\rho(X)\right\}\mu_{11}(X)\right]}{E\left[\left\{p_1(X)-p_0(X)\right\}\left\{(1-\rho(X)\right\}\right]}
\end{equation*}
To do so, we first let 
\begin{align*}
\psi^0_{1, 10} &= \frac{E\left[\left\{p_1(X)\right\}\left\{1-\rho(X)\right\}\mu_{11}(X)\right]-E\left[\left\{p_0(X)\right\}\left\{1-\rho(X)\right\}\mu_{11}(X)\right]}{E\left[\left\{p_1(X)-p_0(X)\right\}\left\{(1-\rho(X)\right\}\right]}\\
    &= \frac{\psi^1_N(P)-\psi^0_N(P)}{\psi_D(P)}
\end{align*}
and derive the EIFs for $\psi^1_N(P)$, $\psi^0_N(P)$, and $\psi_D(P)$ separately. 

\begin{lemma}[EIF for $\psi^1_N(P)$]
\label{lemma: eif_1_proof}
We claim that the EIF for $\psi^1_N(P)$ is given by
\begin{align}
\varphi^1_N(Z) &= \left\{p_1(X)\right\}\left\{1-\rho(X)\right\}\mu_{11}(X)-\psi^1_N(P) \nonumber\\
    &\hspace{1cm}+ \left[\frac{AR\left\{C-p_1(X)\right\}}{\pi(X)\rho(X)}\right]\left\{1-\rho(X)\right\}\mu_{11}(X) \nonumber\\
    &\hspace{1cm}+ \left\{p_1(X)\right\}\left[(1-R)-\left\{1-\rho(X)\right\}\right]\mu_{11}(X) \nonumber\\
    &\hspace{1cm}+\left\{p_1(X)\right\}\left\{1-\rho(X)\right\}\left[\frac{AR[YC-p_1(X)\mu_{11}(X)]-AR[C-p_1(X)]\mu_{11}(X)}{\pi(X)\rho(X)p_1(X)}\right]. \label{eq: putative_eif}
\end{align}
\end{lemma}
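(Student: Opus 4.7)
The plan is to derive the EIF by computing the pathwise derivative of $\psi^1_N(P_\epsilon)$ along a regular one-dimensional parametric submodel and identifying a mean-zero function $\varphi^1_N(Z)$ satisfying $E[\varphi^1_N(Z) S_\epsilon(Z)] = \partial_\epsilon \psi^1_N(P_\epsilon)|_{\epsilon=0}$. Because the observed-data model is nonparametric, Theorem 4.4 of \citet{tsiatis_book} guarantees that any such $\varphi^1_N$ is automatically the EIF, so it suffices to exhibit one function satisfying this condition and verify algebraically that it coincides with (\ref{eq: putative_eif}).

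First I would write $\psi^1_N(P_\epsilon) = \int p_1^{(\epsilon)}(x)\{1-\rho^{(\epsilon)}(x)\}\mu_{11}^{(\epsilon)}(x)\, p^{(\epsilon)}(x)\,dx$ and apply the product rule at $\epsilon = 0$. This splits the derivative into four additive pieces, corresponding to differentiating (i) the marginal density of $X$, (ii) $p_1(X)$, (iii) $\rho(X)$, and (iv) $\mu_{11}(X)$. For each piece I would rewrite the derivative in the form $E[h_i(Z) S_\epsilon(Z)]$, so that $\varphi^1_N(Z) = \sum_i h_i(Z)$; the orthogonality implicit in the tangent-space decomposition $\mathcal{H} = \Lambda_1 \oplus \cdots \oplus \Lambda_5$ ensures that cross terms between $h_i$ and scores outside the $i$th factor vanish.

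The individual pieces are then standard. For (i), recentering yields $h_1(Z) = p_1(X)(1-\rho(X))\mu_{11}(X) - \psi^1_N(P)$, matching the first line of (\ref{eq: putative_eif}). For (ii), the identity $\partial_\epsilon p_1^{(\epsilon)}(x)|_{\epsilon=0} = E[(C-p_1(X))S_\epsilon(C|X,A,R)|X=x,A=1,R=1]$ combined with inverse-probability weighting by $AR/[\pi(X)\rho(X)]$ gives $h_2(Z) = AR\{C-p_1(X)\}(1-\rho(X))\mu_{11}(X)/[\pi(X)\rho(X)]$, the second line. The $\rho(X) = E[R|X]$ derivative (iii) is analogous and produces $h_3(Z) = p_1(X)[(1-R) - (1-\rho(X))]\mu_{11}(X)$, with the sign flip $-(R-\rho(X)) = (1-R)-(1-\rho(X))$ accounting for the minus sign from differentiating $1-\rho$. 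For (iv), IPW by $CAR/[\pi(X)\rho(X)p_1(X)]$ yields $h_4(Z) = CAR(1-\rho(X))(Y-\mu_{11}(X))/[\pi(X)\rho(X)]$.

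Finally I would reconcile $h_4$ with the fourth line of (\ref{eq: putative_eif}) by expanding its numerator: $AR[YC-p_1(X)\mu_{11}(X)] - AR[C-p_1(X)]\mu_{11}(X) = ARC[Y-\mu_{11}(X)]$, and the prefactor $p_1(X)(1-\rho(X))$ together with the denominator $\pi(X)\rho(X)p_1(X)$ recovers $h_4$. I expect the main obstacle to be bookkeeping in passing from conditional scores such as $S_\epsilon(C|X,A,R)$ to the full observed-data score $S_\epsilon(Z)$: each $h_i$ must belong to the corresponding $\Lambda_i$ so that tangent-space orthogonality legitimately converts the inner product, and the positivity conditions on $\pi(X)$, $\rho(X)$, and $p_1(X)$ are essential to ensure every IPW weight is finite.
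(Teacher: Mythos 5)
Your proposal is correct and follows essentially the same route as the paper: factorize $\psi^1_N$ through the conditional densities of $Y$, $C$, $R$, and $X$, apply the product rule to obtain a four-term pathwise derivative, and match each term against a summand of the putative influence function (the paper runs the verification in the reverse direction, pairing each line of (\ref{eq: putative_eif}) with the full score and checking the cross terms vanish, but the computation is the same). Your algebraic reconciliation of the fourth line via $AR[YC-p_1(X)\mu_{11}(X)]-AR[C-p_1(X)]\mu_{11}(X)=ARC[Y-\mu_{11}(X)]$ and the resulting IPW form is exactly what the paper's calculation establishes.
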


\begin{proof}[Proof of Lemma \ref{lemma: eif_1_proof}]
The proof of this claim below follows closely that of Theorem 3 in  \href{https://arxiv.org/abs/2302.00092}{Zeng et al. (2023)}. First we consider the pathwise derivative $\left.\frac{\partial \psi^1_N(P_\epsilon)}{\partial \epsilon}\right\rvert_{\epsilon=0}$. 
By definition,
\begin{align*}
    \psi^1_N(P) &= \int\sum_{r=0}^1\sum_{c=0}^1\int [yc(1-r)]p(y|A=1, C=1, R=1, x)p(c|A=1, R=1, x)p(r|x)p(x)\mathrm{d}y\mathrm{d}x
\end{align*}
We apply the product rule to compute the derivative for a given parametric submodel $P_\epsilon$:
\begin{align*}
    \frac{\partial \psi^1_N(P_\epsilon)}{\partial \epsilon} &=  \int\sum_{r=0}^1\sum_{c=0}^1\int [yc(1-r)]\frac{\partial p_\epsilon(y|A=1, C=1, R=1, x)}{\partial \epsilon} \\ 
    &\hspace{2.65cm} \times p_\epsilon(c|A=1, R=1, x)p_\epsilon(r|x)p_\epsilon(x)\mathrm{d}y\mathrm{d}x\\
    &\hspace{.25cm}+ \int\sum_{r=0}^1\sum_{c=0}^1\int [yc(1-r)]p_\epsilon(y|A=1, C=1, R=1, x)\frac{\partial p_\epsilon(c|A=1, R=1, x)}{\partial \epsilon} \\
    &\hspace{2.65cm} \times p_\epsilon(r|x)p_\epsilon(x)\mathrm{d}y\mathrm{d}x\\
    &\hspace{.25cm}+ \int\sum_{r=0}^1\sum_{c=0}^1\int [yc(1-r)]p_\epsilon(y|A=1, C=1, R=1, x)p_\epsilon(c|A=1, R=1, x) \\
    &\hspace{2.65cm} \times \frac{\partial p_\epsilon(r|x)}{\partial \epsilon}p_\epsilon(x)\mathrm{d}y\mathrm{d}x\\
    &\hspace{.25cm}+ \int\sum_{r=0}^1\sum_{c=0}^1\int [yc(1-r)]p_\epsilon(y|A=1, C=1, R=1, x)p_\epsilon(c|A=1, R=1, x) \\
    &\hspace{2.65cm} \times p_\epsilon(r|x)\frac{\partial p_\epsilon(x)}{\partial \epsilon}\mathrm{d}y\mathrm{d}x.
\end{align*}
Evaluating this derivative at the truth, i.e., $\epsilon=0$, we have
that $\left.\frac{\partial \psi^1_N(P_\epsilon)}{\partial \epsilon}\right\rvert_{\epsilon=0}$ is equal to 
\begin{align}
    &\hspace{-1.5cm}\int\sum_{r=0}^1\sum_{c=0}^1\int [yc(1-r)]S_\epsilon(y|A=1, C=1, R=1, x)p(y|A=1, C=1, R=1, x) \nonumber \\
    &\hspace{.65cm} \times p(c|A=1, R=1, x)p(r|x)p(x)\mathrm{d}y\mathrm{d}x \nonumber\\
    &\hspace{-.8cm}+ \int\sum_{r=0}^1\sum_{c=0}^1\int [yc(1-r)]p(y|A=1, C=1, R=1, x)S_\epsilon(c|A=1, R=1, x) \nonumber \\ 
    &\hspace{1.75cm} \times p(c|A=1, R=1, x)p(r|x)p(x)\mathrm{d}y\mathrm{d}x \nonumber\\
    &\hspace{-.8cm}+ \int\sum_{r=0}^1\sum_{c=0}^1\int [yc(1-r)]p(y|A=1, C=1, R=1, x)p(c|A=1, R=1, x) \nonumber \\ 
    &\hspace{1.75cm} \times S_\epsilon(r|x)p(r|x)p(x)\mathrm{d}y\mathrm{d}x \nonumber\\
    &\hspace{-.8cm}+ \int\sum_{r=0}^1\sum_{c=0}^1\int [yc(1-r)]p(y|A=1, C=1, R=1, x)p(c|A=1, R=1, x) \nonumber \\ 
    &\hspace{1.75cm} \times p(r|x)S_\epsilon(x)p(x)\mathrm{d}y\mathrm{d}x \nonumber\\
    &\hspace{-1.5cm}= E\left(\left\{p_1(X)\right\}\left\{1-\rho(X)\right\}E\left[YS_\epsilon(Y|C=1, A=1, R=1, X)|C=1, A=1, R=1, X\right]\right) \nonumber\\
    &\hspace{-.5cm} + E\left[\left(\left\{1-\rho(X)\right\}\mu_{11}(X)\right)E[CS_\epsilon(C|A=1, R=1, X)| A=1, R=1, X]\right] \nonumber\\
    &\hspace{-.5cm} + E\left[\left\{p_1(X)\right\}\mu_{11}(X)E[(1-R)S_\epsilon(R|X)|X]\right] \nonumber\\
    &\hspace{-.5cm} + E\left[\left\{p_1(X)\right\}\left\{1-\rho(X)\right\}\mu_{11}(X)S_\epsilon(X)\right].\label{eq: pathwise_deriv}
\end{align}
Recall that we want to show that our putative EIF $\varphi^1_N$ satisfies:
\begin{equation*}
E\left[\varphi^1_N(Z)S_\epsilon(Z)\right] = \left.\frac{\partial \psi_1(P_\epsilon)}{\partial \epsilon}\right\rvert_{\epsilon = 0}.
\end{equation*}
To do so, we evaluate each summand in $E\left[\varphi^1_N(Z)S_\epsilon(Z)\right]$ and show their correspondence to components of (\ref{eq: pathwise_deriv}). 

\subsection*{Obtaining the 4\ts{th} component of (\ref{eq: pathwise_deriv})}
Beginning with the first term in (\ref{eq: putative_eif}), we have
\begin{align*}
    &\hspace{-0.2cm}E\left[\left(\left\{p_1(X)\right\}\left\{1-\rho(X)\right\}\mu_{11}(X)-\psi^1_N(P)\right)S_\epsilon(Z)\right] \\
    &= E\left[\left\{p_1(X)\right\}\left\{1-\rho(X)\right\}\mu_{11}(X)S_\epsilon(Z)\right]-0\\
    &= E\left[\left\{p_1(X)\right\}\left\{1-\rho(X)\right\}\mu_{11}(X)S_\epsilon(Y, C, A, R|X)\right]\\
    &\hspace{1cm}+E\left[\left\{p_1(X)\right\}\left\{1-\rho(X)\right\}\mu_{11}(X)S_\epsilon(X)\right]\\
    &= E\left[\left\{p_1(X)\right\}\left\{1-\rho(X)\right\}\mu_{11}(X)E[S_\epsilon(Y, C, A, R|X)|X]\right]\\
    &\hspace{1cm}+E\left[\left\{p_1(X)\right\}\left\{1-\rho(X)\right\}\mu_{11}(X)S_\epsilon(X)\right]\\
    &= 0 + E\left[\left\{p_1(X)\right\}\left\{1-\rho(X)\right\}\mu_{11}(X)S_\epsilon(X)\right]\\
    &= E\left[\left\{p_1(X)\right\}\left\{1-\rho(X)\right\}\mu_{11}(X)S_\epsilon(X)\right],
\end{align*}
which is the 4\ts{th} component of (\ref{eq: pathwise_deriv}). 

\subsection*{Obtaining the 2\ts{nd} component of (\ref{eq: pathwise_deriv})}

Next, we have
\begin{align*}
    E\left[\left(\left[\frac{AR\left\{C-p_1(X)\right\}}{\pi(X)\rho(X)}\right]\left\{1-\rho(X)\right\}\mu_{11}(X)\right)S_\epsilon(Z)\right].
\end{align*}
We decompose $S_\epsilon(Z) = S_\epsilon(C|A, R, X)+S_\epsilon(Y|C, A, R, X)+S_\epsilon(A, R, X)$. Evaluating each multiplied term separately, we first consider
\begin{align*}
    &E\left[\left(\left[\frac{AR\left\{C-p_1(X)\right\}}{\pi(X)\rho(X)}\right]\left\{1-\rho(X)\right\}\mu_{11}(X)\right)S_\epsilon(A, R, X)\right]\\
    &= E\left[\left(\left[\frac{AR\left\{S_\epsilon(A, R, X)\right\}}{\pi(X)\rho(X)}\right]\left\{1-\rho(X)\right\}\mu_{11}(X)\right)E[C-p_1(X)|A=1, R=1, X]\right]\\
    &= 0.
\end{align*}
Next, we have
\begin{align*}
    &E\left[\left(\left[\frac{AR\left\{C-p_1(X)\right\}}{\pi(X)\rho(X)}\right]\left\{1-\rho(X)\right\}\mu_{11}(X)\right)S_\epsilon(Y|C, A, R, X)\right] \\
    &= E\left[\left(\left[\frac{AR\left\{C-p_1(X)\right\}}{\pi(X)\rho(X)}\right]\left\{1-\rho(X)\right\}\mu_{11}(X)\right)E[S_\epsilon(Y|C, A, R, X)|C, A, R, X]\right]\\
    &= 0.
\end{align*}
Finally, we break up our evaluation of $E\left[\left(\left[\frac{AR\left\{C-p_1(X)\right\}}{\pi(X)\rho(X)}\right]\left\{1-\rho(X)\right\}\mu_{11}(X)\right)S_\epsilon(C|A, R, X)\right]$ into two parts. The first term is
\begin{align*}
    &E\left[\left(\left[\frac{ARC}{\pi(X)\rho(X)}\right]\left\{1-\rho(X)\right\}\mu_{11}(X)\right)S_\epsilon(C|A, R, X)\right]\\
    &= E\left[\left(\left[\frac{ARC}{\pi(X)\rho(X)}\right]\left\{1-\rho(X)\right\}\mu_{11}(X)\right)S_\epsilon(C|A=1, R=1, X)\right]\\
    &= E\left[\left(\left[\frac{AR}{\pi(X)\rho(X)}\right]\left\{1-\rho(X)\right\}\mu_{11}(X)\right)E[CS_\epsilon(C|A=1, R=1, X)| A=1, R=1, X]\right]\\
    &= E\left[\left(\left\{1-\rho(X)\right\}\mu_{11}(X)\right)E[CS_\epsilon(C|A=1, R=1, X)| A=1, R=1, X]\right],
\end{align*}
which is the 2\ts{nd} component of (\ref{eq: pathwise_deriv}). The second term is
\begin{align*}
    &-E\left[\left(\left[\frac{ARp_1(X)}{\pi(X)\rho(X)}\right]\left\{1-\rho(X)\right\}\mu_{11}(X)\right)S_\epsilon(C|A, R, X)\right]\\
    &= -E\left[\left(\left[\frac{ARp_1(X)}{\pi(X)\rho(X)}\right]\left\{1-\rho(X)\right\}\mu_{11}(X)\right)E[S_\epsilon(C|A, R, X)|A, R, X]\right]\\
    &= 0.
\end{align*}

\subsection*{Obtaining the 3\ts{rd} component of (\ref{eq: pathwise_deriv})}
Turning our attention to 
\begin{equation*}
    E\left[\left\{p_1(X)\right\}\left[(1-R)-\left\{1-\rho(X)\right\}\right]\mu_{11}(X)S_\epsilon(Z)\right],
\end{equation*}
we decompose $S_\epsilon(Z) = S_\epsilon(R|X)+S_\epsilon(Y, C, A|R, X)+S_\epsilon(X)$. Considering each term separately, we first evaluate
\begin{align*}
    &E\left[\left\{p_1(X)\right\}\left[(1-R)-\left\{1-\rho(X)\right\}\right]\mu_{11}(X)S_\epsilon(Y, C, A|R, X)\right]\\
    &= E\left[\left\{p_1(X)\right\}\left[(1-R)-\left\{1-\rho(X)\right\}\right]\mu_{11}(X)E[S_\epsilon(Y, C, A|R, X)|R, X]\right]\\
    &= 0.
\end{align*}
Next, we have
\begin{align*}
    &E\left[\left\{p_1(X)\right\}\left[(1-R)-\left\{1-\rho(X)\right\}\right]\mu_{11}(X)S_\epsilon(X)\right]\\
    &= E\left[\left\{p_1(X)\right\}\mu_{11}(X)S_\epsilon(X)E\left[\left[(1-R)-\left\{1-\rho(X)\right\}\right]|X\right]\right]\\
    &= 0.
\end{align*}
Finally, 
\begin{align*}
    &E\left[\left\{p_1(X)\right\}\left[(1-R)-\left\{1-\rho(X)\right\}\right]\mu_{11}(X)S_\epsilon(R|X)\right]\\
    &= E\left[\left\{p_1(X)\right\}(1-R)\mu_{11}(X)S_\epsilon(R|X)\right]-E\left[\left\{p_1(X)\right\}\left[\left\{1-\rho(X)\right\}\right]\mu_{11}(X)S_\epsilon(R|X)\right]\\
    &= E\left[\left\{p_1(X)\right\}(1-R)\mu_{11}(X)S_\epsilon(R|X)\right]-E\left[\left\{p_1(X)\right\}\left[\left\{1-\rho(X)\right\}\right]\mu_{11}(X)E[S_\epsilon(R|X)|X]\right]\\
    &= E\left[\left\{p_1(X)\right\}(1-R)\mu_{11}(X)S_\epsilon(R|X)\right]-0\\
    &= E\left[\left\{p_1(X)\right\}\mu_{11}(X)E[(1-R)S_\epsilon(R|X)|X]\right],
\end{align*}
which is the 3\ts{rd} component of (\ref{eq: pathwise_deriv}).

\subsection*{Obtaining the 1\ts{st} component of (\ref{eq: pathwise_deriv})}
Now we turn to evaluating
\begin{equation*}
    E\left(\left\{p_1(X)\right\}\left\{1-\rho(X)\right\}\left[\frac{AR[YC-p_1(X)\mu_{11}(X)]-AR[C-p_1(X)]\mu_{11}(X)}{\pi(X)\rho(X)p_1(X)}\right]S_\epsilon(Z)\right).
\end{equation*}
We start by breaking up $S_\epsilon(Z)$ as $S_\epsilon(Z) = S_\epsilon(Y|C, A, R, X)+S_\epsilon(C, A, R, X)$. 
Considering each component in turn, we start with
\begin{align*}
    &E\left(\left\{p_1(X)\right\}\left\{1-\rho(X)\right\}\left[\frac{AR[YC-p_1(X)\mu_{11}(X)]-AR[C-p_1(X)]\mu_{11}(X)}{\pi(X)\rho(X)p_1(X)}\right]S_\epsilon(C, A, R, X)\right)\\
    & = E\left(\left\{p_1(X)\right\}\left\{1-\rho(X)\right\} \right. \\
    & \quad\quad \left. \times E\left[\left.\left\{\frac{AR[YC-p_1(X)\mu_{11}(X)]-AR[C-p_1(X)]\mu_{11}(X)}{\pi(X)\rho(X)p_1(X)}\right\}S_\epsilon(C, A, R, X)\right\rvert C, A, R, X\right]\right)\\
    &= E\left(\left\{p_1(X)\right\}\left\{1-\rho(X)\right\}S_\epsilon(C, A, R, X) \right. \\
    & \quad\quad \left. \times E\left.\left\{\frac{AR[YC-p_1(X)\mu_{11}(X)]-AR[C-p_1(X)]\mu_{11}(X)}{\pi(X)\rho(X)p_1(X)}\right\rvert C, A, R, X\right\}\right)\\
    &= E\left(\left\{p_1(X)\right\}\left\{1-\rho(X)\right\}S_\epsilon(C, A, R, X)\left[\frac{AR[\mu_{11}(X)C-p_1(X)\mu_{11}(X)]-AR[C-p_1(X)]\mu_{11}(X)}{\pi(X)\rho(X)p_1(X)}\right]\right)\\
    &= 0.
\end{align*}
Next, 
\begin{align*}
    &E\left(\left\{p_1(X)\right\}\left\{1-\rho(X)\right\}\left[\frac{AR[YC-p_1(X)\mu_{11}(X)]-AR[C-p_1(X)]\mu_{11}(X)}{\pi(X)\rho(X)p_1(X)}\right]S_\epsilon(Y|C, A, R, X)\right)\\
    &= E\left(\left\{p_1(X)\right\}\left\{1-\rho(X)\right\}\left[\frac{ARYC}{\pi(X)\rho(X)p_1(X)}\right]S_\epsilon(Y|C, A, R, X)\right)\\
    &\hspace{.5cm} - E\left(\left\{p_1(X)\right\}\left\{1-\rho(X)\right\}\left[\frac{AR[p_1(X)\mu_{11}(X)]+AR[C-p_1(X)]\mu_{11}(X)}{\pi(X)\rho(X)p_1(X)}\right]S_\epsilon(Y|C, A, R, X)\right)\\
    &= E\left(\left\{p_1(X)\right\}\left\{1-\rho(X)\right\}\left[\frac{ARYC}{\pi(X)\rho(X)p_1(X)}\right]S_\epsilon(Y|C=1, A=1, R=1, X)\right)\\
    &\hspace{.5cm}-E\left(\left\{p_1(X)\right\}\left\{1-\rho(X)\right\} \vphantom{\frac{AR}{AR}}\right.\\ 
    & \quad\quad\quad\quad \left. \times \left[\frac{AR[p_1(X)\mu_{11}(X)]+AR[C-p_1(X)]\mu_{11}(X)}{\pi(X)\rho(X)p_1(X)}\right]E[S_\epsilon(Y|C, A, R, X)|C, A, R, X]\right)\\
    &= E\left(\left\{p_1(X)\right\}\left\{1-\rho(X)\right\}\left[\frac{ARYC}{\pi(X)\rho(X)p_1(X)}\right]S_\epsilon(Y|C=1, A=1, R=1, X)\right)+0\\
    &= E\left(\left\{p_1(X)\right\}\left\{1-\rho(X)\right\}\frac{ARC}{\pi(X)\rho(X)p_1(X)} \right. \\ 
    & \quad\quad\quad \left. \vphantom{\frac{AR}{AR}} \times E\left[YS_\epsilon(Y|C=1, A=1, R=1, X)|C=1, A=1, R=1, X\right]\right)\\
    &= E\left(\left\{p_1(X)\right\}\left\{1-\rho(X)\right\}\frac{ARC}{\pi(X)\rho(X)p_1(X)}g(X)\right)\\
    &= E\left(\left\{p_1(X)\right\}\left\{1-\rho(X)\right\}\frac{g(X)}{\rho(X)\pi(X)p_1(X)}E[R\hspace{1mm}E[A \hspace{1mm}E\left[C|A=1, R=1, X\right]|R=1, X]|X]\right)\\
    &= E\left(\left\{p_1(X)\right\}\left\{1-\rho(X)\right\}\frac{g(X)}{\rho(X)\pi(X)p_1(X)}E[R|X]E[A|R=1, X]E[C|R=1, A=1, X]\right)\\
    &= E\left(\left\{p_1(X)\right\}\left\{1-\rho(X)\right\}\frac{g(X)}{\rho(X)\pi(X)p_1(X)}\rho(X)\pi(X)p_1(X)\right)\\
    &= E\left(\left\{p_1(X)\right\}\left\{1-\rho(X)\right\}E\left[YS_\epsilon(Y|C=1, A=1, R=1, X)|C=1, A=1, R=1, X\right]\right),
\end{align*}
where, in the 5\ts{th} equality, we let $E\left[YS_\epsilon(Y|C=1, A=1, R=1, X)|C=1, A=1, R=1, X\right]=g(X)$
to save space and emphasize that this expectation is random as a function of $X$. This is the 1\ts{st} component of (\ref{eq: pathwise_deriv}).
\end{proof}

\begin{lemma}[EIF for $\psi^0_N(P)$]
\label{lemma: eif_0_proof}
We claim that the EIF for $\psi^0_N(P)$ is given by
\begin{align*}
\varphi^0_N(Z) &= p_0(X)\left\{1-\rho(X)\right\}\mu_{11}(X)-\psi^0_N(P)\\
    &\hspace{1cm}+ \left[\frac{(1-A)R\left\{C-p_0(X)\right\}}{(1-\pi(X))\rho(X)}\right]\left\{1-\rho(X)\right\}\mu_{11}(X)\\
    &\hspace{1cm}+ \left\{p_0(X)\right\}\left[(1-R)-\left\{1-\rho(X)\right\}\right]\mu_{11}(X)\\
    &\hspace{1cm}+\left\{p_0(X)\right\}\left\{1-\rho(X)\right\}\left[\left\{\frac{AR[YC-p_1(X)\mu_{11}(X)]-AR[C-p_1(X)]\mu_{11}(X)}{\pi(X)\rho(X)p_1(X)}\right\}\right].
\end{align*}
\end{lemma}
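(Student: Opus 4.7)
The plan is to follow the structure of the proof of Lemma \ref{lemma: eif_1_proof} essentially verbatim, with the only substantive changes being those driven by the fact that $p_0(X) = P(C=1\mid A=0, R=1, X)$ rather than $P(C=1\mid A=1, R=1, X)$. Specifically, I first write
\[
\psi^0_N(P) = \int\sum_{r=0}^1\sum_{c=0}^1\int [yc(1-r)]\, p(y\mid A=1, C=1, R=1, x)\, p(c\mid A=0, R=1, x)\, p(r\mid x)\, p(x)\,\mathrm{d}y\mathrm{d}x,
\]
noting that the $y$-factor remains conditioned on $A=1$ (since $\mu_{11}$ is unchanged) while the $c$-factor is now conditioned on $A=0$. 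Applying the product rule and evaluating at $\epsilon=0$ yields a pathwise derivative $\left.\partial\psi^0_N(P_\epsilon)/\partial\epsilon\right\rvert_{\epsilon=0}$ having four summands analogous to \eqref{eq: pathwise_deriv}: one involving $E[YS_\epsilon(Y\mid C=1,A=1,R=1,X)\mid\cdot]$ weighted by $p_0(X)\{1-\rho(X)\}$, one involving $E[CS_\epsilon(C\mid A=0,R=1,X)\mid\cdot]$ weighted by $\{1-\rho(X)\}\mu_{11}(X)$, one involving $E[(1-R)S_\epsilon(R\mid X)\mid X]$ weighted by $p_0(X)\mu_{11}(X)$, and one involving $S_\epsilon(X)$ weighted by $p_0(X)\{1-\rho(X)\}\mu_{11}(X)$.

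Next, I verify that $E[\varphi^0_N(Z)S_\epsilon(Z)]$ reproduces each of these four summands by handling the four components of the putative EIF in turn. The first component reproduces the $S_\epsilon(X)$ summand exactly as in the proof of Lemma \ref{lemma: eif_1_proof}, since the inner conditional expectation $E[S_\epsilon(Y,C,A,R\mid X)\mid X]=0$ and the $X$-marginal term survives. The third component is treated identically to the corresponding term in Lemma \ref{lemma: eif_1_proof} with $p_1$ replaced by $p_0$; the decomposition $S_\epsilon(Z)=S_\epsilon(R\mid X)+S_\epsilon(Y,C,A\mid R,X)+S_\epsilon(X)$ kills the latter two summands by iterated expectations and the $R$-score term produces the third summand of the pathwise derivative.

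For the second component, the crucial change is that the indicator weight $AR/[\pi(X)\rho(X)]$ is replaced by $(1-A)R/[(1-\pi(X))\rho(X)]$. Decomposing $S_\epsilon(Z) = S_\epsilon(C\mid A,R,X) + S_\epsilon(Y\mid C,A,R,X) + S_\epsilon(A,R,X)$, the latter two contributions vanish by iterated expectations exactly as before, and the first one, restricted by the $(1-A)R$ indicators to $A=0, R=1$, gives
\[
E\!\left[\{1-\rho(X)\}\mu_{11}(X)\, E[C S_\epsilon(C\mid A=0, R=1, X)\mid A=0, R=1, X]\right],
\]
which is the second summand of the pathwise derivative. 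The fourth component is handled \emph{identically} to the corresponding term in Lemma \ref{lemma: eif_1_proof}: because the bracketed expression still contains $AR/[\pi(X)\rho(X) p_1(X)]$ (conditioning on $A=1$, since $\mu_{11}$ is unchanged), the same sequence of iterated expectations yields $E[Y S_\epsilon(Y\mid C=1, A=1, R=1, X)\mid\cdot]$ weighted now by the outer $p_0(X)\{1-\rho(X)\}$, reproducing the first summand of the pathwise derivative.

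The main obstacle, and the only non-mechanical step, is the fourth component: one must confirm that the inner ratio $AR/[\pi(X)\rho(X)p_1(X)]$ still telescopes correctly even though the outer weighting uses $p_0(X)$ rather than $p_1(X)$. This works precisely because the expectation of the $ARC$ factor conditional on $X$ produces $\rho(X)\pi(X)p_1(X)$, which cancels the denominator regardless of the outer multiplier, so the outer $p_0(X)$ simply rides along as a constant with respect to the inner iterated expectations. Once these four correspondences are established, summing them shows $E[\varphi^0_N(Z)S_\epsilon(Z)] = \left.\partial\psi^0_N(P_\epsilon)/\partial\epsilon\right\rvert_{\epsilon=0}$, which by \eqref{eq: if_condition} and the nonparametric tangent-space argument from the preliminaries proves that $\varphi^0_N$ is the efficient influence function for $\psi^0_N(P)$.
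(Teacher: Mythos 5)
Your proposal is correct and matches the paper's intended argument: the paper's own proof of this lemma is a one-line remark that it "follows almost identically" to the proof of Lemma \ref{lemma: eif_1_proof}, and your writeup is a faithful, correct elaboration of exactly that analogy, including the one genuinely delicate point — that the fourth term retains the $AR/[\pi(X)\rho(X)p_1(X)]$ structure (since $\mu_{11}$ is unchanged) while the outer weight switches to $p_0(X)$, which passes through the inner iterated expectations as a function of $X$.
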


\begin{proof}[Proof of Lemma \ref{lemma: eif_0_proof}]
The proof of this result follows almost identically to that of Lemma \ref{lemma: eif_1_proof}.
\end{proof}

\begin{lemma}[EIF for $\psi_D(P)$]
\label{lemma: eif_denom}
We claim that the EIF for $\psi_D(P)$ is given by
\begin{align*}
\varphi_D(Z) &= \left\{p_1(X)-p_0(X)\right\}\left\{1-\rho(X)\right\}-\psi_D(P)\\
    &\hspace{1cm}+ \left[\frac{AR\left\{C-p_1(X)\right\}}{\pi(X)\rho(X)}-\frac{(1-A)R\left\{C-p_0(X)\right\}}{(1-\pi(X))\rho(X)}\right]\left\{1-\rho(X)\right\}\\
    &\hspace{1cm}+ \left\{p_1(X)-p_0(X)\right\}\left[(1-R)-\left\{1-\rho(X)\right\}\right].
\end{align*}
\end{lemma}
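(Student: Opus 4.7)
The plan is to mirror the strategy used for Lemmas \ref{lemma: eif_1_proof} and \ref{lemma: eif_0_proof}, but with the additional simplification that $\psi_D(P)$ does not depend on the conditional law of $Y$. First I would split the parameter as a difference
\begin{equation*}
\psi_D(P) = \psi_D^1(P) - \psi_D^0(P), \quad \text{where } \psi_D^a(P) = E\left[p_a(X)\{1-\rho(X)\}\right],\ a=0,1,
\end{equation*}
and invoke Lemma \ref{EIF_diff} so that it suffices to derive the EIFs $\varphi_D^1$ and $\varphi_D^0$ separately. Subtracting the two then produces exactly the three bracketed pieces in the stated $\varphi_D(Z)$.

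For $\psi_D^1(P)$, I would write
\begin{equation*}
\psi_D^1(P) = \int \sum_{r=0}^1 \sum_{c=0}^1 c(1-r)\, p(c\mid A=1, R=1, x)\, p(r\mid x)\, p(x)\, \mathrm{d}x,
\end{equation*}
and, applying the product rule exactly as in the proof of Lemma \ref{lemma: eif_1_proof}, compute the pathwise derivative at $\epsilon=0$. Only three summands survive (there is no $Y$ factor), yielding
\begin{align*}
\left.\frac{\partial \psi_D^1(P_\epsilon)}{\partial \epsilon}\right\rvert_{\epsilon = 0}
&= E\left[\{1-\rho(X)\}\, E\{C S_\epsilon(C\mid A=1, R=1, X)\mid A=1, R=1, X\}\right] \\
&\quad + E\left[p_1(X)\, E\{(1-R) S_\epsilon(R\mid X)\mid X\}\right] \\
&\quad + E\left[p_1(X)\{1-\rho(X)\}\, S_\epsilon(X)\right].
\end{align*}
I would then propose the candidate
\begin{equation*}
\varphi_D^1(Z) = p_1(X)\{1-\rho(X)\} - \psi_D^1(P) + \frac{AR\{C-p_1(X)\}}{\pi(X)\rho(X)}\{1-\rho(X)\} + p_1(X)\left[(1-R) - \{1-\rho(X)\}\right],
\end{equation*}
and verify $E[\varphi_D^1(Z) S_\epsilon(Z)] = \partial_\epsilon \psi_D^1(P_\epsilon)\rvert_{\epsilon=0}$ term by term, using the same score decompositions that yielded the 2nd, 3rd, and 4th components of (\ref{eq: pathwise_deriv}) in Lemma \ref{lemma: eif_1_proof}; the absence of a $Y$ factor means the argument that delivered the 1st component is not needed.

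The derivation for $\psi_D^0(P)$ is identical up to swapping $A\leftrightarrow (1-A)$, $\pi(X) \leftrightarrow 1-\pi(X)$, and $p_1 \leftrightarrow p_0$. Combining via Lemma \ref{EIF_diff} gives
\begin{equation*}
\varphi_D(Z) = \varphi_D^1(Z) - \varphi_D^0(Z),
\end{equation*}
which collapses to the expression claimed in the lemma after grouping the common $\{1-\rho(X)\}$ and $[(1-R)-\{1-\rho(X)\}]$ factors. I expect no real obstacle: the calculation is structurally simpler than Lemma \ref{lemma: eif_1_proof} because the outcome $Y$ never enters, and the only care required is keeping track of the conditioning events $\{A=1,R=1\}$ versus $\{A=0,R=1\}$ when invoking the tower property to kill score terms whose multiplier has conditional mean zero.
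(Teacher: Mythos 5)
Your proposal is correct and follows essentially the same route as the paper: the paper computes the pathwise derivative of $\psi_D(P_\epsilon)$ directly for the combined difference $p_\epsilon(c\mid A=1,R=1,x)-p_\epsilon(c\mid A=0,R=1,x)$ and verifies the three components of $\varphi_D$ against it term by term, which is the same product-rule-plus-score-decomposition calculation you carry out on each of $\psi_D^1$ and $\psi_D^0$ before recombining via Lemma \ref{EIF_diff}. The only difference is organizational, and your candidate $\varphi_D^1-\varphi_D^0$ does collapse to the stated $\varphi_D(Z)$.
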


\begin{proof}[Proof of Lemma \ref{lemma: eif_denom}]
Our proof of this claim follows a similar structure to those above. We start by writing out the pathwise derivative $\left.\frac{\partial \psi_D(P_\epsilon)}{\partial \epsilon}\right\rvert_{\epsilon=0}$, where we have
\begin{equation*}
    \psi_D(P_\epsilon) = \int\sum_{r=0}^1\sum_{c=0}^1 c(1-r)\left[p_{\epsilon}(c|A=1, R=1, x)-p_{\epsilon}(c|A=0, R=1, x)\right]p_\epsilon(r|x)p_{\epsilon}(x)\mathrm{d}x.
\end{equation*}
Thus, 
\begin{align*}
    \frac{\partial \psi_D(P_\epsilon)}{\partial \epsilon} &= \int\sum_{r=0}^1\sum_{c=0}^1 c(1-r)\left[\frac{\partial(p_{\epsilon}(c|A=1, R=1, x)-p_{\epsilon}(c|A=0, R=1, x))}{\partial \epsilon}\right]p_\epsilon(r|x)p_{\epsilon}(x)\mathrm{d}x\\
    &\hspace{1cm} + \int\sum_{r=0}^1\sum_{c=0}^1 c(1-r)\left[p_{\epsilon}(c|A=1, R=1, x)-p_{\epsilon}(c|A=0, R=1, x)\right]\frac{\partial p_\epsilon(r|x)}{\partial \epsilon}p_{\epsilon}(x)\mathrm{d}x\\
    &\hspace{1cm} + \int\sum_{r=0}^1\sum_{c=0}^1 c(1-r)\left[p_{\epsilon}(c|A=1, R=1, x)-p_{\epsilon}(c|A=0, R=1, x)\right]p_\epsilon(r|x)\frac{\partial p_{\epsilon}(x)}{\partial \epsilon}\mathrm{d}x.
\end{align*}
Evaluating at the true model:
\begin{align}
    &\hspace{-0.5cm}\left.\frac{\partial \psi_D(P_\epsilon)}{\partial \epsilon}\right\rvert_{\epsilon=0} \nonumber \\
    &\hspace{-0.5cm}= \int\sum_{r=0}^1\sum_{c=0}^1 c(1-r)\left[S_\epsilon(c|A=1, R=1, x)-S_\epsilon(c|A=0, R=1, x)\right]p_\epsilon(r|x)p_{\epsilon}(x)\mathrm{d}x \nonumber\\
    &+ \int\sum_{r=0}^1\sum_{c=0}^1 c(1-r)\left[p_{\epsilon}(c|A=1, R=1, x)-p_{\epsilon}(c|A=0, R=1, x)\right]S_\epsilon(r|x)p_{\epsilon}(x)\mathrm{d}x \nonumber\\
    & + \int\sum_{r=0}^1\sum_{c=0}^1 c(1-r)\left[p_{\epsilon}(c|A=1, R=1, x)-p_{\epsilon}(c|A=0, R=1, x)\right]p_\epsilon(r|x)S_\epsilon(x)\mathrm{d}x. \label{eq: pathwise_deriv_denom}
\end{align}
As above, we now turn to $E[\varphi_D(Z)S_\epsilon(Z)]$ and evaluate each component of our putative EIF $\varphi_D(Z)$ in turn. 

\subsection*{Obtaining the 1\ts{st} component of (\ref{eq: pathwise_deriv_denom})}
Here, we consider
\begin{equation*}
    E\left\{\left[\frac{AR\left\{C-p_1(X)\right\}}{\pi(X)\rho(X)}-\frac{(1-A)R\left\{C-p_0(X)\right\}}{(1-\pi(X))\rho(X)}\right]\left\{1-\rho(X)\right\}S_\epsilon(Z)\right\}.
\end{equation*}
We decompose 
\begin{equation*}
    S_\epsilon(Z) = S_\epsilon(C|A, R, X)+S_\epsilon(Y|C, A, R, X)+S_\epsilon(A, R, X).
\end{equation*}
Evaluating the first component, we have
\begin{align*}
    &E\left\{\left[\frac{AR\left\{C-p_1(X)\right\}}{\pi(X)\rho(X)}-\frac{(1-A)R\left\{C-p_0(X)\right\}}{(1-\pi(X))\rho(X)}\right]\left\{1-\rho(X)\right\}S_\epsilon(C|A, R, X)\right\}\\
    &= E\left\{\left[\frac{AR\left\{E[(C-p_1(X))S_\epsilon(C|A, R, X)|A, R, X]\right\}}{\pi(X)\rho(X)}\right.\right.\\
    &\hspace{1.5cm}\left.\left.-\frac{(1-A)R\left\{E[(C-p_0(X))S_\epsilon(C|A, R, X)|A, R, X]\right\}}{(1-\pi(X))\rho(X)}\right]\left\{1-\rho(X)\right\}\right\}\\
    &= E\left\{\left[\left\{E[(C-p_1(X))S_\epsilon(C|A=1, R=1, X)|A=1, R=1, X]\right\}\right.\right.\\
    &\hspace{1cm}-\left.\left.\left\{E[(C-p_0(X))S_\epsilon(C|A=0, R=1, X)|A=0, R=1, X]\right\}\right]\left\{1-\rho(X)\right\}\right\}\\
    &= E\{[\{E[CS_\epsilon(C|A=1, R=1, X)|A=1, R=1, X]\}\\
    &\hspace{1cm}-\{E[CS_\epsilon(C|A=0, R=1, X)|A=0, R=1, X]\}]\{1-\rho(X)\}\}-0+0
\end{align*}
which is the the 1\ts{st} component of (\ref{eq: pathwise_deriv_denom}).

Evaluating the second component of the score:
\begin{align*}
    &E\left\{\left[\frac{AR\left\{C-p_1(X)\right\}}{\pi(X)\rho(X)}-\frac{(1-A)R\left\{C-p_0(X)\right\}}{(1-\pi(X))\rho(X)}\right]\left\{1-\rho(X)\right\}S_\epsilon(Y|C, A, R, X)\right\}\\
    &= E\left\{\left[\frac{AR\left\{C-p_1(X)\right\}}{\pi(X)\rho(X)}-\frac{(1-A)R\left\{C-p_0(X)\right\}}{(1-\pi(X))\rho(X)}\right]\left\{1-\rho(X)\right\}E[S_\epsilon(Y|C, A, R, X)|C, A, R, X]\right\}\\
    &= 0.
\end{align*}
Evaluating the third component of the score:
\begin{align*}
    &E\left\{\left[\frac{AR\left\{C-p_1(X)\right\}}{\pi(X)\rho(X)}-\frac{(1-A)R\left\{C-p_0(X)\right\}}{(1-\pi(X))\rho(X)}\right]\left\{1-\rho(X)\right\}S_\epsilon( A, R, X)\right\}\\
    &= E\left\{\left[\frac{AR\left\{C-p_1(X)\right\}}{\pi(X)\rho(X)}\right]\left\{1-\rho(X)\right\}S_\epsilon( A, R, X)\right\} \\
    &\hspace{0.5cm} -E\left\{\left[\frac{(1-A)R\left\{C-p_0(X)\right\}}{(1-\pi(X))\rho(X)}\right]\left\{1-\rho(X)\right\}S_\epsilon( A, R, X)\right\}\\
    &= E\left\{\left[\frac{AR\left\{E[C-p_1(X)|A=1, R=1, X]\right\}}{\pi(X)\rho(X)}\right]\left\{1-\rho(X)\right\}S_\epsilon( A, R, X)\right\}\\
    &\hspace{0.5cm}-E\left\{\left[\frac{(1-A)R\left\{E[C-p_0(X), A=0, R=1, X]\right\}}{(1-\pi(X))\rho(X)}\right]\left\{1-\rho(X)\right\}S_\epsilon( A, R, X)\right\}\\
    &= 0.
\end{align*}

\subsection*{Obtaining the 2\ts{nd} component of (\ref{eq: pathwise_deriv_denom})}
Now we turn to 
\begin{equation*}
    E\left\{\left\{p_1(X)-p_0(X)\right\}\left[(1-R)-\left\{1-\rho(X)\right\}\right]S_\epsilon(Z)\right\}
\end{equation*}
and decompose $S_\epsilon(Z) = S_\epsilon(R|X)+S_\epsilon(Y, C, A|R, X)+S_\epsilon(X)$. Considering each term separately, we first evaluate
\begin{align*}
    &E\left[\left\{p_1(X)-p_0(X)\right\}\left[(1-R)-\left\{1-\rho(X)\right\}\right]S_\epsilon(Y, C, A|R, X)\right]\\
    &= E\left[\left\{p_1(X)-p_0(X)\right\}\left[(1-R)-\left\{1-\rho(X)\right\}\right]E[S_\epsilon(Y, C, A|R, X)|R, X]\right]\\
    &= 0.
\end{align*}
Next, we have
\begin{align*}
    &E\left[\left\{p_1(X)-p_0(X)\right\}\left[(1-R)-\left\{1-\rho(X)\right\}\right]S_\epsilon(X)\right]\\
    &= E\left[\left\{p_1(X)-p_0(X)\right\}S_\epsilon(X)E\left[\left[(1-R)-\left\{1-\rho(X)\right\}\right]|X\right]\right]\\
    &= 0.
\end{align*}
Finally, 
\begin{align*}
    &E\left[\left\{p_1(X)-p_0(X)\right\}\left[(1-R)-\left\{1-\rho(X)\right\}\right]S_\epsilon(R|X)\right]\\
    &= E\left[\left\{p_1(X)-p_0(X)\right\}(1-R)S_\epsilon(R|X)\right]-E\left[\left\{p_1(X)-p_0(X)\right\}\left[\left\{1-\rho(X)\right\}\right]S_\epsilon(R|X)\right]\\
    &= E\left[\left\{p_1(X)-p_0(X)\right\}(1-R)S_\epsilon(R|X)\right]-E\left[\left\{p_1(X)-p_0(X)\right\}\left[\left\{1-\rho(X)\right\}\right]E[S_\epsilon(R|X)|X]\right]\\
    &= E\left[\left\{p_1(X)-p_0(X)\right\}(1-R)S_\epsilon(R|X)\right]-0\\
    &= E\left[\left\{p_1(X)-p_0(X)\right\}E[(1-R)S_\epsilon(R|X)|X]\right],
\end{align*}
which is the 2\ts{nd} component of (\ref{eq: pathwise_deriv_denom}).

\subsection*{Obtaining the 3\ts{rd} component (\ref{eq: pathwise_deriv_denom})}
We focus now on
\begin{equation*}
    E[\left(\left\{p_1(X)-p_0(X)\right\}\left\{1-\rho(X)\right\}-\psi_D(P)\right)S_\epsilon(Z)],
\end{equation*}
and, noting that $S_\epsilon(Z)=S_\epsilon(X)+S_\epsilon(Y, C, A, R|X)$, we have
\begin{align*}
& E[(\{p_1(X)-p_0(X)\}\{1-\rho(X)\}-\psi_D(P))S_\epsilon(Z)] \\
&= E[(\{p_1(X)-p_0(X)\}\{1-\rho(X)\}-\psi_D(P))S_\epsilon(Y, C, A, R|X)]\\
&\hspace{1cm} + E[(\{p_1(X)-p_0(X)\}\{1-\rho(X)\}-\psi_D(P))S_\epsilon(X)]\\
&= E[(\{p_1(X)-p_0(X)\}\{1-\rho(X)\}\\
&\hspace{1cm}-\psi_D(P))E[S_\epsilon(Y, C, A, R|X)|X]]\\
&\hspace{1cm} + E[(\{p_1(X)-p_0(X)\}\{1-\rho(X)\}-\psi_D(P))S_\epsilon(X)]\\
&= 0 + E[(\{p_1(X)-p_0(X)\}\{1-\rho(X)\}-\psi_D(P))S_\epsilon(X)]\\
&= E[(\{p_1(X)-p_0(X)\}\{1-\rho(X)\}-\psi_D(P))S_\epsilon(X)]
\end{align*}
which is the 3\ts{rd} component of (\ref{eq: pathwise_deriv_denom}).
\end{proof}

\noindent Finally, we bring these lemmas together to prove the following result:
\begin{lemma}[EIF of $\psi^0_{1, 10}$]
\label{eif_lemma_1}
The EIF for $\psi^0_{1, 10}$ is given by 
\begin{equation}
 \varphi_{\psi^0_{1, 10}}=\frac{\phi^0_{1, 10}}{E\left[e_{10}(X)\left\{1-\rho(X)\right\}\right]}-\psi^0_{1, 10}\left(\frac{\left[\psi_{C_{1, 1}}-\psi_{C_{0, 1}}\right]\left\{1-\rho(X)\right\}+e_{10}(X)\psi_{1-R}}{E\left[e_{10}(X)(1-\rho(X))\right]}\right)
 \end{equation}
 where the additional notation is as defined in the main manuscript.
 \end{lemma}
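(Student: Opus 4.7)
The plan is to assemble the EIF for $\psi^0_{1,10}$ directly from the three component EIFs already derived, using the ratio rule (Lemma \ref{eif_fraction}) and the difference rule (Lemma \ref{EIF_diff}). Since $\psi^0_{1,10}$ has been written as a fraction whose numerator is itself a difference, i.e.\
\[
\psi^0_{1, 10} \;=\; \frac{\psi^1_N(P)-\psi^0_N(P)}{\psi_D(P)},
\]
the structural form of $\varphi_{\psi^0_{1,10}}$ is immediate; the work is in collecting the resulting terms into the compact notation used in the main manuscript.

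First I would invoke Lemma \ref{EIF_diff} to conclude that the EIF of the numerator parameter $\psi^1_N(P)-\psi^0_N(P)$ is $\varphi^1_N(Z)-\varphi^0_N(Z)$, where $\varphi^1_N$ and $\varphi^0_N$ are the EIFs obtained in Lemmas \ref{lemma: eif_1_proof} and \ref{lemma: eif_0_proof}, respectively. Next, I would apply Lemma \ref{eif_fraction} with $N=\psi^1_N-\psi^0_N$, $D=\psi_D$, and $R=\psi^0_{1,10}$, which yields
\[
\varphi_{\psi^0_{1,10}}(Z) \;=\; \frac{1}{\psi_D(P)}\bigl[\varphi^1_N(Z)-\varphi^0_N(Z)\bigr] \;-\; \frac{\psi^0_{1,10}}{\psi_D(P)}\,\varphi_D(Z),
\]
where $\varphi_D$ is the EIF from Lemma \ref{lemma: eif_denom}. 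At this stage, the EIF is fully determined; no further probabilistic arguments are required, only rewriting.

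The remaining work is to reorganize this expression to match the statement of the lemma. I would substitute $\psi_D(P) = E[e_{10}(X)\{1-\rho(X)\}]$, where $e_{10}(X)=p_1(X)-p_0(X)$, so that the prefactor on $\varphi^1_N-\varphi^0_N$ becomes $1/E[e_{10}(X)\{1-\rho(X)\}]$. I would then identify the quantity $\phi^0_{1,10}$ from the main manuscript with $\varphi^1_N(Z)-\varphi^0_N(Z)$: this is straightforward because, from Lemmas \ref{lemma: eif_1_proof} and \ref{lemma: eif_0_proof}, the corresponding $\mu_{11}(X)$ and $\mu_{00}(X)$-style terms combine into the $\mu_{11}$-weighted contrast that defines $\phi^0_{1,10}$, with the $p_1$ and $p_0$ pieces in the score corrections pairing up into $e_{10}(X)$ factors. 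Finally, I would rewrite $\varphi_D(Z)$ using the notation of the main paper, so that the bracketed correction involving $p_a(X)$ and $C$ becomes $\psi_{C_{1,1}}-\psi_{C_{0,1}}$, and the term in $(1-R)-\{1-\rho(X)\}$ multiplied by $e_{10}(X)$ becomes $e_{10}(X)\psi_{1-R}$, giving the displayed right-hand side.

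The only real obstacle is bookkeeping: ensuring that the $p_1(X)$, $p_0(X)$, $\pi(X)$, $1-\pi(X)$, $\rho(X)$, and $1-\rho(X)$ weights appearing in the four correction terms of $\varphi^1_N$ and $\varphi^0_N$ combine cleanly with those in $\varphi_D$ so that the $e_{10}(X)$ factor emerges in the outcome-model correction and the $\psi_{C_{1,1}}-\psi_{C_{0,1}}$ factor emerges in the compliance-model correction. Everything else is a mechanical consequence of Lemmas \ref{eif_fraction}, \ref{EIF_diff}, \ref{lemma: eif_1_proof}, \ref{lemma: eif_0_proof}, and \ref{lemma: eif_denom}.
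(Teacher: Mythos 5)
Your proposal follows essentially the same route as the paper's proof: write $\psi^0_{1,10}=(\psi^1_N-\psi^0_N)/\psi_D$, combine the component EIFs from Lemmas \ref{lemma: eif_1_proof}, \ref{lemma: eif_0_proof}, and \ref{lemma: eif_denom} via the difference rule (Lemma \ref{EIF_diff}) and the ratio rule (Lemma \ref{eif_fraction}), then rewrite in the compact notation. One small imprecision to fix when executing the plan: $\phi^0_{1,10}$ and $\lambda_{10}$ are not mean-zero, so they equal $\varphi^1_N-\varphi^0_N$ and $\varphi_D$ only up to the centering constants $\psi^1_N-\psi^0_N$ and $\psi_D$, and the displayed formula is obtained only after observing that these constants cancel because $\psi^0_{1,10}=(\psi^1_N-\psi^0_N)/\psi_D$ --- the one algebraic step the paper makes explicit and your write-up glosses over.
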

 \begin{proof}[Proof of Lemma \ref{eif_lemma_1}]
Recall that we write our parameter of interest as
\begin{align*}
    \psi(P) &= \frac{E\left[\left\{p_1(X)\right\}\left\{1-\rho(X)\right\}\mu_{11}(X)\right]-E\left[\left\{p_0(X)\right\}\left\{1-\rho(X)\right\}\mu_{11}(X)\right]}{(p_1-p_0)(1-\rho)}\\
    &=\frac{\psi^1_N(P)-\psi^0_N(P)}{\psi_D(P)}
\end{align*}
and have derived the efficient influence function for each component in Lemmas \ref{lemma: eif_1_proof}, \ref{lemma: eif_0_proof}, and \ref{lemma: eif_denom} as summarized in the following Table \ref{eif_table}.
\newpage
\begin{table}[h!]
\centering
\begin{tabular}{c|l}
    \textbf{Parameter} & \textbf{Efficient Influence Function}\\
    \hline \hline \\
     $\psi^1_N(P)$ & {$\begin{aligned}& \left\{p_1(X)\right\}\left\{1-\rho(X)\right\}\mu_{11}(X)-\psi^1_N(P)\\
    &\hspace{1cm}+ \left[\frac{AR\left\{C-p_1(X)\right\}}{\pi(X)\rho(X)}\right]\left\{1-\rho(X)\right\}\mu_{11}(X)\\
    &\hspace{1cm}+ \left\{p_1(X)\right\}\left[(1-R)-\left\{1-\rho(X)\right\}\right]\mu_{11}(X)\\
    &\hspace{1cm}+\left\{p_1(X)\right\}\left\{1-\rho(X)\right\}\left[\left\{\frac{AR[YC-p_1(X)\mu_{11}(X)]-AR[C-p_1(X)]\mu_{11}(X)}{\pi(X)\rho(X)p_1(X)}\right\}\right]
     \end{aligned}$} \\ \\
     \hline\\
    $\psi^0_N(P)$ & {$\begin{aligned}
&\left\{p_0(X)\right\}\left\{1-\rho(X)\right\}\mu_{11}(X)-\psi^0_N(P)\\
    &\hspace{1cm}+ \left[\frac{(1-A)R\left\{C-p_0(X)\right\}}{(1-\pi(X))\rho(X)}\right]\left\{1-\rho(X)\right\}\mu_{11}(X)\\
    &\hspace{1cm}+ \left\{p_0(X)\right\}\left[(1-R)-\left\{1-\rho(X)\right\}\right]\mu_{11}(X)\\
    &\hspace{1cm}+\left\{p_0(X)\right\}\left\{1-\rho(X)\right\}\left[\left\{\frac{AR[YC-p_1(X)\mu_{11}(X)]-AR[C-p_1(X)]\mu_{11}(X)}{\pi(X)\rho(X)p_1(X)}\right\}\right]
\end{aligned}$} \\ \\
\hline\\
    $\psi_D(P)$ & {$\begin{aligned}
     &\left\{p_1(X)-p_0(X)\right\}\left\{1-\rho(X)\right\}-\psi_D(P)\\
    &\hspace{1cm}+ \left[\frac{AR\left\{C-p_1(X)\right\}}{\pi(X)\rho(X)}-\frac{(1-A)R\left\{C-p_0(X)\right\}}{(1-\pi(X))\rho(X)}\right]\left\{1-\rho(X)\right\}\\
    &\hspace{1cm}+ \left\{p_1(X)-p_0(X)\right\}\left[(1-R)-\left\{1-\rho(X)\right\}\right]
\end{aligned}$}
\end{tabular}
\caption{EIFs for each component of $\psi^0_{1, 10}$.}
\label{eif_table}
\end{table}
\newpage
\noindent Before applying Lemmas \ref{lemma: eif_1_proof}, \ref{lemma: eif_0_proof}, and \ref{lemma: eif_denom} to write down the entire EIF for $\psi$, we re-write influence functions as summarized in Table \ref{eif_table} in a manner analogous to \citet{jiang_multiply_2022} to facilitate comparisons between the EIF in our setting and in theirs, which does not incorporate a separate target population. 

\subsection*{Rewriting the influence functions as in \citet{jiang_multiply_2022}}

We can rewrite the EIFs above in a manner similar to \citet{jiang_multiply_2022} using quantities $\psi_{f(Y_{a, r}, C_{a, r}, X)}$, $\psi_{f(R)}$, and $\phi^0_{0, 10}$ defined in the main manuscript.

\noindent For example, we can see in Table \ref{eif_table} that the EIF of $\psi^1_N(P)$ is
\begin{align*}
    \varphi^1_N(Z) & = p_1(X)\left\{1-\rho(X)\right\}\mu_{11}(X)-\psi^1_N(P)\\
    &\hspace{1cm}+ \left[\frac{AR\left\{C-p_1(X)\right\}}{\pi(X)\rho(X)}\right]\left\{1-\rho(X)\right\}\mu_{11}(X)\\
    &\hspace{1cm}+ \left\{p_1(X)\right\}\left[(1-R)-\left\{1-\rho(X)\right\}\right]\mu_{11}(X)\\
    &\hspace{1cm}+\left\{p_1(X)\right\}\left\{1-\rho(X)\right\}\left[\left\{\frac{AR[YC-p_1(X)\mu_{11}(X)]-AR[C-p_1(X)]\mu_{11}(X)}{\pi(X)\rho(X)p_1(X)}\right\}\right].
\end{align*}
Rewriting this using the additional notation gives
\begin{align*}
    \varphi^1_N(P) & = \left\{p_1(X)\right\}\left\{1-\rho(X)\right\}\mu_{11}(X)-\psi^1_N(P)\\
    &\hspace{1cm}+ \left(\psi_{C_{1, 1}}-p_1(X)\right)\left\{1-\rho(X)\right\}\mu_{11}(X)\\
    &\hspace{1cm}+ \left\{p_1(X)\right\}(\psi_{1-R})\mu_{11}(X)\\
    &\hspace{1cm}+\left\{p_1(X)\right\}\left\{1-\rho(X)\right\}\left[\left\{\frac{\psi_{Y_{1, 1}, C_{1, 1}}-\psi_{C_{1, 1}}\mu_{11}(X)}{p_1(X)}\right\}\right].
\end{align*}
Combining this expression for $\varphi^1_N$ with an analogous one for $\varphi^0_N$, we have that 
\begin{align}
    & \varphi^1_N(Z)-\varphi^0_N(Z) \nonumber\\
    &= \left\{p_1(X)-p_0(X)\right\}\left\{1-\rho(X)\right\}\mu_{11}(X)-\left(\psi^1_N-\psi^0_N\right)\nonumber\\
    &\hspace{1cm}+\left\{\psi_{C_{1, 1}}-\psi_{C_{0, 1}}-p_1(X)+p_0(X)\right\}\left\{1-\rho(X)\right\}\mu_{11}(X)\nonumber\\
    &\hspace{1cm}+\left\{p_1(X)-p_0(X)\right\}(\psi_{1-R})\mu_{11}(X)\nonumber\\
    &\hspace{1cm}+\left\{p_1(X)-p_0(X)\right\}\left\{1-\rho(X)\right\}\left\{\frac{\psi_{Y_{1, 1}, C_{1, 1}}-\psi_{C_{1, 1}}\mu_{11}(X)}{p_1(X)}\right\}\nonumber\\
    &= \left\{\psi_{C_{1, 1}}-\psi_{C_{0, 1}}\right\}\left\{1-\rho(X)\right\}\mu_{11}(X)-(\psi^1_N-\psi^0_N)\nonumber\\
    &\hspace{1cm}+\left\{1-\frac{p_0(X)}{p_1(X)}\right\}\left\{1-\rho(X)\right\}\left\{\psi_{Y_{1, 1}, C_{1, 1}}-\psi_{C_{1, 1}}\mu_{11}(X)\right\}\nonumber\\
    &\hspace{1cm}+e_{10}(X)(\psi_{1-R})\mu_{11}(X)\nonumber\\
    &= \frac{e_{10}(X)}{p_1(X)}\left\{1-\rho(X)\right\}\psi_{Y_{1, 1}C_{1, 1}}-(\psi^1_N-\psi^0_N)-\mu_{11}(X)(1-\rho(X))\left\{\psi_{C_{0, 1}}-\frac{p_0(X)}{p_1(X)}\psi_{C_{1, 1}}\right\}\nonumber\\
    &\hspace{1cm}+e_{10}(X)(\psi_{1-R})\mu_{11}(X)\nonumber\\
    &= \phi^0_{1, 10}-(\psi^1_N-\psi^0_N), \label{phi_eqn}
\end{align}
That, as stated in the main text, $\phi^0_{1, 10}$ is centered at $\psi^1_N-\psi^0_N$ follows directly from the fact that the EIF $\varphi^1_N(Z)-\varphi^0_N(Z)$ is mean-zero.  Turning to the denominator, $\psi_D$, recall that we wrote its EIF as
\begin{align*}
    \varphi_D(Z)&=\left\{p_1(X)-p_0(X)\right\}\left\{1-\rho(X)\right\}-\psi_D(P)\\
    &\hspace{1cm}+ \left[\frac{AR\left\{C-p_1(X)\right\}}{\pi(X)\rho(X)}-\frac{(1-A)R\left\{C-p_0(X)\right\}}{(1-\pi(X))\rho(X)}\right]\left\{1-\rho(X)\right\}\\
    &\hspace{1cm}+ \left\{p_1(X)-p_0(X)\right\}\left[(1-R)-\left\{1-\rho(X)\right\}\right].
\end{align*}

\noindent Again applying the alternative notation, we can rewrite this quantity as
\begin{align*}
    \varphi_D(Z) &= \left\{p_1(X)-p_0(X)\right\}\left\{1-\rho(X)\right\}-\psi_D(P)\\
    &\hspace{1cm}+ \left[\psi_{C_{1, 1}}-\psi_{C_{0, 1}}-(p_1(X)-p_0(X))\right]\left\{1-\rho(X)\right\}\\
    &\hspace{1cm}+ \left\{p_1(X)-p_0(X)\right\}(\psi_{1-R})\\
    &= \left[\psi_{C_{1, 1}}-\psi_{C_{0, 1}}\right]\left\{1-\rho(X)\right\}+\left\{p_1(X)-p_0(X)\right\}(\psi_{1-R})-\psi_D(P)\\
    &= \lambda_{10}.
\end{align*}
For notational simplicity, we let $N = \psi^1_N(P)-\psi^0_N(P)$ and $\varphi_N(Z)=\varphi^1_N(Z)-\varphi^0_N(Z)$. (We know by Lemma \ref{EIF_diff} that $\varphi_N(Z)$ is the EIF for $N$.) Now, applying Lemma \ref{eif_fraction}, we can rewrite the entire EIF for $\psi(P)$ as:
\begin{align*}
    \varphi(Z) &= \frac{\phi^0_{1, 10}-N}{\psi_D(P)} \\
    &\quad\;-\psi(P)\left(\frac{\lambda_{10}-\psi_D(P)}{\psi_D(P)}\right)\\
    &= \frac{\phi^0_{1, 10}}{\psi_D(P)}-\psi(P)\left(\frac{\lambda_{10}}{\psi_D(P)}\right).
\end{align*}
The above EIF is analogous to that of \citet{jiang_multiply_2022} on p. S22 in their Supplementary Material. 
\end{proof}

\subsection{Deriving the EIF of \texorpdfstring{$\psi^0_{0, 10}$}{TEXT}}

Above, we derived the EIF for one component of our overall parameter

\begin{equation*}
\psi^0_{10} = \frac{E\left\{\left[p_1(X)-p_0(X)\right](1-\rho(X))\left[\mu_{11}(X)-\mu_{00}(X)\right]\right\}}{E\left\{\left[p_1(X)-p_0(X)\right](1-\rho(X))\right\}}
\end{equation*}
given by 
\begin{equation*}
\psi^0_{1, 10} =  \frac{E\left[\left\{p_1(X)-p_0(X)\right\}\left\{1-\rho(X)\right\}\mu_{11}(X)\right]}{E\left[\left\{p_1(X)-p_0(X)\right\}\left\{(1-\rho(X)\right\}\right]}
\end{equation*}
which, under our identification assumptions, corresponds to the potential outcome under assignment to treatment for compliers in the target population. We analogously derive the EIF for 
\begin{equation*}
\psi^0_{0, 10} = \frac{E\left[\left\{p_1(X)-p_0(X)\right\}\left\{1-\rho(X)\right\}\mu_{00}(X)\right]}{E\left[\left\{p_1(X)-p_0(X)\right\}\left\{(1-\rho(X)\right\}\right]},
\end{equation*}
which is provided in Lemma \ref{eif_lemma_2}.
\begin{lemma}[EIF of $\psi^0_{0, 10}$]
\label{eif_lemma_2}
The EIF of  $\psi^0_{0, 10}$ is given by
\begin{align*}
   \varphi_{\psi^0_{0, 10}} = \frac{\phi^0_{0, 10}}{E\left[e_{10}(X)\left\{1-\rho(X)\right\}\right]}-\psi^0_{0, 10}\left(\frac{\lambda_{10}}{E\left[e_{10}(X)(1-\rho(X))\right]}\right). 
\end{align*}
\end{lemma}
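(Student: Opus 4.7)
The plan is to mirror the derivation in Lemma \ref{eif_lemma_1} step for step, exchanging the treated outcome regression $\mu_{11}$ for the untreated regression $\mu_{00}$. I would begin by decomposing
\begin{equation*}
\psi^0_{0, 10} = \frac{\psi^1_{N, \mu_{00}}(P) - \psi^0_{N, \mu_{00}}(P)}{\psi_D(P)},
\end{equation*}
where $\psi^1_{N, \mu_{00}}(P) = E[p_1(X)\{1-\rho(X)\}\mu_{00}(X)]$ and $\psi^0_{N, \mu_{00}}(P) = E[p_0(X)\{1-\rho(X)\}\mu_{00}(X)]$. Crucially, the denominator is identical to that of Lemma \ref{eif_lemma_1}, so its EIF is already provided by Lemma \ref{lemma: eif_denom}; only the two numerator EIFs require fresh derivations.

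Next I would obtain the EIFs of $\psi^1_{N, \mu_{00}}(P)$ and $\psi^0_{N, \mu_{00}}(P)$ by repeating the pathwise-derivative arguments of Lemmas \ref{lemma: eif_1_proof} and \ref{lemma: eif_0_proof} with two systematic substitutions. First, every factor of $\mu_{11}(X)$ becomes $\mu_{00}(X)$. Second, because $\mu_{00}$ is identified on the $\{A=0, C=0, R=1\}$ sub-population rather than $\{A=1, C=1, R=1\}$, the outcome-residual summand of each putative EIF becomes
\begin{equation*}
p_a(X)\{1-\rho(X)\}\cdot\frac{(1-A)R(1-C)\{Y-\mu_{00}(X)\}}{\{1-\pi(X)\}\rho(X)\{1-p_0(X)\}}\quad\text{for }a\in\{0,1\},
\end{equation*}
replacing the expression based on $ARC/[\pi(X)\rho(X)p_1(X)]$. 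The compliance-residual, $R$-residual, and $X$-marginal summands retain the same form as in Lemmas \ref{lemma: eif_1_proof} and \ref{lemma: eif_0_proof}, just with $\mu_{00}$ in place of $\mu_{11}$. Verification of the orthogonality condition (\ref{eq: if_condition}) against each component of the score decomposition $S_\epsilon(Z)=S_\epsilon(R)+S_\epsilon(X|R)+S_\epsilon(A|X, R)+S_\epsilon(C|A, X, R)+S_\epsilon(Y|A, R, C, X)$ then proceeds exactly as before, with iterated-expectation cancellations now keyed to the conditioning event $\{A=0, R=1, C=0\}$.

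With the component EIFs in hand, I would apply Lemma \ref{EIF_diff} to form the numerator's EIF as $\varphi^1_{N, \mu_{00}}(Z) - \varphi^0_{N, \mu_{00}}(Z)$, then apply Lemma \ref{eif_fraction} to combine it with $\varphi_D(Z)$. The final algebraic simplification is the direct analog of the one producing Equation~(\ref{phi_eqn}): the difference of the two outcome-residual summands yields the factor $\{p_1(X)-p_0(X)\}/\{1-p_0(X)\} = e_{10}(X)/\{1-p_0(X)\}$, which plays the role that $e_{10}(X)/p_1(X)$ played in the $\psi^0_{1, 10}$ derivation, and the remaining compliance-residual and $R$-residual terms aggregate into $\phi^0_{0, 10}$ as defined in the main manuscript. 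Because $\varphi_D$ is unchanged, the denominator contribution remains $\lambda_{10}$, delivering the stated form.

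The main obstacle is purely bookkeeping rather than any new probabilistic argument: one must track signs carefully when subtracting the $p_0$-indexed numerator EIF from the $p_1$-indexed one, ensure that the outcome-residual simplification uses the $\{1-p_0(X)\}$ denominator consistent with the $A=0$ identifying regression for $\mu_{00}$, and confirm that the compliance-residual combination reproduces precisely the $\phi^0_{0, 10}$ quantity of the main text. No conceptual ingredient beyond those already invoked in Lemmas \ref{lemma: eif_1_proof}, \ref{lemma: eif_0_proof}, \ref{lemma: eif_denom}, \ref{EIF_diff}, and \ref{eif_fraction} is required.
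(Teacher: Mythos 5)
Your proposal is correct and follows essentially the same route as the paper, whose proof of this lemma is a one-line remark that the calculation repeats that of $\psi^0_{1,10}$ with $\mu_{00}$ in place of $\mu_{11}$. You are in fact slightly more precise than the paper: you correctly note that the outcome-residual summand's inverse-probability weight must also change from $ARC/[\pi(X)\rho(X)p_1(X)]$ to $(1-A)R(1-C)/[\{1-\pi(X)\}\rho(X)\{1-p_0(X)\}]$ because $\mu_{00}$ is identified on the $\{A=0,C=0,R=1\}$ subpopulation, which is consistent with the form of $\hat{\phi}^*_{0,10}$ used later in Lemma \ref{lemma: consistency_proof_2}.
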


\begin{proof}[Proof of Lemma \ref{eif_lemma_2}]
The calculations necessary to derive $\varphi_{\psi^0_{0, 10}}$ are almost identical as those used to derive the EIF of $\psi^0_{1, 10}$, where the only distinction is the inclusion of $\mu_{00}(X)$ in lieu of $\mu_{11}(X)$.
\end{proof}

\subsection{Deriving the EIF of \texorpdfstring{$\psi^0_{10}$}{TEXT}}

\begin{proof}[Proof of Theorem \ref{eif_thm}]

Summarizing previous calculations, we write 
\begin{align*}
\psi^0_{10} &= \frac{E\left\{\left[p_1(X)-p_0(X)\right](1-\rho(X))\left[\mu_{11}(X)-\mu_{00}(X)\right]\right\}}{E\left\{\left[p_1(X)-p_0(X)\right](1-\rho(X))\right\}}\\
&= \psi^0_{1, 10}-\psi^0_{0, 10}.
\end{align*}
We showed above that the EIF for $\psi^0_{1, 10}$ is given by 
\begin{equation*}
 \varphi_{\psi^0_{1, 10}}=\frac{\phi^0_{1, 10}}{E\left[e_{10}(X)\left\{1-\rho(X)\right\}\right]}-\psi^0_{1, 10}\left(\frac{\lambda_{10}}{E\left[e_{10}(X)(1-\rho(X))\right]}\right)
\end{equation*}
and that the EIF for $\psi^0_{0, 10}$ is
\begin{equation*}
\varphi_{\psi^0_{0, 10}} = \frac{\phi^0_{0, 10}}{E\left[e_{10}(X)\left\{1-\rho(X)\right\}\right]}-\psi^0_{0, 10}\left(\frac{\lambda_{10}}{E\left[e_{10}(X)(1-\rho(X))\right]}\right)
\end{equation*}
where the additional notation $\psi_{f(R)}=f(R)-E[f(R)|X]$ has been applied. We can now apply Lemma \ref{EIF_diff} to derive the EIF for the entire parameter $\psi$ as $\varphi_{\psi^0_{1, 10}}-\varphi_{\psi^0_{0, 10}}$:

\begin{equation*}
\varphi_{\psi^0_{10}}=\frac{\phi^0_{1, 10}-\phi^0_{0, 10}-\psi^0_{10}\lambda_{10}}{E\left[e_{10}(X)\left\{1-\rho(X)\right\}\right]}
\end{equation*}
as stated in Theorem \ref{eif_thm} of the main text.

\end{proof}

\section{Proofs of Asymptotic Results}

\subsection{Consistency (Theorem \ref{consistency_theorem})}\label{consistency_proofs}

In this section we ultimately prove Theorem \ref{consistency_theorem}, which summarizes the conditions under which our EIF-based estimator 

\begin{equation*}
\hat{\tau}_{\text{EIF}} = \frac{\mathbb{P}_n[\hat{\phi}^*_{1, 10}-\hat{\phi}^*_{0, 10}]}{\mathbb{P}_n\left(\hat{\lambda}_{10}\right)}
\end{equation*}
is consistent for
\begin{equation*}
\psi^0_{10} = \frac{E\left\{\left[p_1(X)-p_0(X)\right](1-\rho(X))\left[\mu_{11}(X)-\mu_{00}(X)\right]\right\}}{E\left\{\left[p_1(X)-p_0(X)\right](1-\rho(X))\right\}}.
\end{equation*}
Relating $\hat{\tau}_{\text{EIF}}$ to $\psi^0_{10}$, we see that we can break up our calculation of the bias into comparisons of:
\begin{enumerate}
\item $\mathbb{P}_n[\hat{\phi}^*_{1, 10}]$ vs. $E\left\{\left[p_1(X)-p_0(X)\right](1-\rho(X))\left[\mu_{11}(X)\right]\right\}$, 
\item $\mathbb{P}_n[\hat{\phi}^*_{0, 10}]$ vs. $E\left\{\left[p_1(X)-p_0(X)\right](1-\rho(X))\left[\mu_{00}(X)\right]\right\}$, and
\item $\mathbb{P}_n\left[\hat{\lambda}_{10}\right]$ vs. $E\left\{\left[p_1(X)-p_0(X)\right](1-\rho(X))\right\}$
\end{enumerate}
By the continuous mapping theorem (see, e.g., Theorem 5.9 of \citet{boos_stefanski}), establishing consistency of the above three estimators to their desired limits does the same for $\hat{\tau}_{EIF}$. We establish each such consistency result with separate lemmas before proving Theorem \ref{consistency_theorem} directly. For simplicity, we assume below that the various nuisance parameters were estimated using a separate sample than the one whose empirical measure is denoted by $\mathbb{P}_n$. In practice, we suggest proceeding via $K>2$-fold cross-fitting. Our simplification is equivalent to cross-fitting with $K=2$ folds. Also, as in the statement of Theorem~\ref{consistency_theorem}, for any nuisance parameter $\theta$ estimated from a sample of size $n$ by $\hat{\theta}$, we let $\tilde{\theta}$ denote the probability limit of $\hat{\theta}$, i.e., $\hat{\theta}$ is consistent for $\tilde{\theta}$. Of course, it may or may not be the case that $\tilde{\theta}=\theta$.

\begin{lemma}[Consistency of $\mathbb{P}_n(\hat{\phi}^*_{1, 10})$]
\label{lemma: consistency_proof_1}
Under the conditions given in Theorem \ref{consistency_theorem}, we have that $\mathbb{P}_n[\hat{\phi}^*_{1, 10}]$ is consistent for $E\left\{\left[p_1(X)-p_0(X)\right](1-\rho(X))\left[\mu_{11}(X)\right]\right\}$.
\end{lemma}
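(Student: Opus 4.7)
The plan is to decompose the error as
$$\mathbb{P}_n[\hat{\phi}^*_{1, 10}] - \theta^* \;=\; (\mathbb{P}_n - P)[\hat{\phi}^*_{1, 10}] \;+\; \bigl(P[\hat{\phi}^*_{1, 10}] - \theta^*\bigr),$$
where $\theta^* = E\{[p_1(X)-p_0(X)](1-\rho(X))\mu_{11}(X)\}$, and to show that each summand is $o_p(1)$. I would exploit the sample-splitting simplification adopted in the preamble: the nuisance estimators are computed on a sample independent of the one defining $\mathbb{P}_n$, so conditioning on the training sample treats $\hat{\phi}^*_{1,10}$ as a deterministic function of $Z$.

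For the empirical-process term, conditional on the training sample the map $z\mapsto\hat{\phi}^*_{1,10}(z)$ is fixed. The positivity assumptions bound the inverse-weight factors $1/(\pi(X)\rho(X))$ and $1/p_1(X)$ appearing in the correction pieces of \eqref{eq: putative_eif}; combined with a mild uniform-boundedness condition on the nuisance estimators and on $Y$ (which I expect to be among the hypotheses of Theorem \ref{consistency_theorem}), this yields a bounded conditional second moment. A single application of Chebyshev's inequality conditional on the training sample then gives $(\mathbb{P}_n - P)[\hat{\phi}^*_{1, 10}] = o_p(1)$.

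The bias term is where multiple robustness enters. Let $\tilde{p}_a,\tilde{\rho},\tilde{\pi},\tilde{\mu}_{11}$ denote the probability limits of the corresponding nuisance estimators, and let $\phi^*_{1,10}(\tilde{\theta})$ be the version of $\hat{\phi}^*_{1,10}$ built from these limits. A dominated-convergence argument gives $P[\hat{\phi}^*_{1,10}] \to P[\phi^*_{1,10}(\tilde{\theta})]$, so it suffices to check $P[\phi^*_{1,10}(\tilde{\theta})] = \theta^*$ under the hypothesized conditions. My strategy is to reuse the algebra from the proof of Lemma \ref{lemma: eif_1_proof}: each cancellation that previously produced zero by iterated expectation (for example $E[C-p_1(X)\mid A{=}1,R{=}1,X]=0$) now produces instead a discrepancy such as $\tilde{p}_1(X)-p_1(X)$ multiplied by the remaining estimated factors. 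Repeating this for each of the three correction terms in \eqref{eq: putative_eif} and combining with the leading plug-in term $\tilde{p}_1(X)(1-\tilde{\rho}(X))\tilde{\mu}_{11}(X)$ expresses the remaining bias as a sum of bilinear error products in $(\tilde{p}_a-p_a)$, $(\tilde{\rho}-\rho)$, $(\tilde{\pi}-\pi)$, and $(\tilde{\mu}_{11}-\mu_{11})$. Under whichever subset-consistency conditions Theorem \ref{consistency_theorem} imposes, each such product contains at least one factor that converges to zero, so the bias vanishes.

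The main obstacle I anticipate is the bookkeeping for the bias expansion: identifying the correct bilinear structure requires carefully redoing the cancellations of Lemma \ref{lemma: eif_1_proof} with perturbed nuisances, without the luxury that conditional scores and centered residuals integrate to zero under the true law. Once the bias has been put into bilinear-product form, matching it to the multiple-robustness hypotheses of Theorem \ref{consistency_theorem} should be routine, but arriving at that form cleanly — and verifying that no ``cross'' terms linear in a single nuisance error remain — is where the substantive work lies.
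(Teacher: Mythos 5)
Your proposal follows essentially the same route as the paper: pass from $\mathbb{P}_n[\hat{\phi}^*_{1,10}]$ to the population expectation evaluated at the nuisance probability limits (the paper asserts this step directly, relying on the same sample-splitting simplification you invoke), and then show that the residual bias $E[\phi^*_{1,10}(\tilde{\theta})]-E[e_{10}(1-\rho)\mu_{11}]$ reduces, after redoing the iterated-expectation cancellations with perturbed nuisances, to a sum of products of nuisance-error factors, each of which vanishes under at least one of the consistency conditions of Theorem \ref{consistency_theorem}. The bilinear-product structure you anticipate is exactly what the paper's algebra in \eqref{eq: num_1_bias_1} through \eqref{eq: asymp_bias_num_1} delivers, so the proposal is correct and matches the paper's argument.
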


\begin{proof}[Proof of Lemma \ref{lemma: consistency_proof_1}]
\noindent Suppressing dependence on $X$, we follow the approach of \citet{jiang_multiply_2022} in writing $\mathbb{P}_n[\hat{\phi}^*_{1, 10}]$ as 
\begin{equation*}
\mathbb{P}_n\left[(1-\hat{\rho})\hat{e}_{10}\frac{A}{\hat{\pi}}\frac{R}{\hat{\rho}}\frac{C}{\hat{p}_1}\left\{Y-\hat{\mu}_{11}\right\}+(1-\hat{\rho})\hat{\mu}_{11}\left\{\hat{\psi}_{C_{1, 1}}-\hat{\psi}_{C_{0, 1}}\right\}+\hat{e}_{10}(\hat{\psi}_{1-R})\hat{\mu}_{11}\right].
\end{equation*}
Under our assumptions, this quantity is consistent for 

\begin{equation*}
E\left[(1-\tilde{\rho})\tilde{e}_{10}\frac{A}{\tilde{\pi}}\frac{R}{\tilde{\rho}}\frac{C}{\tilde{p}_1}\left\{Y-\tilde{\mu}_{11}\right\}+(1-\tilde{\rho})\tilde{\mu}_{11}\left\{\tilde{\psi}_{C_{1, 1}}-\tilde{\psi}_{C_{0, 1}}\right\}+\tilde{e}_{10}(\tilde{\psi}_{1-R})\tilde{\mu}_{11}\right].
\end{equation*}
Note, for instance, that 
\begin{align}
&E\left[(1-\tilde{\rho})\tilde{e}_{10}\frac{A}{\tilde{\pi}}\frac{R}{\tilde{\rho}}\frac{C}{\tilde{p}_1}\left\{Y-\tilde{\mu}_{11}\right\}\right]\nonumber\\
&= E\left[(1-\tilde{\rho}(X))\tilde{e}_{10}(X)\frac{A}{\tilde{\pi}(X)}\frac{R}{\tilde{\rho}(X)}\frac{C}{\tilde{p}_1(X)}\left\{Y-\tilde{\mu}_{11}(X)\right\}\right]\nonumber\\[6pt]
&= E\left[(1-\tilde{\rho}(X))\tilde{e}_{10}(X)\frac{A}{\tilde{\pi}(X)}\frac{R}{\tilde{\rho}(X)}\frac{C}{\tilde{p}_1(X)}Y\right]-E\left[(1-\tilde{\rho}(X))\tilde{e}_{10}(X)\frac{A}{\tilde{\pi}(X)}\frac{R}{\tilde{\rho}(X)}\frac{C}{\tilde{p}_1(X)}\tilde{\mu}_{11}(X)\right]\nonumber\\[6pt]
&= E\left[\frac{(1-\tilde{\rho}(X))\tilde{e}_{10}(X)}{\tilde{\pi}(X)\tilde{\rho}(X)\tilde{p}_1(X)}E\left[ARCY|X\right]\right]-E\left[\frac{(1-\tilde{\rho}(X))\tilde{e}_{10}(X)\tilde{\mu}_{11}(X)}{\tilde{\pi}(X)\tilde{\rho}(X)\tilde{p}_1(X)}E\left[ARC|X\right]\right]\nonumber\\[6pt]
&= E\left[\frac{(1-\tilde{\rho}(X))\tilde{e}_{10}(X)}{\tilde{\pi}(X)\tilde{\rho}(X)\tilde{p}_1(X)}E\left[A|X\right]E\left[R|X\right]E\left[C|X, A=1, R=1\right]E\left[Y|X, A=1, C=1, R=1\right]\right]\nonumber\\[6pt]
&\hspace{5mm}-E\left[\frac{(1-\tilde{\rho}(X))\tilde{e}_{10}(X)\tilde{\mu}_{11}(X)}{\tilde{\pi}(X)\tilde{\rho}(X)\tilde{p}_1(X)}E\left[A|X\right]E\left[R|X\right]E\left[C|X, A=1, R=1\right]\right]\nonumber\\[6pt]
&= E\left[\frac{(1-\tilde{\rho}(X))\tilde{e}_{10}(X)}{\tilde{\pi}(X)\tilde{\rho}(X)\tilde{p}_1(X)}\pi(X)\rho(X)p_1(X)\mu_{11}(X)\right]-E\left[\frac{(1-\tilde{\rho}(X))\tilde{e}_{10}(X)\tilde{\mu}_{11}(X)}{\tilde{\pi}(X)\tilde{\rho}(X)\tilde{p}_1(X)}\pi(X)\rho(X)p_1(X)\right]\nonumber\\[6pt]
&= E\left[(1-\tilde{\rho})\tilde{e}_{10}\frac{\pi}{\tilde{\pi}}\frac{\rho}{\tilde{\rho}}\frac{p_1}{\tilde{p}_1}\left\{\mu_{11}-\tilde{\mu}_{11}\right\}\right], \label{eq: bias_simplification_1}
\end{align}
where the third-to-last equality followed from repeated application of our conditional independence assumptions and iterated expectations. We apply analogous calculations below when replacing the expected value of indicator variables with their corresponding true nuisance parameter.
Thus, the asymptotic bias of $\mathbb{P}_n[\hat{\phi}^*_{1, 10}]$ is given by
\begin{align*}
&E\left[(1-\tilde{\rho})\tilde{e}_{10}\frac{A}{\tilde{\pi}}\frac{R}{\tilde{\rho}}\frac{C}{\tilde{p}_1}\left\{Y-\tilde{\mu}_{11}\right\}+(1-\tilde{\rho})\tilde{\mu}_{11}\left\{\tilde{\psi}_{C_{1, 1}}-\tilde{\psi}_{C_{0, 1}}\right\}+\tilde{e}_{10}(\tilde{\psi}_{1-R})\tilde{\mu}_{11}\right]-E[e_{10}(1-\rho)\mu_{11}]\\
&= E\left[(1-\tilde{\rho})\tilde{e}_{10}\frac{\pi}{\tilde{\pi}}\frac{\rho}{\tilde{\rho}}\frac{p_1}{\tilde{p}_1}\left\{\mu_{11}-\tilde{\mu}_{11}\right\}+(1-\tilde{\rho})\tilde{\mu}_{11}\left\{\tilde{\psi}_{C_{1, 1}}-\tilde{\psi}_{C_{0, 1}}\right\}+\tilde{e}_{10}(\tilde{\psi}_{1-R})\tilde{\mu}_{11}-e_{10}(1-\rho)\mu_{11}\right].
\end{align*}
We focus our calculations on the terms within the expectation and, therefore, omit the outer expectation operator below to save space. We can break down the terms in the inner expectation as follows:
\begin{align}
&(1-\tilde{\rho})\tilde{e}_{10}\frac{\pi}{\tilde{\pi}}\frac{\rho}{\tilde{\rho}}\frac{p_1}{\tilde{p}_1}\left\{\mu_{11}-\tilde{\mu}_{11}\right\}+(1-\tilde{\rho})\tilde{\mu}_{11}\left\{\tilde{\psi}_{C_{1, 1}}-\tilde{\psi}_{C_{0, 1}}\right\}+\tilde{e}_{10}(\tilde{\psi}_{1-R})\tilde{\mu}_{11}-e_{10}(1-\rho)\mu_{11} \nonumber\\
&= \underbrace{(1-\tilde{\rho})\tilde{p}_{1}\frac{\pi}{\tilde{\pi}}\frac{\rho}{\tilde{\rho}}\frac{p_1}{\tilde{p}_1}\left\{\mu_{11}-\tilde{\mu}_{11}\right\}+(1-\tilde{\rho})\tilde{\mu}_{11}\left\{\tilde{\psi}_{C_{1, 1}}\right\}+\tilde{p}_{1}(\tilde{\psi}_{1-R})\tilde{\mu}_{11}-p_{1}(1-\rho)\mu_{11}}_{\text{\circled{1}}} \nonumber\\
&\hspace{.25cm}- \underbrace{\left((1-\tilde{\rho})\tilde{p}_{0}\frac{\pi}{\tilde{\pi}}\frac{\rho}{\tilde{\rho}}\frac{p_1}{\tilde{p}_1}\left\{\mu_{11}-\tilde{\mu}_{11}\right\}+(1-\tilde{\rho})\tilde{\mu}_{11}\left\{\tilde{\psi}_{C_{0, 1}}\right\}+\tilde{p}_{0}(\tilde{\psi}_{1-R})\tilde{\mu}_{11}-p_{0}(1-\rho)\mu_{11}\right)}_{\text{\circled{2}}} \label{eq: break_down_bias_1}
\end{align}

\noindent Ultimately, we want to show that both sets of terms are equal to zero under the consistency conditions of Theorem \ref{consistency_theorem}. We start by considering only the ``$p_1$ terms", i.e. \circled{1}, in Equation (\ref{eq: break_down_bias_1}):
\begin{equation}
(1-\tilde{\rho})\tilde{p}_1\frac{\pi\rho p_1}{\tilde{\pi}\tilde{\rho}\tilde{p}_1}\left\{\mu_{11}-\tilde{\mu}_{11}\right\}+(1-\tilde{\rho})\tilde{\mu}_{11}\left\{\tilde{\psi}_{C_{1, 1}}\right\}+\tilde{p}_1\left[(1-\rho)-(1-\tilde{\rho})\right]\tilde{\mu}_{11}-(1-\rho)(p_1)\mu_{11}. \label{eq: num_1_bias_1_unsimp}
\end{equation}
Equation (\ref{eq: num_1_bias_1_unsimp}) can be simplified as:
\begin{align}
&(1-\tilde{\rho})\tilde{p}_1\frac{\pi\rho p_1}{\tilde{\pi}\tilde{\rho}\tilde{p}_1}\left\{\mu_{11}-\tilde{\mu}_{11}\right\}+(1-\tilde{\rho})\tilde{\mu}_{11}\left\{\tilde{\psi}_{C_{1, 1}}\right\}+\tilde{p}_1\left[(1-\rho)-(1-\tilde{\rho})\right]\tilde{\mu}_{11}-(1-\rho)(p_1)\mu_{11}\nonumber\\ \nonumber\\
&= (1-\tilde{\rho})(\tilde{p}_1)\left\{\frac{\pi\rho p_1[\mu_{11}-\tilde{\mu}_{11}]}{\tilde{\pi}\tilde{\rho}\tilde{p}_1}\right\}+(1-\tilde{\rho})\tilde{\mu}_{11}\left\{\frac{\pi\rho[p_1-\tilde{p}_1]}{\tilde{\pi}\tilde{\rho}}\right\}+\tilde{p}_1\tilde{\mu}_{11}\left[(1-\rho)-(1-\tilde{\rho})\right]\nonumber\\
&\hspace{0.65cm} + \tilde{p}_1\tilde{\mu}_{11}(1-\tilde{\rho})-p_1(1-\rho)(\mu_{11})\nonumber\\ \nonumber\\
&= \frac{\pi\rho p_1\left[(1-\tilde{\rho})\mu_{11}-(1-\tilde{\rho})\tilde{\mu}_{11}\right]}{\tilde{\pi}\tilde{\rho}} + \frac{\pi\rho\left[(1-\tilde{\rho})\tilde{\mu}_{11}p_1-(1-\tilde{\rho})\tilde{\mu}_{11}\tilde{p}_1\right]}{\tilde{\pi}\tilde{\rho}}+\left[\tilde{p}_1\tilde{\mu}_{11}(1-\rho)-\tilde{p}_1\tilde{\mu}_{11}(1-\tilde{\rho})\right]\nonumber\\
&\hspace{0.65cm} +\tilde{p}_1\tilde{\mu}_{11}(1-\tilde{\rho})-p_1(1-\rho)(\mu_{11})\nonumber\\ \nonumber\\
&= \frac{\pi\rho p_1 (1-\tilde{\rho})\mu_{11}}{\tilde{\pi}\tilde{\rho}}-\frac{\pi\rho(1-\tilde{\rho})\tilde{\mu}_{11}\tilde{p}_1}{\tilde{\pi}\tilde{\rho}}+\tilde{p}_1\tilde{\mu}_{11}(1-\rho)-p_1(1-\rho)\mu_{11}\nonumber\\ \nonumber\\
&= \frac{\pi\rho p_1 (1-\tilde{\rho})\mu_{11}-\pi\rho(1-\tilde{\rho})\tilde{\mu}_{11}\tilde{p}_1+\tilde{\pi}\tilde{\rho}\tilde{p}_1\tilde{\mu}_{11}(1-\rho)-\tilde{\pi}\tilde{\rho}p_1(1-\rho)\mu_{11}}{\tilde{\pi}\tilde{\rho}}\nonumber\\ \nonumber\\
&= \frac{\left[\pi\rho(1-\tilde{\rho})-\tilde{\pi}\tilde{\rho}(1-\rho)\right]\left[\mu_{11}p_1-\tilde{\mu}_{11}\tilde{p}_1\right]}{\tilde{\pi}\tilde{\rho}}. \label{eq: num_1_bias_1}
\end{align}
Note that the numerator of this term does indeed equal to zero under the consistency conditions of Theorem \ref{consistency_theorem}: the first term under conditions 1 or 2 and the second term under condition 3. Since only one such term needs to go to zero, the robustness in consistency property holds.\\
\noindent Now we turn to the ``$p_0$ terms", i.e. \circled{2}, referenced in Equation (\ref{eq: num_1_bias_1_unsimp}). Expanding this component out, we obtain:
\begin{align}
&(1-\tilde{\rho})\tilde{p}_0\frac{\pi\rho p_1}{\tilde{\pi}\tilde{\rho}\tilde{p}_1}\left\{\mu_{11}-\tilde{\mu}_{11}\right\}+(1-\tilde{\rho})\tilde{\mu}_{11}\left\{\tilde{\psi}_{C_{0, 1}}\right\}+\tilde{p}_0\left[(1-\rho)-(1-\tilde{\rho})\right]\tilde{\mu}_{11}-(1-\rho)(p_0)\mu_{11} \nonumber\\  \nonumber\\
&= (1-\tilde{\rho})\tilde{p}_0\left\{\frac{\pi\rho p_1[\mu_{11}-\tilde{\mu}_{11}]}{\tilde{\pi}\tilde{\rho}\tilde{p}_1}\right\}+(1-\tilde{\rho})\tilde{\mu}_{11}\left\{\frac{(1-\pi)\rho[p_0-\tilde{p}_0]}{(1-\tilde{\pi})\tilde{\rho}}\right\}+\tilde{p}_0\tilde{\mu}_{11}\left[(1-\rho)-(1-\tilde{\rho})\right] \nonumber\\
&\hspace{0.65cm} + \tilde{p}_0\tilde{\mu}_{11}(1-\tilde{\rho})-p_0(1-\rho)(\mu_{11})\nonumber\\ \nonumber\\
&= (1-\tilde{\rho})\tilde{p}_0\left\{\frac{\pi\rho p_1[\mu_{11}-\tilde{\mu}_{11}]}{\tilde{\pi}\tilde{\rho}\tilde{p}_1}\right\}+(1-\tilde{\rho})\tilde{\mu}_{11}\left\{\frac{(1-\pi)\rho[p_0-\tilde{p}_0]}{(1-\tilde{\pi})\tilde{\rho}}\right\}+\tilde{p}_0\tilde{\mu}_{11}(1-\rho)-p_0(1-\rho)\mu_{11} \nonumber\\
&= \underbrace{\tilde{p}_0\left\{\frac{\pi\rho p_1[\mu_{11}-\tilde{\mu}_{11}]}{\tilde{\pi}\tilde{\rho}\tilde{p}_1}\right\}+\tilde{\mu}_{11}\left\{\frac{(1-\pi)\rho[p_0-\tilde{p}_0]}{(1-\tilde{\pi})\tilde{\rho}}\right\}+\tilde{p}_0\tilde{\mu}_{11}-p_0\mu_{11}}_{\text{\circled{2a}}} \nonumber\\
&\hspace{.15cm}-\underbrace{\left(\tilde{\rho}\tilde{p}_0\left\{\frac{\pi\rho p_1[\mu_{11}-\tilde{\mu}_{11}]}{\tilde{\pi}\tilde{\rho}\tilde{p}_1}\right\}+\tilde{\rho}\tilde{\mu}_{11}\left\{\frac{(1-\pi)\rho[p_0-\tilde{p}_0]}{(1-\tilde{\pi})\tilde{\rho}}\right\}+\tilde{p}_0\tilde{\mu}_{11}\rho-p_0\rho\mu_{11}\right)}_{\text{\circled{2b}}}. \label{eq: num_1_bias_2_unsimp}
\end{align}

\noindent Turning to the terms in \circled{2a} of Equation (\ref{eq: num_1_bias_2_unsimp}), we have
\begin{align*}
&\tilde{p}_0\left\{\frac{\pi\rho p_1[\mu_{11}-\tilde{\mu}_{11}]}{\tilde{\pi}\tilde{\rho}\tilde{p}_1}\right\}+\tilde{\mu}_{11}\left\{\frac{(1-\pi)\rho[p_0-\tilde{p}_0]}{(1-\tilde{\pi})\tilde{\rho}}\right\}+\tilde{p}_0\tilde{\mu}_{11}-p_0\mu_{11}\\
&= \tilde{p}_0\left\{\frac{\pi\rho p_1}{\tilde{\pi}\tilde{\rho}\tilde{p}_1}-1\right\}[\mu_{11}-\tilde{\mu}_{11}]+\tilde{p}_0[\mu_{11}-\tilde{\mu}_{11}]\\
&\hspace{0.65cm} + \tilde{\mu}_{11}\left\{\frac{(1-\pi)\rho}{(1-\tilde{\pi})\tilde{\rho}}-1\right\}[p_0-\tilde{p}_0]+\tilde{\mu}_{11}[p_0-\tilde{p}_0]\\
&\hspace{0.65cm}+\tilde{p}_0\tilde{\mu}_{11}-p_0\mu_{11}\\
&= \tilde{p}_0\left\{\frac{\pi\rho p_1}{\tilde{\pi}\tilde{\rho}\tilde{p}_1}-1\right\}[\mu_{11}-\tilde{\mu}_{11}]+\tilde{\mu}_{11}\left\{\frac{(1-\pi)\rho}{(1-\tilde{\pi})\tilde{\rho}}-1\right\}[p_0-\tilde{p}_0]\\
&\hspace{0.65cm} + \tilde{p}_0\mu_{11}-\tilde{p}_0\tilde{\mu}_{11}+\tilde{\mu}_{11}p_0-\tilde{\mu}_{11}\tilde{p}_0+\tilde{p}_0\tilde{\mu}_{11}-p_0\mu_{11}\\
&= \tilde{p}_0\left\{\frac{\pi\rho p_1}{\tilde{\pi}\tilde{\rho}\tilde{p}_1}-1\right\}[\mu_{11}-\tilde{\mu}_{11}]+\tilde{\mu}_{11}\left\{\frac{(1-\pi)\rho}{(1-\tilde{\pi})\tilde{\rho}}-1\right\}[p_0-\tilde{p}_0]\\
&\hspace{0.65cm} + \tilde{p}_0\mu_{11}-\tilde{p}_0\tilde{\mu}_{11}+\tilde{\mu}_{11}p_0-p_0\mu_{11}\\
&= \tilde{p}_0\left\{\frac{\pi\rho p_1}{\tilde{\pi}\tilde{\rho}\tilde{p}_1}-1\right\}[\mu_{11}-\tilde{\mu}_{11}]+\tilde{\mu}_{11}\left\{\frac{(1-\pi)\rho}{(1-\tilde{\pi})\tilde{\rho}}-1\right\}[p_0-\tilde{p}_0] + (\tilde{p}_0-p_0)(\mu_{11}-\tilde{\mu}_{11}).
\end{align*}
As in the previous bias calculation, we can see that each of the three products above are zero under any one of the three conditions in Theorem \ref{consistency_theorem}. Now turning to the terms in \circled{2b} of Equation (\ref{eq: num_1_bias_2_unsimp}):
\begin{align*}
&\tilde{\rho}\tilde{p}_0\left\{\frac{\pi\rho p_1[\mu_{11}-\tilde{\mu}_{11}]}{\tilde{\pi}\tilde{\rho}\tilde{p}_1}\right\}+\tilde{\rho}\tilde{\mu}_{11}\left\{\frac{(1-\pi)\rho[p_0-\tilde{p}_0]}{(1-\tilde{\pi})\tilde{\rho}}\right\}+\rho\tilde{p}_0\tilde{\mu}_{11}-p_0\rho\mu_{11}\\
&= \tilde{p}_0\left\{\frac{\pi\rho p_1[\mu_{11}-\tilde{\mu}_{11}]}{\tilde{\pi}\tilde{p}_1}\right\}+\tilde{\mu}_{11}\left\{\frac{(1-\pi)\rho[p_0-\tilde{p}_0]}{(1-\tilde{\pi})}\right\}+\rho\tilde{p}_0\tilde{\mu}_{11}-p_0\rho\mu_{11}\\
&= \tilde{p}_0\rho\left\{\frac{\pi p_1}{\tilde{\pi}\tilde{p}_1}-1\right\}[\mu_{11}-\tilde{\mu}_{11}]+\tilde{p}_0\rho\mu_{11}-\tilde{p}_0\rho\tilde{\mu}_{11}+\tilde{\mu}_{11}\left\{\frac{(1-\pi)\rho[p_0-\tilde{p}_0]}{(1-\tilde{\pi})}\right\}+\rho\tilde{p}_0\tilde{\mu}_{11}-p_0\rho\mu_{11}\\
&= \tilde{p}_0\rho\left\{\frac{\pi p_1}{\tilde{\pi}\tilde{p}_1}-1\right\}[\mu_{11}-\tilde{\mu}_{11}]+\tilde{\mu}_{11}\rho\left\{\frac{(1-\pi)}{(1-\tilde{\pi})}-1\right\}(p_0-\tilde{p}_0)+\tilde{\mu}_{11}\rho p_0-\tilde{\mu}_{11}\rho\tilde{p}_0+\tilde{p}_0 \rho \mu_{11}-p_0\rho \mu_{11}\\
&= \tilde{p}_0\rho\left\{\frac{\pi p_1}{\tilde{\pi}\tilde{p}_1}-1\right\}[\mu_{11}-\tilde{\mu}_{11}]+\tilde{\mu}_{11}\rho\left\{\frac{(1-\pi)}{(1-\tilde{\pi})}-1\right\}(p_0-\tilde{p}_0)+\rho(\tilde{\mu}_{11}-\mu_{11})(p_0-\tilde{p}_0).
\end{align*}
\noindent We combine all of these results together to obtain the following expression for the asymptotic bias of $\mathbb{P}_n[\hat{\phi}^*_{1, 10}]$:
\begin{align}
&E\left[(1-\tilde{\rho})\tilde{e}_{10}\frac{\pi}{\tilde{\pi}}\frac{\rho}{\tilde{\rho}}\frac{p_1}{\tilde{p}_1}\left\{Y-\tilde{\mu}_{11}\right\}+(1-\tilde{\rho})\tilde{\mu}_{11}\left\{\tilde{\psi}_{C_{1, 1}}-\tilde{\psi}_{C_{0, 1}}\right\}+\tilde{e}_{10}(\tilde{\psi}_{1-R})\tilde{\mu}_{11}-e_{10}(1-\rho)\mu_{11}\right] \nonumber\\
&= E\left[\frac{\left[\pi\rho(1-\tilde{\rho})-\tilde{\pi}\tilde{\rho}(1-\rho)\right]\left[\mu_{11}p_1-\tilde{\mu}_{11}\tilde{p}_1\right]}{\tilde{\pi}\tilde{\rho}}\right] \nonumber\\
& \hspace{0.65cm} - E\left[\tilde{p}_0\left\{\frac{\pi\rho p_1}{\tilde{\pi}\tilde{\rho}\tilde{p}_1}-1\right\}[\mu_{11}-\tilde{\mu}_{11}]+\tilde{\mu}_{11}\left\{\frac{(1-\pi)\rho}{(1-\tilde{\pi})\tilde{\rho}}-1\right\}[p_0-\tilde{p}_0] + (\tilde{p}_0-p_0)(\mu_{11}-\tilde{\mu}_{11})\right] \nonumber\\
& \hspace{0.65cm} + E\left[\tilde{p}_0\rho\left\{\frac{\pi p_1}{\tilde{\pi}\tilde{p}_1}-1\right\}[\mu_{11}-\tilde{\mu}_{11}]+\tilde{\mu}_{11}\rho\left\{\frac{(1-\pi)}{(1-\tilde{\pi})}-1\right\}(p_0-\tilde{p}_0)+\rho(\tilde{\mu}_{11}-\mu_{11})(p_0-\tilde{p}_0)\right],\label{eq: asymp_bias_num_1}
\end{align}
each summand of which is zero under any one of the conditions in Theorem \ref{consistency_theorem}.
\end{proof}

\begin{lemma}[Consistency of $\mathbb{P}_n(\hat{\phi}^*_{0, 10})$]
\label{lemma: consistency_proof_2}
Under the conditions given in Theorem \ref{consistency_theorem}, we have that $\mathbb{P}_n[\hat{\phi}^*_{0, 10}]$ is consistent for $E\left\{\left[p_1(X)-p_0(X)\right](1-\rho(X))\left[\mu_{00}(X)\right]\right\}$.
\end{lemma}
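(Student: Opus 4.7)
The plan is to mirror the structure of the proof of Lemma \ref{lemma: consistency_proof_1}, substituting the control-arm analogues throughout. Specifically, from the EIF derivation for $\psi^0_{0, 10}$ in Lemma \ref{eif_lemma_2}, the estimator $\mathbb{P}_n[\hat{\phi}^*_{0, 10}]$ should take the form
\begin{equation*}
\mathbb{P}_n\left[(1-\hat{\rho})\hat{e}_{10}\frac{1-A}{1-\hat{\pi}}\frac{R}{\hat{\rho}}\frac{1-C}{1-\hat{p}_0}\left\{Y-\hat{\mu}_{00}\right\}+(1-\hat{\rho})\hat{\mu}_{00}\left\{\hat{\psi}_{C_{1, 1}}-\hat{\psi}_{C_{0, 1}}\right\}+\hat{e}_{10}(\hat{\psi}_{1-R})\hat{\mu}_{00}\right],
\end{equation*}
so the only structural changes from the $\mu_{11}$ case are $A \mapsto 1-A$, $C \mapsto 1-C$, $\pi \mapsto 1-\pi$, $p_1 \mapsto 1-p_0$ in the inverse-weighting factor, and $\mu_{11} \mapsto \mu_{00}$ throughout.

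First I would invoke the law of large numbers (conditional on the auxiliary sample used to construct the nuisance estimates, as in Lemma \ref{lemma: consistency_proof_1}) to replace $\mathbb{P}_n$ with an expectation and each hatted nuisance by its probability limit $\tilde{\cdot}$. Next I would simplify the inverse-weighted term using iterated expectations and Assumption \ref{stratexch} (together with the other conditional independence assumptions), exactly as in the derivation culminating in equation (\ref{eq: bias_simplification_1}), which reduces the inverse-weighted term to
\begin{equation*}
E\left[(1-\tilde{\rho})\tilde{e}_{10}\frac{1-\pi}{1-\tilde{\pi}}\frac{\rho}{\tilde{\rho}}\frac{1-p_0}{1-\tilde{p}_0}\left\{\mu_{00}-\tilde{\mu}_{00}\right\}\right].
\end{equation*}
The key observation for the control arm is that $E[(1-A)(1-C)Y\mid X, R=1] = (1-\pi(X))(1-p_0(X))\mu_{00}(X)$, which justifies this simplification in direct analogy to the treatment-arm calculation.

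Then I would expand $\tilde{e}_{10}=\tilde{p}_1-\tilde{p}_0$ to split the bias into ``$p_1$ terms'' and ``$p_0$ terms'' as in (\ref{eq: break_down_bias_1}), carry out the algebraic simplification separately on each group as in (\ref{eq: num_1_bias_1}) and the subsequent \circled{2a}/\circled{2b} decomposition, and verify that each resulting summand factors as a product involving at least one of (i) $[\mu_{00}-\tilde{\mu}_{00}]$, (ii) a difference of compliance probabilities, or (iii) a factor of the form $[(1-\pi)\rho(1-\tilde{\rho}) - (1-\tilde{\pi})\tilde{\rho}(1-\rho)]$. Each of these vanishes under one of the three robustness conditions in Theorem \ref{consistency_theorem}, which gives the claim.

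The main obstacle is purely bookkeeping: the algebraic manipulation in Lemma \ref{lemma: consistency_proof_1} is intricate, and care is required to ensure that the appropriate substitutions $(\pi,p_1)\leftrightarrow(1-\pi,1-p_0)$ are performed consistently in every term while keeping the $\rho$-related factors (which are unchanged across the two arms) intact. Once the substitutions are tracked carefully, the product structure that makes each summand vanish under any one of the three consistency conditions is preserved, so the proof concludes in exactly the same manner as that of Lemma \ref{lemma: consistency_proof_1}.
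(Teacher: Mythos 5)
Your proposal is correct and follows essentially the same route as the paper: the paper also writes $\mathbb{P}_n[\hat{\phi}^*_{0,10}]$ in exactly the form you give and then argues, via the relabeling $A^*=1-A$, $C^*=1-C$, $\pi^*=1-\pi$, $p_0^*=1-p_0$, $p_1^*=1-p_1$, that the estimator is structurally identical to $\mathbb{P}_n[\hat{\phi}^*_{1,10}]$, so the bias formulas of Lemma \ref{lemma: consistency_proof_1} carry over with $\mu_{11}\mapsto\mu_{00}$, $\pi\mapsto 1-\pi$, $p_1\mapsto 1-p_0$ and no new conditions on the nuisance estimators. Your key identity $E[(1-A)(1-C)Y\mid X,R=1]=(1-\pi(X))(1-p_0(X))\mu_{00}(X)$ is exactly what licenses the analogue of the simplification in (\ref{eq: bias_simplification_1}).
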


\begin{proof}[Proof of Lemma \ref{lemma: consistency_proof_2}]

Calculations for the asymptotic bias of $\mathbb{P}_n[\hat{\phi}^*_{0, 10}]$ proceed almost identically as above, and we omit details for space considerations. Briefly, we again follow \citet{jiang_multiply_2022} in writing $\mathbb{P}_n[\hat{\phi}^*_{0, 10}]$ as:
\begin{equation*}
\mathbb{P}_n\left[(1-\hat{\rho})\hat{e}_{10}\frac{(1-A)}{1-\hat{\pi}}\frac{R}{\hat{\rho}}\frac{(1-C)}{1-\hat{p}_0}\left\{Y-\hat{\mu}_{00}\right\}+(1-\hat{\rho})\hat{\mu}_{00}\left\{\hat{\psi}_{C_{1, 1}}-\hat{\psi}_{C_{0, 1}}\right\}+\hat{e}_{10}(\hat{\psi}_{1-R})\hat{\mu}_{00}\right]
\end{equation*}
which is consistent for 
\begin{equation*}
E\left[(1-\tilde{\rho})\tilde{e}_{10}\frac{(1-\pi)}{1-\tilde{\pi}}\frac{\rho}{\tilde{\rho}}\frac{(1-p_0)}{1-\tilde{p}_0}\left\{\mu_{00}-\tilde{\mu}_{00}\right\}+(1-\tilde{\rho})\tilde{\mu}_{00}\left\{\tilde{\psi}_{C_{1, 1}}-\tilde{\psi}_{C_{0, 1}}\right\}+\tilde{e}_{10}(\tilde{\psi}_{1-R})\tilde{\mu}_{00}\right]
\end{equation*}
We can rewrite the above estimator as:
\begin{equation*}
\mathbb{P}_n\left[(1-\hat{\rho})\hat{e}_{10}\frac{(1-A)}{1-\hat{\pi}}\frac{R}{\hat{\rho}}\frac{(1-C)}{1-\hat{p}_0}\left\{Y-\hat{\mu}_{00}\right\}+(1-\hat{\rho})\hat{\mu}_{00}\left\{\hat{\psi}_{1-C_{0, 1}}-\hat{\psi}_{1-C_{1, 1}}\right\}+\hat{e}_{10}(\hat{\psi}_{1-R})\hat{\mu}_{00}\right]
\end{equation*}
Now, let $A^*=1-A$, $\pi^*=1-\pi$, $C^*=1-C$,  $p^*_0=1-p_0$, and $p^*_1=1-p_1$. Note first that
\begin{align*}
\psi_{1-C_{0, 1}}&=\frac{(1-A)R\left[(1-C)-(1-p_0(X))\right]}{(1-\pi(X))\rho(X)}+(1-p_0(X)) = \frac{A^*R\left[C^*-p_0^*(X)\right]}{\pi^*(X)\rho(X)}+p_0^*(X)\\
&\text{and}\\
\psi_{1-C_{1, 1}}&=\frac{AR\left[(1-C)-(1-p_1(X))\right]}{(1-\pi(X))\rho(X)}+(1-p_1(X)) = \frac{(1-A^*)R\left[C^*-p_1^*(X)\right]}{(1-\pi^*(X))\rho(X)}+p_1^*(X)
\end{align*}
Additionally, we have that
\begin{equation*}
e_{10}(X)=p_1(X)-p_0(X)=(1-p_0(X))-(1-p_1(X))=p^*_0(X)-p^*_1(X).
\end{equation*}
Putting these pieces together, we can write $\hat{\phi}^*_{0, 10}$ as
\begin{align*}
&(1-\hat{\rho})\left(\hat{p}^*_0(X)-\hat{p}^*_1(X)\right)\frac{A^*}{\hat{\pi}^*}\frac{R}{\hat{\rho}}\frac{C^*}{\hat{p}_0^*}\left\{Y-\hat{\mu}_{00}\right\}\\
&\hspace{0.65cm}+(1-\hat{\rho})\hat{\mu}_{00}\left\{\frac{A^*R\left[C^*-p_0^*(X)\right]}{\pi^*(X)\rho(X)}+p_0^*(X)-\frac{(1-A^*)R\left[C^*-p_1^*(X)\right]}{(1-\pi^*(X))\rho(X)}-p_1^*(X)\right\}\\
&\hspace{0.65cm}+\left(\hat{p}^*_0(X)-\hat{p}^*_1(X)\right)(\hat{\psi}_{1-R})\hat{\mu}_{00}.
\end{align*}
Altogether, we see that $\mathbb{P}_n[\hat{\phi}^*_{0, 10}]$ is structurally equivalent to $\mathbb{P}_n[\hat{\phi}^*_{1, 10}]$. That is, in particular, the formulas we derived above for the asymptotic bias of $\mathbb{P}_n[\hat{\phi}^*_{1, 10}]$
apply equally to $\mathbb{P}_n[\hat{\phi}^*_{0, 10}]$, where we replace $\mu_{11}$ with $\mu_{00}$, $\pi$ with $1-\pi$, and $p_1$ with $1-p_0$.
The important takeaway here is that the above replacements don't require any additional asymptotic conditions on the behavior of our nuisance function estimators. They rely on the same correctness of the outcome, treatment probability, and observed compliance probability models. This echoes the point in \citet{jiang_multiply_2022} that both pieces of the numerator are consistent under $\mathcal{M}_{\text{triple}}$.
\end{proof}

\begin{lemma}[Consistency of $\mathbb{P}_n(\hat{\lambda}_{10})$]
\label{lemma: consistency_proof_3}
Under the conditions given in Theorem \ref{consistency_theorem}, we have that $\mathbb{P}_n\left(\hat{\lambda}_{10}\right)$ is consistent for $E\left\{\left[p_1(X)-p_0(X)\right](1-\rho(X))\right\}$.
\end{lemma}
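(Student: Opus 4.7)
The plan is to mirror the structure of the proof of Lemma~\ref{lemma: consistency_proof_1}, but for the simpler denominator estimator (which does not involve the outcome regressions $\mu_{11}$ or $\mu_{00}$). First, I would substitute the definitions
\[
\psi_{C_{1,1}} = \tfrac{AR(C-p_1(X))}{\pi(X)\rho(X)}+p_1(X), \quad \psi_{C_{0,1}} = \tfrac{(1-A)R(C-p_0(X))}{(1-\pi(X))\rho(X)}+p_0(X), \quad \psi_{1-R} = (1-R)-(1-\rho(X))
\]
into the expression for $\lambda_{10}$ to write
\[
\hat{\lambda}_{10} = \left[\frac{AR(C-\hat{p}_1)}{\hat{\pi}\hat{\rho}}-\frac{(1-A)R(C-\hat{p}_0)}{(1-\hat{\pi})\hat{\rho}}\right](1-\hat{\rho}) + (\hat{p}_1-\hat{p}_0)(1-R),
\]
where I have used the pleasant telescoping $(1-\hat{\rho})+\bigl[(1-R)-(1-\hat{\rho})\bigr] = (1-R)$ to simplify the expression arising from the two $(\hat{p}_1-\hat{p}_0)$ pieces.

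Next, I would invoke the weak law of large numbers (with sample-split nuisances estimated from an independent fold, as assumed in the preamble) to argue that $\mathbb{P}_n(\hat{\lambda}_{10})$ converges in probability to
\[
E\left[\left(\frac{AR(C-\tilde{p}_1)}{\tilde{\pi}\tilde{\rho}}-\frac{(1-A)R(C-\tilde{p}_0)}{(1-\tilde{\pi})\tilde{\rho}}\right)(1-\tilde{\rho}) + (\tilde{p}_1-\tilde{p}_0)(1-R)\right].
\]
Then, following the same conditional-independence / iterated-expectation calculation used to derive (\ref{eq: bias_simplification_1}) — namely $E[ARC\mid X]=\pi(X)\rho(X)p_1(X)$, $E[AR\mid X]=\pi(X)\rho(X)$, and their $(1-A)$ analogues — the probability limit simplifies to
\[
E\left[\frac{\pi\rho(1-\tilde{\rho})}{\tilde{\pi}\tilde{\rho}}(p_1-\tilde{p}_1) - \frac{(1-\pi)\rho(1-\tilde{\rho})}{(1-\tilde{\pi})\tilde{\rho}}(p_0-\tilde{p}_0) + (\tilde{p}_1-\tilde{p}_0)(1-\rho)\right].
\]

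Subtracting the target $E[(p_1-p_0)(1-\rho)]$ from this limit yields the asymptotic bias. Grouping terms by $(p_1-\tilde{p}_1)$ and $(p_0-\tilde{p}_0)$, I can rewrite the bias as
\[
E\!\left[\frac{\pi\rho(1-\tilde{\rho})-\tilde{\pi}\tilde{\rho}(1-\rho)}{\tilde{\pi}\tilde{\rho}}(p_1-\tilde{p}_1)\right] - E\!\left[\frac{(1-\pi)\rho(1-\tilde{\rho})-(1-\tilde{\pi})\tilde{\rho}(1-\rho)}{(1-\tilde{\pi})\tilde{\rho}}(p_0-\tilde{p}_0)\right].
\]
Each summand vanishes under the sufficient conditions of Theorem~\ref{consistency_theorem}: the compliance-model factor $(p_a-\tilde{p}_a)$ is zero whenever the compliance models are correctly specified, and the bracketed kernel vanishes whenever both $\pi$ and $\rho$ are correctly specified. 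Thus in each of the multiply-robust scenarios, at least one factor in every summand is zero.

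I do not expect this to be a hard proof; since the outcome regressions do not appear in $\lambda_{10}$, the bookkeeping is strictly simpler than in Lemma~\ref{lemma: consistency_proof_1} and no analogue of the three-way decomposition into \circled{1}, \circled{2a}, \circled{2b} is needed. The only mild obstacle is ensuring that the telescoping simplification of the $(1-\hat{\rho})$ and $\psi_{1-R}$ pieces is performed consistently, so that the ``$(\tilde{p}_1-\tilde{p}_0)(1-\rho)$'' term lines up correctly with the target $E[e_{10}(1-\rho)]$ when forming the bias.
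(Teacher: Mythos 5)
Your proposal is correct and follows essentially the same route as the paper's proof: pass to the probability limit via iterated expectations, split the bias into $p_1$ and $p_0$ pieces, and exhibit each as a product of a $(\pi,\rho)$-error kernel with a $(p_a-\tilde{p}_a)$ factor so that at least one factor vanishes under each of the three conditions. The only difference is cosmetic bookkeeping --- you telescope $(1-\hat{\rho})+\hat{\psi}_{1-R}=(1-R)$ up front and collect each piece into a single product, whereas the paper adds and subtracts $p_a(1-\tilde{\rho})$ to obtain two products per piece; the resulting expressions are algebraically equivalent.
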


\begin{proof}[Proof of \ref{lemma: consistency_proof_3}]

As in the previous parts, we note that the bias of the denominator is consistent for 
\begin{align*}
&E\left(\left[\tilde{\psi}_{C_{1, 1}}-\tilde{\psi}_{C_{0, 1}}\right]\left\{1-\tilde{\rho}(X)\right\}+\left\{\tilde{p}_1(X)-\tilde{p}_0(X)\right\}\left[(1-R)-\left\{1-\tilde{\rho}(X)\right\}\right]\right) \\
&\hspace{0.5cm}-E\left\{\left[p_1(X)-p_0(X)\right](1-\rho(X))\right\}\\
&= E\left(\left[\tilde{\psi}_{C_{1, 1}}-\tilde{\psi}_{C_{0, 1}}\right]\left\{1-\tilde{\rho}(X)\right\}+\left\{\tilde{p}_1(X)-\tilde{p}_0(X)\right\}\left[(1-\rho)-\left\{1-\tilde{\rho}(X)\right\}\right] \right. \\
&\left.\hspace{1cm}\vphantom{\tilde{\psi}_{C_{1, 1}}}-\left[p_1(X)-p_0(X)\right](1-\rho(X))\right).
\end{align*}
We first consider terms involving $p_1$. Omitting notation for dependence on $X$ as above and working within the expectation, we have
\begin{align*}
\tilde{\psi}_{C_{1, 1}}(1-\tilde{\rho})+\tilde{p}_1[(1-\rho)-(1-\tilde{\rho})]-p_1(1-\rho) &= \tilde{\psi}_{C_{1, 1}}(1-\tilde{\rho})+\tilde{p}_1[(1-\rho)-(1-\tilde{\rho})]-p_1(1-\rho)\\
&\hspace{0.65cm} + p_1(1-\tilde{\rho})-p_1(1-\tilde{\rho})\\
&= \left[\tilde{\psi}_{C_{1, 1}}-p_1\right](1-\tilde{\rho})+(\tilde{p}_1-p_1)[(1-\rho)-(1-\tilde{\rho})].
\end{align*}
Now, observe that 
\begin{equation*}
\tilde{\psi}_{C_{1, 1}}-p_1 = \frac{\pi\rho[p_1-\tilde{p}_1]}{\tilde{\pi}\tilde{\rho}}+\tilde{p_1}-p_1 = \frac{\pi\rho[p_1-\tilde{p}_1]-\tilde{\pi}\tilde{\rho}(p_1-\tilde{p}_1)}{\tilde{\pi}\tilde{\rho}}=\frac{(\pi\rho-\tilde{\pi}\tilde{\rho})(p_1-\tilde{p}_1)}{\tilde{\pi}\tilde{\rho}}.
\end{equation*}
Thus, altogether, we can write the bias of terms involving $p_1$ as:
\begin{equation*}
\frac{(\pi\rho-\tilde{\pi}\tilde{\rho})(p_1-\tilde{p}_1)}{\tilde{\pi}\tilde{\rho}}(1-\tilde{\rho})+(\tilde{p}_1-p_1)(\tilde{\rho}-\rho).
\end{equation*}
An analogous set of calculations for terms involving $p_0$ yields the following:
\begin{equation*}
\frac{(\tilde{\pi}\tilde{\rho}-\pi\rho)(p_0-\tilde{p}_0)}{(1-\tilde{\pi})\tilde{\rho}}(1-\tilde{\rho})+(\tilde{p}_0-p_0)(\tilde{\rho}-\rho).
\end{equation*}
We combine these two results to obtain the following expression for the asymptotic bias of the denominator:
\begin{align}
&E\left[\frac{(\pi\rho-\tilde{\pi}\tilde{\rho})(p_1-\tilde{p}_1)}{\tilde{\pi}\tilde{\rho}}(1-\tilde{\rho})+(\tilde{p}_1-p_1)(\tilde{\rho}-\rho)\right] \nonumber \\
& -E\left[\frac{(\tilde{\pi}\tilde{\rho}-\pi\rho)(p_0-\tilde{p}_0)}{(1-\tilde{\pi})\tilde{\rho}}(1-\tilde{\rho})+(\tilde{p}_0-p_0)(\tilde{\rho}-\rho)\right]. \label{eq: denom_asymp_bias}
\end{align}
As with the proofs for the robustness in consistency for the other components of $\hat{\tau}_{EIF}$, we can see that each of the terms in Equation \ref{eq: denom_asymp_bias} are zero under any of the three consistency conditions of Theorem \ref{consistency_theorem}.
\end{proof}

\begin{proof}[Proof of Theorem \ref{consistency_theorem}]

Lemmas \ref{lemma: consistency_proof_1}, \ref{lemma: consistency_proof_2}, and \ref{lemma: consistency_proof_3} all establish that each of the three components of $\hat{\tau}_{EIF}$ are consistent under the consistency conditions of Theorem \ref{consistency_theorem}. Viewing each such component $\mathbb{P}_n(\hat{\phi}^*_{1, 10})$, $\mathbb{P}_n(\hat{\phi}^*_{0, 10})$, and $\mathbb{P}_n\left(\hat{\lambda}_{10}\right)$ as separate sequences of random variables indexed by $n$, we can apply the continuous mapping theorem to obtain the desired robustness conditions of $\hat{\tau}_{EIF}$ overall.

\end{proof}

\subsection{Proofs of Asymptotic Properties}
\label{ral_proofs}

Now we turn to Theorem \ref{eif_asymp_theorem}, which gives conditions guaranteeing asymptotic normality of our EIF-based estimator $\hat{\tau}_{EIF}$. As with our proof of consistency, we write $\hat{\tau}_{EIF}$ as 

\begin{equation*}
\hat{\tau}_{\text{EIF}} = \frac{\mathbb{P}_n[\hat{\phi}^*_{1, 10}-\hat{\phi}^*_{0, 10}]}{\mathbb{P}_n\left[\hat{\lambda}_{10}\right]}
\end{equation*} 
and consider the asymptotic behavior of each component of $\hat{\tau}_{EIF}$ in turn. We then combine these results to prove the overall characterization given in Theorem \ref{eif_asymp_theorem}. First, we prove the following general lemma, which draws heavily from the logic applied in the proof of Theorem 5 in \citet{zeng_efficient_2023}. This lemma is what will allow us to connect the separate asymptotic behavior of the components of $\hat{\tau}_{EIF}$ to the larger desired result.

\begin{lemma}
\label{asymptotic_linearity_fraction}
Suppose we can write an estimand $\tau$ as
\begin{equation*}
\tau = \frac{E[N]}{E[D]},
\end{equation*}
where $N$ and $D$ are random variables. Suppose further that we've derived estimators $\mathbb{P}_n(\hat{N})$ and $\mathbb{P}_n(\hat{D})$ for $E[N]$ and $E[D]$, that are asymptotically linear in $\varphi_N$ and $\varphi_D$ and consistent for $E[N]$ and $E[D]$, respectively. In particular, 
$\mathbb{P}_n(\hat{N})=\mathbb{P}_n(\varphi_N)+o_p(1/\sqrt{n})$ and 
$\mathbb{P}_n(\hat{D})=\mathbb{P}_n(\varphi_D)+o_p(1/\sqrt{n})$ where $E[\varphi_N]=E[N]$ and $E[\varphi_D]=E[D]$.
Then 
\begin{equation*}
\hat{\tau}=\frac{\mathbb{P}_n(\hat{N})}{\mathbb{P}_n(\hat{D})}
\end{equation*}
has the following asymptotic behavior:
\begin{equation*}
\hat{\tau}-\tau=\mathbb{P}_n\left(\frac{\varphi_N-\tau\varphi_D}{E[D]}\right)+o_p(1/\sqrt{n}).
\end{equation*}
\end{lemma}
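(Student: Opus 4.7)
The plan is to reduce the ratio to an algebraic combination of the two asymptotically linear expansions, then apply a Slutsky-type substitution for the denominator. First I would rewrite
\begin{equation*}
\hat{\tau} - \tau = \frac{\mathbb{P}_n(\hat{N})}{\mathbb{P}_n(\hat{D})} - \frac{E[N]}{E[D]} = \frac{\mathbb{P}_n(\hat{N}) - \tau\,\mathbb{P}_n(\hat{D})}{\mathbb{P}_n(\hat{D})},
\end{equation*}
using only the definition $\tau = E[N]/E[D]$ (so I must implicitly assume $E[D] \neq 0$). This single identity is the key algebraic move, because it converts a ratio of sums into a linear combination of sums in the numerator, which is exactly where the asymptotic linearity hypotheses apply.

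Next I would substitute the given expansions $\mathbb{P}_n(\hat{N}) = \mathbb{P}_n(\varphi_N) + o_p(1/\sqrt{n})$ and $\mathbb{P}_n(\hat{D}) = \mathbb{P}_n(\varphi_D) + o_p(1/\sqrt{n})$ and use linearity of $\mathbb{P}_n$ to obtain
\begin{equation*}
\mathbb{P}_n(\hat{N}) - \tau\,\mathbb{P}_n(\hat{D}) = \mathbb{P}_n(\varphi_N - \tau\varphi_D) + o_p(1/\sqrt{n}).
\end{equation*}
Because $E[\varphi_N - \tau\varphi_D] = E[N] - \tau E[D] = 0$, the central limit theorem gives $\mathbb{P}_n(\varphi_N - \tau\varphi_D) = O_p(1/\sqrt{n})$, which will be useful in controlling remainder terms in the next step.

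Finally I would handle the denominator by writing
\begin{equation*}
\frac{\mathbb{P}_n(\varphi_N - \tau\varphi_D) + o_p(1/\sqrt{n})}{\mathbb{P}_n(\hat{D})} = \mathbb{P}_n\!\left(\frac{\varphi_N - \tau\varphi_D}{E[D]}\right) + \mathbb{P}_n(\varphi_N - \tau\varphi_D)\left(\frac{1}{\mathbb{P}_n(\hat{D})} - \frac{1}{E[D]}\right) + \frac{o_p(1/\sqrt{n})}{\mathbb{P}_n(\hat{D})}.
\end{equation*}
The middle term is $O_p(1/\sqrt{n}) \cdot o_p(1) = o_p(1/\sqrt{n})$, using consistency of $\mathbb{P}_n(\hat{D})$ for $E[D]$ (and continuity of $x \mapsto 1/x$ at $E[D]$); the last term is $o_p(1/\sqrt{n})$ for the same reason. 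Collecting gives the claimed expansion. The only real obstacle is bookkeeping: the argument requires $E[D] \neq 0$ so that $1/\mathbb{P}_n(\hat{D}) \to 1/E[D]$ in probability, and the two $o_p(1/\sqrt{n})$ remainders must be tracked through the division by $\mathbb{P}_n(\hat{D})$, but both points reduce to routine applications of Slutsky's theorem.
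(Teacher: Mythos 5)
Your proof is correct and follows essentially the same route as the paper's: the same algebraic identity $\hat{\tau}-\tau=\{\mathbb{P}_n(\hat{N})-\tau\mathbb{P}_n(\hat{D})\}/\mathbb{P}_n(\hat{D})$, the same substitution of the asymptotically linear expansions, and the same Slutsky-type splitting of $1/\mathbb{P}_n(\hat{D})$ around $1/E[D]$. The only cosmetic difference is that you bound the cross term as $O_p(1/\sqrt{n})\cdot o_p(1)$ using consistency of the denominator, whereas the paper bounds it as $O_p(1/\sqrt{n})\cdot O_p(1/\sqrt{n})=O_p(1/n)$; both yield $o_p(1/\sqrt{n})$.
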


\begin{proof}[Proof of Lemma \ref{asymptotic_linearity_fraction}]
We can write 
\begin{align*}
\hat{\tau}-\tau &= \frac{\mathbb{P}_n(\hat{N})}{\mathbb{P}_n(\hat{D})}-\tau \\
&= \frac{\mathbb{P}_n(\hat{N})-\tau{\mathbb{P}_n(\hat{D})}}{\mathbb{P}_n(\hat{D})}\\
&= \frac{\mathbb{P}_n(\hat{N})-\tau{\mathbb{P}_n(\hat{D})}}{E[D]}+\left(\frac{1}{\mathbb{P}_n(\hat{D})}-\frac{1}{E(D)}\right)\left(\mathbb{P}_n(\hat{N})-\tau{\mathbb{P}_n(\hat{D})}\right).
\end{align*}
The first term can be expressed as
\begin{align*}
\frac{\mathbb{P}_n(\hat{N})-\tau{\mathbb{P}_n(\hat{D})}}{E[D]} &= \frac{\mathbb{P}_n(\varphi_N)+o_p(1/\sqrt{n})-\tau\left[\mathbb{P}_n(\varphi_D)+o_p(1/\sqrt{n})\right]}{E[D]}\\
&= \frac{\mathbb{P}_n(\varphi_N)-\tau \mathbb{P}_n(\varphi_D)}{E[D]}+o_p(1/\sqrt{n}) \hspace{1cm} \text{Slutsky's Theorem}\\
&= \frac{\mathbb{P}_n\left(\varphi_N-\tau\varphi_D\right)}{E[D]}+o_p(1/\sqrt{n})\\
&= \mathbb{P}_n\left(\frac{\varphi_N-\tau\varphi_D}{E[D]}\right)+o_p(1/\sqrt{n}).
\end{align*}
By assumption $\sqrt{n}(\mathbb{P}_n(\hat{D})-E(D))$ is asymptotically normal and $\mathbb{P}_n(\hat{D})$ converges in probability to $E(D)$. Thus, again applying Slutsky's Theorem, we obtain:
\begin{align*}
\left(\frac{1}{\mathbb{P}_n(\hat{D})}-\frac{1}{E(D)}\right) &=-\left(\frac{\mathbb{P}_n(\hat{D})-E(D)}{\mathbb{P}_n(\hat{D})E(D)}\right)\\
&= O_p(1/\sqrt{n}).
\end{align*}
Finally, we further apply the fact that $\sqrt{n}(\mathbb{P}_n(\hat{N})-E(N))$ is asymptotically normal to obtain:
\begin{align*}
\mathbb{P}_n(\hat{N})-\tau{\mathbb{P}_n(\hat{D})} &= \mathbb{P}_n(\hat{N})-\tau E[D]-\tau(\mathbb{P}_n(\hat{D})-E[D])\\
&= \mathbb{P}_n(\hat{N})-E[N]-\tau(\mathbb{P}_n(\hat{D})-E[D])\\
&= O_p(1/\sqrt{n})-\tau O_p(1/\sqrt{n})\\
&= O_p(1/\sqrt{n}).
\end{align*}
Thus, applying results from, e.g., Section 5.5.3 of \citet{boos_stefanski} we have
\begin{align*}
\left(\frac{1}{\mathbb{P}_n(\hat{D})}-\frac{1}{E(D)}\right)\left(\mathbb{P}_n(\hat{N})-\tau{\mathbb{P}_n(\hat{D})}\right)&=O_p(1/\sqrt{n})O_p(1/\sqrt{n})\\
&= O_p(1/n)\\
&= o_p(1/\sqrt{n}).
\end{align*}
This proves our result.
\end{proof}
\noindent We apply Lemma \ref{asymptotic_linearity_fraction} after characterizing the asymptotic linearity of each component of $\hat{\tau}_{EIF}$. First turning to $\mathbb{P}_n[\hat{\phi}^*_{1, 10}]$, we have the following lemma.

\begin{lemma}
\label{lemma: asymp_normal_1}
Under the consistency conditions given in Theorem \ref{eif_asymp_theorem}, then we have that $\mathbb{P}_n[\hat{\phi}^*_{1, 10}]$ is asymptotically linear in $\phi^0_{1, 10}$ and is $\sqrt{n}$-consistent for $E\left\{\left[p_1(X)-p_0(X)\right](1-\rho(X))\mu_{11}(X)\right\}$ .
\end{lemma}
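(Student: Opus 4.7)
The plan is to establish asymptotic linearity via the standard von Mises expansion, decomposing the estimation error into an ``oracle'' empirical process term, a stochastic equicontinuity term handled by cross-fitting, and a drift/bias term controlled by the product-of-rates arguments implicit in the calculations already performed for Lemma \ref{lemma: consistency_proof_1}. Concretely, I would write
\begin{equation*}
\mathbb{P}_n[\hat{\phi}^*_{1,10}] - E[\phi^0_{1,10}] = \underbrace{(\mathbb{P}_n - P)[\phi^0_{1,10}]}_{\text{(I)}} + \underbrace{(\mathbb{P}_n - P)[\hat{\phi}^*_{1,10} - \phi^0_{1,10}]}_{\text{(II)}} + \underbrace{P[\hat{\phi}^*_{1,10}] - E[\phi^0_{1,10}]}_{\text{(III)}},
\end{equation*}
where $P$ denotes the population expectation conditional on the nuisance training sample and $\phi^0_{1,10}$ is the summand of the efficient influence function (centered at $E[\phi^0_{1,10}]$). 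Term (I) is exactly the desired asymptotic-linear representation, so the task reduces to showing that (II) and (III) are both $o_p(1/\sqrt{n})$.

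For (II), I would exploit the cross-fitting setup laid out in Section \ref{consistency_proofs}: because $\hat{\phi}^*_{1,10}$ is constructed from nuisance estimates fit on a sample independent of the one used to compute $\mathbb{P}_n$, a conditional Chebyshev / second-moment argument (as in Lemma 2 of Kennedy, 2020, or Lemma S3 of \citet{zeng_efficient_2023}) gives $\text{(II)} = O_p\!\left(\|\hat{\phi}^*_{1,10} - \phi^0_{1,10}\|_{L_2(P)}/\sqrt{n}\right)$. Under the consistency conditions on the individual nuisance estimators assumed in Theorem \ref{eif_asymp_theorem} and the bounded-away-from-zero positivity assumptions, $\|\hat{\phi}^*_{1,10} - \phi^0_{1,10}\|_{L_2(P)} \xrightarrow{p} 0$, so (II) is $o_p(1/\sqrt{n})$. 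This replaces any Donsker-type assumptions and is the reason cross-fitting is invoked in the main theorem.

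Term (III) is the main obstacle and is where the rate conditions in Theorem \ref{eif_asymp_theorem} really do the work. The drift $P[\hat{\phi}^*_{1,10}] - E[\phi^0_{1,10}]$ is precisely the asymptotic bias expression computed in Equation (\ref{eq: asymp_bias_num_1}) of Lemma \ref{lemma: consistency_proof_1}, evaluated without passing to probability limits. Inspecting that expression, every summand is a product of at least two nuisance estimation errors: differences like $(\mu_{11} - \tilde\mu_{11})$ multiplied by differences like $(\pi\rho - \tilde\pi\tilde\rho)$, $(p_1 - \tilde p_1)$, $(p_0 - \tilde p_0)$, or $(\rho - \tilde\rho)$. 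Replacing the probability limits $\tilde\theta$ with the estimators $\hat\theta$ and applying Cauchy--Schwarz to each product, the bias is bounded by a sum of terms of the form $\|\hat\theta_j - \theta_j\|_{L_2} \cdot \|\hat\theta_k - \theta_k\|_{L_2}$, each of which is $o_p(1/\sqrt{n})$ provided the individual nuisance estimators converge at rate faster than $n^{-1/4}$ (the standard product-rate condition assumed in Theorem \ref{eif_asymp_theorem}). Care is needed for the $\tilde\pi\tilde\rho\tilde p_1$ and $(1-\tilde\pi)\tilde\rho$ denominators, which are handled by invoking the positivity bounds to pull them outside the norm.

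Combining (I), (II), and (III) yields
\begin{equation*}
\mathbb{P}_n[\hat{\phi}^*_{1,10}] = E[\phi^0_{1,10}] + \mathbb{P}_n\bigl(\phi^0_{1,10} - E[\phi^0_{1,10}]\bigr) + o_p(1/\sqrt{n}),
\end{equation*}
which is the claimed asymptotic linearity in $\phi^0_{1,10}$, and $\sqrt{n}$-consistency for $E\{[p_1(X)-p_0(X)](1-\rho(X))\mu_{11}(X)\}$ follows from the central limit theorem applied to the leading mean-zero term together with the finite-variance of $\phi^0_{1,10}$ guaranteed by positivity and boundedness of $Y$. The only genuinely novel step relative to Lemma \ref{lemma: consistency_proof_1} is upgrading the qualitative ``each summand vanishes in the limit'' statement there to the quantitative $o_p(1/\sqrt{n})$ statement needed here, which is precisely what the product-rate conditions in Theorem \ref{eif_asymp_theorem} deliver.
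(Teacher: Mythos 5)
Your proposal is correct and follows essentially the same route as the paper's proof: the identical three-term decomposition into the oracle empirical process term, the cross-fitting/sample-splitting term controlled via Lemma 2 of Kennedy (2020), and the conditional bias term bounded by applying Cauchy--Schwarz to the product-structure of Equation (\ref{eq: asymp_bias_num_1}) (with estimators in place of probability limits) under the positivity and product-rate conditions of Theorem \ref{eif_asymp_theorem}. No substantive differences to note.
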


\begin{proof}[Proof of Lemma \ref{lemma: asymp_normal_1}]

For simplicity in what follows, we suppress dependence on $X$ when there is no ambiguity. Note first that we can show that $E[\phi^0_{0, 10}]=E[e_{10}(1-\rho)\mu_{11}]$. Looking back to the definition of $\phi^0_{0, 10}$ given in the main manuscript, we see that its first term as expectation $E[e_{10}(1-\rho)\mu_{11}]$ and its second two terms both have expectation zero. We apply this fact to decompose the error of $\mathbb{P}_n\left[\hat{\phi}^*_{1, 10}\right]$ as follows:
\begin{align}
\mathbb{P}_n\left[\hat{\phi}^*_{1, 10}\right]-E[e_{10}(1-\rho)\mu_{11}]&=\mathbb{P}_n\left[\hat{\phi}^*_{1, 10}\right]-E[\phi^0_{0, 10}]\nonumber\\
&= (\mathbb{P}_n-E)(\hat{\phi}^*_{1, 10}-\phi^0_{0, 10})+(\mathbb{P}_n-E)(\phi^0_{0, 10})+E(\hat{\phi}^*_{1, 10}-\phi^0_{0, 10}). \label{eq: phi_1_decomp}
\end{align}

Also note that the consistency conditions of Theorem \ref{consistency_theorem} imply that $\|\hat{\phi}^*_{1, 10}-\phi^0_{0, 10}\|=o_p(1)$.

Considering each component of (\ref{eq: phi_1_decomp}) in turn, we have first by our assumed sample splitting approach and subsequent application of Lemma 2 in \citet{kennedy_sharp_2020} that
\begin{align*}
(\mathbb{P}_n-E)(\hat{\phi}^*_{1, 10}-e_{10}(1-\rho)\mu_{11}) &= (\mathbb{P}_n-E)(\hat{\phi}^*_{1, 10}-\phi^0_{1, 10})\\
&= O_p\left(\frac{||\hat{\phi}^*_{1, 10}-\phi^0_{1, 10}||}{\sqrt{n}}\right)\\
&= O_p\left(\frac{o_p(1)}{\sqrt{n}}\right) & \text{by assumptions of Theorem \ref{eif_asymp_theorem}}\\
&= o_p(1/\sqrt{n}).
\end{align*}

Next, since $E(\hat{\phi}^*_{1, 10}-\phi^0_{0, 10})=E(\hat{\phi}^*_{1, 10}-e_{10}(1-\rho)\mu_{11})$, we can apply our bias calculations from Equation \ref{eq: asymp_bias_num_1} in the proof of Lemma \ref{lemma: consistency_proof_1}---in addition to the assumptions of Theorem \ref{eif_asymp_theorem}---to fully characterize the asymptotic behavior of $\mathbb{P}_n[\hat{\phi}^*_{1, 10}]$. Recalling Equation \ref{eq: asymp_bias_num_1}, we know that $E(\hat{\phi}^*_{1, 10}-e_{10}(1-\rho)\mu_{11})$ can be written as 
\begin{align}
&E\left[\frac{\left[\pi\rho(1-\hat{\rho})-\hat{\pi}\hat{\rho}(1-\rho)\right]\left[\mu_{11}p_1-\hat{\mu}_{11}\hat{p}_1\right]}{\hat{\pi}\hat{\rho}}\right] \nonumber\\
& \hspace{0.65cm} - E\left[\hat{p}_0\left\{\frac{\pi\rho p_1}{\hat{\pi}\hat{\rho}\hat{p}_1}-1\right\}[\mu_{11}-\hat{\mu}_{11}]+\hat{\mu}_{11}\left\{\frac{(1-\pi)\rho}{(1-\hat{\pi})\hat{\rho}}-1\right\}[p_0-\hat{p}_0] + (\hat{p}_0-p_0)(\mu_{11}-\hat{\mu}_{11})\right] \nonumber\\
& \hspace{0.65cm} + E\left[\hat{p}_0\rho\left\{\frac{\pi p_1}{\hat{\pi}\hat{p}_1}-1\right\}[\mu_{11}-\hat{\mu}_{11}]+\hat{\mu}_{11}\rho\left\{\frac{(1-\pi)}{(1-\hat{\pi})}-1\right\}(p_0-\hat{p}_0)+\rho(\hat{\mu}_{11}-\mu_{11})(p_0-\hat{p}_0)\right]. \label{eq: expectation_1_decomp}
\end{align}
Each of these terms can be bounded above by repeated application of the Cauchy-Schwarz inequality; in particular, we have that 
\begin{align*}
&E\left[\frac{\left[\pi\rho(1-\hat{\rho})-\hat{\pi}\hat{\rho}(1-\rho)\right]\left[\mu_{11}p_1-\hat{\mu}_{11}\hat{p}_1\right]}{\hat{\pi}\hat{\rho}}\right]\\
&= E\left[\frac{\left[\pi\rho(1-\hat{\rho})-\hat{\pi}\hat{\rho}(1-\rho)\right]\left[\mu_{11}p_1-\hat{\mu}_{11}\hat{p}_1\right]}{\hat{\pi}\hat{\rho}}\right]\\
&= E\left[\frac{\left[(\pi-\hat{\pi})\rho(1-\hat{\rho})+(\rho-\hat{\rho})(\hat{\pi})(1-\hat{\rho})+[((1-\hat{\rho})-(1-\rho)]\hat{\pi}\hat{\rho}\right]\left[(\mu_{11}-\hat{\mu}_{11})p_1+(p_1-\hat{p}_1)\hat{\mu}_{11}\right]}{\hat{\pi}\hat{\rho}}\right]\\
&\leq E\left[\left(\left|(\pi-\hat{\pi})\right|\left|\frac{\rho(1-\hat{\rho})}{\hat{\pi}\hat{\rho}}\right|+\left|(\rho-\hat{\rho})\right|\left|\frac{\hat{\pi}(1-\hat{\rho})}{\hat{\pi}\hat{\rho}}\right|+\left|\rho-\hat{\rho}\right|\left|\frac{\hat{\pi}\hat{\rho}}{\hat{\pi}\hat{\rho}}\right|\right) \right. \\
 & \left. \quad\quad\quad\times \left(\left|\mu_{11}-\hat{\mu}_{11}\right|\left|\frac{p_1}{\hat{\rho}\hat{\pi}}\right|+\left|(p_1-\hat{p}_1)\right|\left|\frac{\hat{\mu}_{11}}{\hat{\pi}\hat{\rho}}\right|\right)\right]\\
&\leq C_1\times \left(||\pi-\hat{\pi}||+||\rho-\hat{\rho}||\right)\left(||\mu_{11}-\hat{\mu}_{11}||+||p_1-\hat{p}_1||\right),
\end{align*}
where the last inequality holds with an application of the Cauchy-Schwarz inequality and the constant condition referenced above. Referring to that condition, we note that $C_1\in \mathbb{R}^+$ is some function of the $C$ and $\delta$ referenced in Theorem \ref{eif_asymp_theorem}.

In a similar way, we can bound the second term in (\ref{eq: expectation_1_decomp}) above by 
\begin{align*}
C_2\times &\left[\left(||\pi-\hat{\pi}||+||\rho-\hat{\rho}||+||p_1-\hat{p}_1||\right)\left(||\mu_{11}-\hat{\mu}_{11}||\right) \right. \\ 
&\left.+\left(||\pi-\hat{\pi}||+||\rho-\hat{\rho}||\right)||p_0-\hat{p}_0||+||\mu_{11}-\hat{\mu}_{11}||\hspace{.5mm}||p_0-\hat{p}_0||\right].
\end{align*}
Finally, the third term in (\ref{eq: expectation_1_decomp}) can be bounded above by
\begin{equation*}
C_3\times \left[\left(||\pi-\hat{\pi}||+||p_1-\hat{p}_1||\right)||\mu_{11}-\hat{\mu}_{11}||+||\pi-\hat{\pi}||\hspace{.5mm}||p_0-\hat{p}_0||+||\mu_{11}-\hat{\mu}_{11}||\hspace{.5mm}||p_0-\hat{p}_0||\right].
\end{equation*}
We can combine like terms in the above bounds to conclude overall that the conditional bias in $\hat{\phi}^*_{1, 10}$ is bounded above by the following expression times $C_4$, where $C_4>0$ is chosen such that $C_4\ge C_1+C_2+C_3$:
\begin{align*}
&||\pi-\hat{\pi}||\hspace{.5mm}||\mu_{11}-\hat{\mu}_{11}||+||\pi-\hat{\pi}||\hspace{.5mm}||p_1-\hat{p}_1||+||\rho-\hat{\rho}||\hspace{.5mm}||\mu_{11}-\hat{\mu}_{11}||+||\rho-\hat{\rho}||\hspace{.5mm}||p_1-\hat{p}_1||\\
&\hspace{.5cm}+||p_1-\hat{p}_1||\hspace{.5mm}||\mu_{11}-\hat{\mu}_{11}||+||\pi-\hat{\pi}||\hspace{.5mm}||p_0-\hat{p}_0||+||\rho-\hat{\rho}||\hspace{.5mm}||p_0-\hat{p}_0|| + ||\mu_{11}-\hat{\mu}_{11}||\hspace{.5mm}||p_0-\hat{p}_0||.\\
&= \underbrace{||\mu_{11}-\hat{\mu}_{11}||\left\{||p_1-\hat{p}_1||+||p_0-\hat{p}_0||\right\}}_{\equiv R^1_n}+\underbrace{||\mu_{11}-\hat{\mu}_{11}||\left\{||\pi-\hat{\pi}||+||\rho-\hat{\rho}||\right\}}_{\equiv R^2_n}\\
&\hspace{.5cm}+\underbrace{\left\{||\pi-\hat{\pi}||+||\rho-\hat{\rho}||\right\}\left\{||p_0-\hat{p}_0||+||p_1-\hat{p}_1||\right\}}_{\equiv R^3_n},
\end{align*}
where we overload the notation used in Theorem \ref{eif_asymp_theorem} to define shortand for the remainder terms specific to $\hat{\phi}^*_{1, 10}$. We can use the above finding to characterize both the rate-robustness properties of $\hat{\phi}^*_{1, {10}}$ and its consistency properties. That is, we now have that
\begin{equation}
\mathbb{P}_n\left[\hat{\phi}^*_{1, 10}\right]-E[e_{10}(1-\rho)\mu_{11}]=(\mathbb{P}_n-E)\left(\phi^0_{1, {10}}\right)+O_p\left(R^1_n+R^2_n+R^3_n\right)+o_p(1/\sqrt{n}). \label{eq: num_1_asymp_linear}
\end{equation}
Then, under the rate robustness conditions given in Theorem \ref{eif_asymp_theorem}, we have that $R^1_n+R^2_n+R^3_n=o_p(1/\sqrt{n})=o_p(1/\sqrt{n})$, which implies that $\hat{\phi}^*_{1, 10}$ is asymptotically normal. 
\end{proof}

\begin{proof}[Proof of Theorem \ref{eif_asymp_theorem}]

We can apply analogous arguments as those given in Lemma \ref{lemma: asymp_normal_1} to prove the asymptotic normality of $\mathbb{P}_n[\hat{\phi}^*_{1, 10}]$ and $\mathbb{P}_n\left(\hat{\lambda}_{10}\right)$ under the conditions of Theorem \ref{eif_asymp_theorem}. Both such arguments combine decompositions similar to Equation (\ref{eq: phi_1_decomp}) with the bias formulas derived in Lemmas \ref{lemma: consistency_proof_2} and \ref{lemma: consistency_proof_3}. These arguments ultimately show that 
\begin{align}
&\mathbb{P}_n\left(\hat{\lambda}_{10}\right)-E[e_{10}(1-\rho)] \nonumber\\
&=(\mathbb{P}_n-E)\left(\lambda_{10}\right)+o_p(1/\sqrt{n}) \label{eq: denom_asymp_linear}
\end{align}
and that
\begin{equation}
\mathbb{P}_n\left[\hat{\phi}^*_{0, 10}\right]-E[e_{10}(1-\rho)\mu_{00}]=(\mathbb{P}_n-E)\left(\phi^0_{0, {10}}\right)+o_p(1/\sqrt{n}) \label{eq: num_0_asymp_linear}
\end{equation}
Since $E[e_{10}(1-\rho)\mu_{11}]=E(\phi^0_{1, 10})$, $E[e_{10}(1-\rho)]=E[\lambda_{10}]$, and $E[e_{10}(1-\rho)\mu_{00}]=E(\phi^0_{0, 10})$, we can add the expectations on the left hand side of (\ref{eq: num_1_asymp_linear}), (\ref{eq: num_0_asymp_linear}), and (\ref{eq: denom_asymp_linear}) to the right hand side of their respective equations to obtain:
\begin{align*}
\mathbb{P}_n\left[\hat{\phi}^*_{1, 10}\right]&=\mathbb{P}_n\left(\phi^0_{1, {10}}\right)+o_p(1/\sqrt{n})\\
\mathbb{P}_n\left[\hat{\phi}^*_{0, 10}\right]&=\mathbb{P}_n\left(\phi^0_{0, {10}}\right)+o_p(1/\sqrt{n})\\
\mathbb{P}_n\left(\hat{\lambda}_{10}\right)&= \mathbb{P}_n\left(\lambda_{10}\right)+o_p(1/\sqrt{n})
\end{align*}
We are now in a position to directly apply Lemma \ref{asymptotic_linearity_fraction}. That is, we have
\begin{equation*}
\hat{\tau}_{EIF}-\tau^0_{10}=\mathbb{P}_n\left(\frac{\phi^0_{1, 10}-\phi^0_{0, 10}-\psi^0_{10}\lambda_{10}}{E\left[e_{10}(X)\left\{1-\rho(X)\right\}\right]}\right)+o_p(1/\sqrt{n})
\end{equation*}
This is equal to Equation (\ref{eq: eif_asymp_expansion}) and illustrates that $\hat{\tau}_{EIF}$ is asymptotically linear with influence function $\varphi_{\psi^0_{10}}$, i.e., it achieves the non-parametric efficiency bound.

\end{proof}

\newpage

\section{Identification for Hotspotting Analysis}
\label{one_sided_identification}

\begin{lemma}\label{one_sided_lemma}
Under Assumptions \ref{consistency}, \ref{treatment_ignorability}, \ref{monotonicity}*, \ref{princignorability}*, \ref{stratexch}*, and \ref{mean_exch}, we can identify $\tau^0_1 = E[Y(1)-Y(0)|R=0, C(1)=1]$ as follows:
\begin{equation}
\tau^0_1 = \frac{E[p_1(X)(\mu_{11}(X)-\mu_{00}(X))(1-\rho(X))]}{E[p_1(X)(1-\rho(X)]}
\end{equation}
\end{lemma}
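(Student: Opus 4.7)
The plan is to follow the same three-step strategy used in the proof of Theorem \ref{ident_theorem}: separately identify the denominator (marginal probability of compliers in the target population) and each of the two potential-outcome means among compliers in the target population, and then combine them by linearity of the conditional expectation. The starred assumptions encode one-sided noncompliance, so that (under monotonicity*) $C(0)=0$ almost surely and the complier stratum is simply $\{C(1)=1\}$, with never-takers $\{C(1)=0\}$ being the only other principal stratum. In particular, $P(C(1)=1\mid X)=P(C(1)=1, C(0)=0\mid X)$, so the usual $p_1(X)-p_0(X)$ collapses to $p_1(X)$ because $p_0(X)=0$.

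First I would establish the analogue of Lemma \ref{denom_identification}: by iterated expectations, stratified exchangeability*, and the consequences of monotonicity* and treatment ignorability, write
\begin{align*}
P(C(1)=1, R=0) &= E[E\{I(C(1)=1)\mid X\}\,E\{I(R=0)\mid X\}]\\
&= E[p_1(X)\{1-\rho(X)\}],
\end{align*}
where the last step uses that, under one-sided noncompliance, $E[I(C(1)=1)\mid X]=P(C=1\mid A=1,R=1,X)=p_1(X)$ after applying treatment ignorability and consistency within $R=1$.

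Next I would mimic Lemma \ref{te_1_identification} to obtain
\begin{equation*}
E[Y(1)\mid C(1)=1, R=0] = \frac{E[\mu_{11}(X)\,p_1(X)\{1-\rho(X)\}]}{E[p_1(X)\{1-\rho(X)\}]}.
\end{equation*}
The chain is: condition on $X$, apply mean exchangeability across populations to swap $R=0$ for $R=1$, use principal ignorability* (which in the one-sided setting reduces to a suitable across-stratum mean exchangeability between the complier and never-taker strata at $A=1$), then invoke treatment ignorability and consistency to replace $E[Y(1)\mid X, C(1)=1, A=1, R=1]$ by $\mu_{11}(X)$, and finally reweight by $I(C(1)=1, R=0)$ divided by $P(C(1)=1, R=0)$ and apply the previous step. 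The companion identity for $E[Y(0)\mid C(1)=1, R=0]$ with $\mu_{00}(X)$ in place of $\mu_{11}(X)$ proceeds analogously, taking the detour through the control arm $(A=0)$ instead.

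Finally, I would combine the two outcome identifications as a single ratio,
\begin{equation*}
\tau^0_1 = E[Y(1)\mid C(1)=1, R=0] - E[Y(0)\mid C(1)=1, R=0] = \frac{E[p_1(X)\{\mu_{11}(X)-\mu_{00}(X)\}\{1-\rho(X)\}]}{E[p_1(X)\{1-\rho(X)\}]},
\end{equation*}
which is the claimed expression. The main obstacle, which is largely bookkeeping, is verifying that the starred assumptions truly license replacing $E[Y(a)\mid X, C(1)=1, A=a, R=1]$ by $\mu_{aa}(X)$ for $a\in\{0,1\}$: for $a=1$ the complier-and-treated observed subgroup is exactly $\{C=1, A=1, R=1\}$ by monotonicity*, so identification is immediate, while for $a=0$ we must rely on principal ignorability* to equate the conditional mean of $Y(0)$ among compliers in the control arm with the conditional mean among the combined (complier $\cup$ never-taker) group observed at $A=0, R=1$, which is what $\mu_{00}(X)$ estimates. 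Once that step is in hand, the rest is structurally identical to the proof of Theorem \ref{ident_theorem}.
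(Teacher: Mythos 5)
Your proposal is correct and follows essentially the same route as the paper's proof: mean exchangeability to move from $R=0$ to $R=1$, principal ignorability* and consistency to replace the potential-outcome means with $\mu_{11}(X)$ and $\mu_{00}(X)$, and then a reweighting step using stratified exchangeability*, monotonicity*, and treatment ignorability to express $E[f(X)\mid C(1)=1, R=0]$ as $E[f(X)p_1(X)\{1-\rho(X)\}]/E[p_1(X)\{1-\rho(X)\}]$. The only cosmetic difference is that you split the contrast into two separate potential-outcome means (mirroring Theorem \ref{ident_theorem}) whereas the paper carries the difference $Y(1)-Y(0)$ through as a single object and applies the reweighting identity once; your observation about where principal ignorability* is genuinely needed (the control arm) is accurate.
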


\begin{proof}[Proof of Lemma \ref{one_sided_lemma}]
\begin{align}
&E[Y(1)-Y(0)|R=0, C(1)=1] \nonumber\\
&= E[E\{Y(1)-Y(0)|X, R=0, C(1)=1, C(0)=0\}|C(1)=1, R=0] \nonumber\\
&= E[E\{Y(1)-Y(0)|X, R=1, C(1)=1, C(0)=0\}|C(1)=1, R=0] & \text{By Assumption \ref{mean_exch}}\nonumber\\
&= E[E\{Y(1)|X, R=1, A=1, C=1\}|C(1)=1, R=0]\nonumber\\
&\hspace{.5cm} -E[E\{Y(0)|X, R=1, A=0, C=0\}|C(1)=1, R=0] & \text{By Assumption \ref{princignorability}*}\nonumber\\
&= E[E\{Y|X, R=1, A=1, C=1\}|C(1)=1, R=0]\nonumber\\
&\hspace{.5cm} -E[E\{Y|X, R=1, A=0, C=0\}|C(1)=1, R=0] & \text{By Assumption \ref{consistency}} \label{eq: strong_mono_id_int}
\end{align}
Then, for any function $f(X)$, we have:
\begin{align*}
E[f(X)|C(1)=1, R=0] &= \frac{E[I(C(1)=1, R=0)f(X)]}{P(C(1)=1, R=0)}\\
&= \frac{E[f(X)E\{I(C(1)=1)|X\}E\{I(R=0)|X\}}{E[E\{I(C(1)=1)|X\}E\{I(R=0)|X\}]} & \text{By Assumption \ref{stratexch}*}\\
&= \frac{E[f(X)p_1(X)(1-\rho(X))}{E[p_1(X)(1-\rho(X))]} & \text{By Assumptions \ref{monotonicity}* and \ref{treatment_ignorability}}
\end{align*}
Applying this result to (\ref{eq: strong_mono_id_int}), we obtain:
\begin{equation}
E[Y(1)-Y(0)|R=0, C(1)=1] = \frac{E[p_1(X)(\mu_{11}(X)-\mu_{00}(X))(1-\rho(X))]}{E[p_1(X)(1-\rho(X)]} \label{eq: strong_mono_id}
\end{equation}
\end{proof}

\newpage

\section{Descriptive Statistics for Hotspotting Analysis}
\label{descriptive_stats_hotspotting}
\begin{table}[h!]
\hspace{-0.5cm}
\resizebox{\textwidth}{!}{%
\begin{tabular}{p{10cm}rr}\toprule
\textbf{Characteristic} & \textbf{Trial Sample} & \textbf{Target Population Sample} \\
& (N=613) & (N=463) \\\toprule 
  \# Hospitalizations 6 months prior to index admission & 1.80 (1.67) & 1.86 (1.44) \\ 
 Length of Index Admission (Days) & 7.01 (5.99) & 7.54 (9.66) \\ 
 Age Category &  &  \\ 
 \quad\quad$<$= 34 yrs & 55 (9.0\%) & 43 (9.3\%) \\ 
 \quad\quad$>$= 35 and $<$ 40 yrs & 33 (5.4\%) & 18 (3.9\%) \\ 
 \quad\quad$>$= 40 and $<$ 45 yrs & 34 (5.5\%) & 19 (4.1\%) \\ 
 \quad\quad$>$= 45 and $<$ 50 yrs & 64 (10.4\%) & 45 (9.7\%) \\ 
 \quad\quad$>$= 50 and $<$ 55 yrs & 75 (12.2\%) & 69 (14.9\%) \\ 
 \quad\quad$>$= 55 and $<$ 60 yrs & 90 (14.7\%) & 101 (21.8\%) \\ 
    \quad\quad$>$= 60 and $<$ 65 yrs & 86 (14.0\%) & 65 (14.0\%) \\ 
    \quad\quad$>$= 65 and $<$ 69 yrs & 82 (13.4\%) & 52 (11.2\%) \\ 
    \quad\quad$>$= 70 and $<$ 75 & 51 (8.3\%) & 30 (6.5\%) \\ 
    \quad\quad$>$= 75 & 43 (7.0\%) & 21 (4.5\%) \\ 
    Male & 315 (51.4\%) & 215 (46.4\%) \\ 
    Race &  &  \\ 
    \quad\quad Non-White & 496 (80.9\%) & 178 (38.4\%) \\ 
    \quad\quad White & 117 (19.1\%) & 285 (61.6\%) \\ 
    Marital Status &  &  \\ 
    \quad\quad Married, Civil Union, or Cohabitating & 170 (27.7\%) & 115 (24.8\%) \\ 
    \quad\quad Single, divorced, or widowed & 443 (72.3\%) & 348 (75.2\%) \\ 
    Diagnoses at or prior to index admission & &\\
    \quad\quad Heart Failure & 272 (44.4\%) & 180 (38.9\%) \\ 
    \quad\quad HIV/AIDS & 15 (2.4\%) & 12 (2.6\%) \\ 
    \quad\quad COPD/Emphysema & 172 (28.1\%) & 154 (33.3\%) \\ 
    \quad\quad Diabetes & 306 (49.9\%) & 246 (53.1\%) \\ 
    \quad\quad Arthritis & 81 (13.2\%) & 70 (15.1\%) \\ 
    \quad\quad Mental Health Condition & 501 (81.7\%) & 416 (89.8\%) \\ 
    \quad\quad Substance Abuse & 322 (52.5\%) & 271 (58.5\%) \\ 
   \bottomrule
\end{tabular}%
}
\caption{Descriptive statistics of hotspotting trial sample and target population sample}
\end{table}
\newpage
\begin{figure}[t!]
\centering
\includegraphics[scale=0.66]{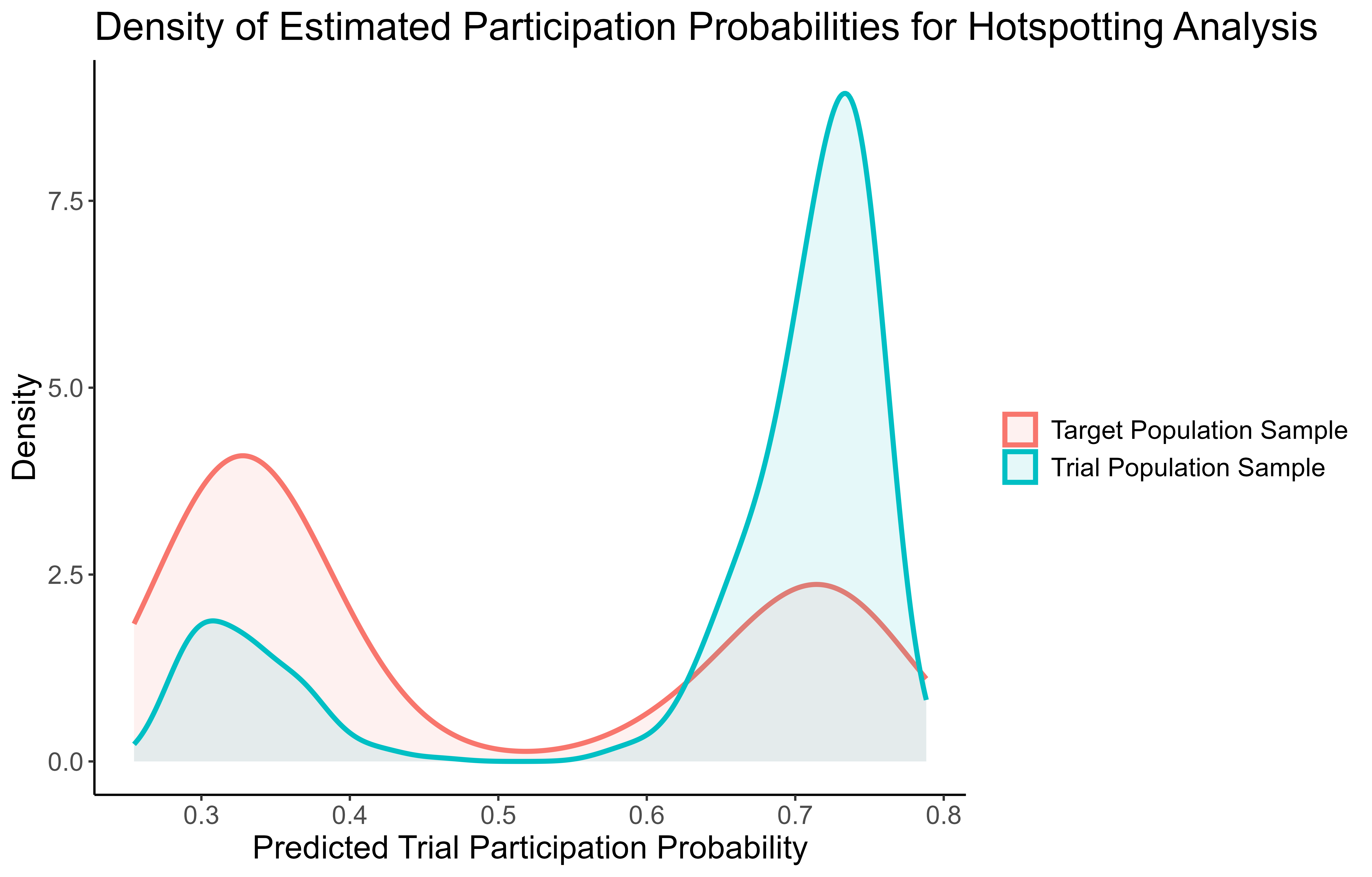}
\label{fig: hotspotting_overlap}
\caption{Kernel density estimates of estimated participation probability densities for the hotspotting data analysis.}
\end{figure}
\section{Additional Simulation Results}
\label{sub: sim_res}
\begin{table}[h!]
\centering
\begin{tabular}{rlllrrr}
  \hline
Ratio & Trial Size & Target Size & Correctly-Specified Models & EIF & IPW & OM \\ 
  \hline
1/2 & 500 & 500 & All &  0.888 & -0.135 & -0.154 \\ 
  1/2 & 1000 & 1000 & All &  0.099 &  0.070 &  0.071 \\ 
  1/2 & 2000 & 2000 & All &  0.025 &  0.030 &  0.025 \\ 
  1/2 & 50000 & 50000 & All &  0.000 & -0.001 & -0.001 \\ 
   \hline
1/2 & 500 & 500 & None &  1.149 &  1.124 &  0.557 \\ 
  1/2 & 1000 & 1000 & None &  1.162 &  1.133 &  0.559 \\ 
  1/2 & 2000 & 2000 & None &  1.164 &  1.136 &  0.557 \\ 
  1/2 & 50000 & 50000 & None &  1.160 &  1.138 &  0.558 \\ 
   \hline
1/2 & 500 & 500 & OM and PS &  0.160 &  1.130 &  0.070 \\ 
  1/2 & 1000 & 1000 & OM and PS &  0.050 &  1.091 &  0.050 \\ 
  1/2 & 2000 & 2000 & OM and PS &  0.011 &  1.043 &  0.013 \\ 
  1/2 & 50000 & 50000 & OM and PS &  0.002 &  1.039 &  0.001 \\ 
   \hline
1/2 & 500 & 500 & TP, OM, and PP &  0.717 & -0.020 & -0.330 \\ 
  1/2 & 1000 & 1000 & TP, OM, and PP & -0.077 & -0.025 & -0.330 \\ 
  1/2 & 2000 & 2000 & TP, OM, and PP &  0.093 & -0.017 & -0.326 \\ 
  1/2 & 50000 & 50000 & TP, OM, and PP &  0.004 & -0.018 & -0.326 \\ 
   \hline
1/2 & 500 & 500 & TP, PS, and PP & -0.385 & -0.567 &  0.602 \\ 
  1/2 & 1000 & 1000 & TP, PS, and PP &  0.102 &  0.058 &  0.560 \\ 
  1/2 & 2000 & 2000 & TP, PS, and PP &  0.054 &  0.016 &  0.562 \\ 
  1/2 & 50000 & 50000 & TP, PS, and PP &  0.003 & -0.001 &  0.557 \\ 
   \hline
\end{tabular}
\caption{Bias with (Trial Size):(Total Sample Size) Ratio of 1/2} 
\end{table}
\begin{table}[ht]
\centering
\begin{tabular}{rlllrrr}
  \hline
Ratio & Trial Size & Target Size & Correctly-Specified Models & EIF & IPW & OM \\ 
  \hline
1/21 & 500 & 10000 & All &  5.434 &  0.153 &  0.127 \\ 
  1/21 & 1000 & 20000 & All & -0.054 &  0.152 &  0.096 \\ 
  1/21 & 2000 & 40000 & All &  0.078 &  0.042 &  0.021 \\ 
  1/21 & 50000 & 1000000 & All &  0.004 &  0.001 &  0.001 \\ 
   \hline
1/21 & 500 & 10000 & None &  1.533 &  1.434 &  0.647 \\ 
  1/21 & 1000 & 20000 & None &  1.522 &  1.437 &  0.650 \\ 
  1/21 & 2000 & 40000 & None &  1.506 &  1.433 &  0.656 \\ 
  1/21 & 50000 & 1000000 & None &  1.494 &  1.436 &  0.654 \\ 
   \hline
1/21 & 500 & 10000 & OM and PS &  0.347 &  2.571 &  0.085 \\ 
  1/21 & 1000 & 20000 & OM and PS &  0.082 &  1.058 &  0.088 \\ 
  1/21 & 2000 & 40000 & OM and PS &  0.017 &  1.160 &  0.016 \\ 
  1/21 & 50000 & 1000000 & OM and PS &  0.001 &  1.111 &  0.001 \\ 
   \hline
1/21 & 500 & 10000 & TP, OM, and PP &  0.472 &  0.243 & -0.240 \\ 
  1/21 & 1000 & 20000 & TP, OM, and PP & -0.154 &  0.243 & -0.242 \\ 
  1/21 & 2000 & 40000 & TP, OM, and PP &  0.186 &  0.241 & -0.235 \\ 
  1/21 & 50000 & 1000000 & TP, OM, and PP &  0.005 &  0.227 & -0.237 \\ 
   \hline
1/21 & 500 & 10000 & TP, PS, and PP & -0.019 &  0.142 &  0.665 \\ 
  1/21 & 1000 & 20000 & TP, PS, and PP &  0.148 &  0.130 &  0.653 \\ 
  1/21 & 2000 & 40000 & TP, PS, and PP &  0.047 &  0.048 &  0.653 \\ 
  1/21 & 50000 & 1000000 & TP, PS, and PP &  0.003 &  0.004 &  0.654 \\ 
   \hline
\end{tabular}
\caption{Bias with (Trial Size):(Total Sample Size) Ratio of 1/21} 
\end{table}
\begin{table}[ht]
\centering
\begin{tabular}{rlllrrr}
  \hline
Ratio & Trial Size & Target Size & Correctly-Specified Models & EIF & IPW & OM \\ 
  \hline
1/3 & 500 & 1000 & All &  0.291 &  0.190 &  0.193 \\ 
  1/3 & 1000 & 2000 & All &  0.315 &  0.077 &  0.068 \\ 
  1/3 & 2000 & 4000 & All &  0.036 &  0.016 &  0.009 \\ 
  1/3 & 50000 & 100000 & All &  0.003 &  0.002 &  0.002 \\ 
   \hline
1/3 & 500 & 1000 & None &  1.234 &  1.180 &  0.559 \\ 
  1/3 & 1000 & 2000 & None &  1.220 &  1.172 &  0.552 \\ 
  1/3 & 2000 & 4000 & None &  1.209 &  1.172 &  0.552 \\ 
  1/3 & 50000 & 100000 & None &  1.208 &  1.177 &  0.553 \\ 
   \hline
1/3 & 500 & 1000 & OM and PS &  0.084 &  1.261 &  0.118 \\ 
  1/3 & 1000 & 2000 & OM and PS &  0.055 &  1.128 &  0.054 \\ 
  1/3 & 2000 & 4000 & OM and PS &  0.023 &  1.105 &  0.024 \\ 
  1/3 & 50000 & 100000 & OM and PS &  0.000 &  1.069 &  0.001 \\ 
   \hline
1/3 & 500 & 1000 & TP, OM, and PP &  0.015 &  0.057 & -0.294 \\ 
  1/3 & 1000 & 2000 & TP, OM, and PP &  0.213 &  0.051 & -0.292 \\ 
  1/3 & 2000 & 4000 & TP, OM, and PP &  0.055 &  0.051 & -0.294 \\ 
  1/3 & 50000 & 100000 & TP, OM, and PP &  0.006 &  0.046 & -0.292 \\ 
   \hline
1/3 & 500 & 1000 & TP, PS, and PP & -0.166 & -0.089 &  0.567 \\ 
  1/3 & 1000 & 2000 & TP, PS, and PP &  0.064 &  0.079 &  0.556 \\ 
  1/3 & 2000 & 4000 & TP, PS, and PP &  0.054 &  0.029 &  0.553 \\ 
  1/3 & 50000 & 100000 & TP, PS, and PP &  0.004 &  0.001 &  0.553 \\ 
   \hline
\end{tabular}
\caption{Bias with (Trial Size):(Total Sample Size) Ratio of 1/3} 
\end{table}

\begin{table}[ht]
\centering
\begin{tabular}{rlllrrr}
  \hline
Ratio & Trial Size & Target Size & Correctly-Specified Models & EIF & IPW & OM \\ 
  \hline
1/2 & 500 & 500 & All &   27.991 &    6.960 &    7.626 \\ 
  1/2 & 1000 & 1000 & All &    1.309 &    0.343 &    0.334 \\ 
  1/2 & 2000 & 2000 & All &    1.096 &    0.206 &    0.177 \\ 
  1/2 & 50000 & 50000 & All &    0.048 &    0.035 &    0.029 \\ 
   \hline
1/2 & 500 & 500 & None &    1.169 &    1.136 &    0.572 \\ 
  1/2 & 1000 & 1000 & None &    1.172 &    1.139 &    0.566 \\ 
  1/2 & 2000 & 2000 & None &    1.169 &    1.139 &    0.561 \\ 
  1/2 & 50000 & 50000 & None &    1.161 &    1.138 &    0.558 \\ 
   \hline
1/2 & 500 & 500 & OM and PS &    0.983 &    1.269 &    3.553 \\ 
  1/2 & 1000 & 1000 & OM and PS &    0.310 &    1.160 &    0.288 \\ 
  1/2 & 2000 & 2000 & OM and PS &    0.173 &    1.073 &    0.165 \\ 
  1/2 & 50000 & 50000 & OM and PS &    0.032 &    1.040 &    0.030 \\ 
   \hline
1/2 & 500 & 500 & TP, OM, and PP &   31.641 &    0.254 &    0.349 \\ 
  1/2 & 1000 & 1000 & TP, OM, and PP &    7.142 &    0.179 &    0.340 \\ 
  1/2 & 2000 & 2000 & TP, OM, and PP &    0.463 &    0.127 &    0.332 \\ 
  1/2 & 50000 & 50000 & TP, OM, and PP &    0.071 &    0.031 &    0.327 \\ 
   \hline
1/2 & 500 & 500 & TP, PS, and PP &   19.348 &   15.839 &    1.313 \\ 
  1/2 & 1000 & 1000 & TP, PS, and PP &    0.622 &    0.332 &    0.567 \\ 
  1/2 & 2000 & 2000 & TP, PS, and PP &    0.364 &    0.202 &    0.566 \\ 
  1/2 & 50000 & 50000 & TP, PS, and PP &    0.059 &    0.035 &    0.557 \\ 
   \hline
\end{tabular}
\caption{RMSE with (Trial Size):(Total Sample Size) Ratio of 1/2} 
\end{table}
\begin{table}[ht]
\centering
\begin{tabular}{rlllrrr}
  \hline
Ratio & Trial Size & Target Size & Correctly-Specified Models & EIF & IPW & OM \\ 
  \hline
1/21 & 500 & 10000 & All &  161.594 &    2.152 &    1.792 \\ 
  1/21 & 1000 & 20000 & All &    6.145 &    2.117 &    1.320 \\ 
  1/21 & 2000 & 40000 & All &    0.498 &    0.285 &    0.195 \\ 
  1/21 & 50000 & 1000000 & All &    0.067 &    0.053 &    0.032 \\ 
   \hline
1/21 & 500 & 10000 & None &    1.561 &    1.451 &    0.663 \\ 
  1/21 & 1000 & 20000 & None &    1.537 &    1.445 &    0.657 \\ 
  1/21 & 2000 & 40000 & None &    1.513 &    1.436 &    0.659 \\ 
  1/21 & 50000 & 1000000 & None &    1.494 &    1.437 &    0.654 \\ 
   \hline
1/21 & 500 & 10000 & OM and PS &   12.632 &   31.122 &    2.802 \\ 
  1/21 & 1000 & 20000 & OM and PS &    0.734 &   10.742 &    0.753 \\ 
  1/21 & 2000 & 40000 & OM and PS &    0.193 &    1.269 &    0.177 \\ 
  1/21 & 50000 & 1000000 & OM and PS &    0.033 &    1.115 &    0.031 \\ 
   \hline
1/21 & 500 & 10000 & TP, OM, and PP &   32.801 &    0.441 &    0.269 \\ 
  1/21 & 1000 & 20000 & TP, OM, and PP &   18.107 &    0.367 &    0.256 \\ 
  1/21 & 2000 & 40000 & TP, OM, and PP &    2.848 &    0.307 &    0.242 \\ 
  1/21 & 50000 & 1000000 & TP, OM, and PP &    0.096 &    0.230 &    0.237 \\ 
   \hline
1/21 & 500 & 10000 & TP, PS, and PP &    6.918 &    3.244 &    0.679 \\ 
  1/21 & 1000 & 20000 & TP, PS, and PP &    2.222 &    0.516 &    0.660 \\ 
  1/21 & 2000 & 40000 & TP, PS, and PP &    0.549 &    0.300 &    0.657 \\ 
  1/21 & 50000 & 1000000 & TP, PS, and PP &    0.086 &    0.055 &    0.654 \\ 
   \hline
\end{tabular}
\caption{RMSE with (Trial Size):(Total Sample Size) Ratio of 1/21} 
\end{table}
\begin{table}[ht]
\centering
\begin{tabular}{rlllrrr}
  \hline
Ratio & Trial Size & Target Size & Correctly-Specified Models & EIF & IPW & OM \\ 
  \hline
1/3 & 500 & 1000 & All &    2.467 &    1.490 &    1.846 \\ 
  1/3 & 1000 & 2000 & All &    6.645 &    0.525 &    0.437 \\ 
  1/3 & 2000 & 4000 & All &    0.310 &    0.215 &    0.163 \\ 
  1/3 & 50000 & 100000 & All &    0.053 &    0.039 &    0.029 \\ 
   \hline
1/3 & 500 & 1000 & None &    1.259 &    1.193 &    0.575 \\ 
  1/3 & 1000 & 2000 & None &    1.231 &    1.179 &    0.560 \\ 
  1/3 & 2000 & 4000 & None &    1.215 &    1.175 &    0.555 \\ 
  1/3 & 50000 & 100000 & None &    1.208 &    1.177 &    0.553 \\ 
   \hline
1/3 & 500 & 1000 & OM and PS &    1.763 &    1.591 &    1.022 \\ 
  1/3 & 1000 & 2000 & OM and PS &    0.312 &    1.228 &    0.298 \\ 
  1/3 & 2000 & 4000 & OM and PS &    0.171 &    1.149 &    0.164 \\ 
  1/3 & 50000 & 100000 & OM and PS &    0.031 &    1.070 &    0.029 \\ 
   \hline
1/3 & 500 & 1000 & TP, OM, and PP &    5.962 &    0.275 &    0.315 \\ 
  1/3 & 1000 & 2000 & TP, OM, and PP &    2.163 &    0.208 &    0.302 \\ 
  1/3 & 2000 & 4000 & TP, OM, and PP &    1.244 &    0.146 &    0.299 \\ 
  1/3 & 50000 & 100000 & TP, OM, and PP &    0.079 &    0.054 &    0.292 \\ 
   \hline
1/3 & 500 & 1000 & TP, PS, and PP &    5.602 &    8.099 &    0.623 \\ 
  1/3 & 1000 & 2000 & TP, PS, and PP &    1.152 &    0.440 &    0.563 \\ 
  1/3 & 2000 & 4000 & TP, PS, and PP &    0.385 &    0.213 &    0.556 \\ 
  1/3 & 50000 & 100000 & TP, PS, and PP &    0.063 &    0.037 &    0.553 \\ 
   \hline
\end{tabular}
\caption{RMSE with (Trial Size):(Total Sample Size) Ratio of 1/3} 
\end{table}

\begin{table}[ht]
\centering
\begin{tabular}{rlllr}
  \hline
Ratio & Trial Size & Target Size & Correctly-Specified Models & Estimated Coverage \\ 
  \hline
1/2 & 500 & 500 & All & 0.950 \\ 
  1/2 & 1000 & 1000 & All & 0.961 \\ 
  1/2 & 2000 & 2000 & All & 0.957 \\ 
  1/2 & 50000 & 50000 & All & 0.951 \\ 
   \hline
1/2 & 500 & 500 & None & 0.003 \\ 
  1/2 & 1000 & 1000 & None & 0.000 \\ 
  1/2 & 2000 & 2000 & None & 0.000 \\ 
  1/2 & 50000 & 50000 & None & 0.000 \\ 
   \hline
1/2 & 500 & 500 & OM and PS & 0.903 \\ 
  1/2 & 1000 & 1000 & OM and PS & 0.861 \\ 
  1/2 & 2000 & 2000 & OM and PS & 0.818 \\ 
  1/2 & 50000 & 50000 & OM and PS & 0.813 \\ 
   \hline
1/2 & 500 & 500 & TP, OM, and PP & 0.937 \\ 
  1/2 & 1000 & 1000 & TP, OM, and PP & 0.932 \\ 
  1/2 & 2000 & 2000 & TP, OM, and PP & 0.947 \\ 
  1/2 & 50000 & 50000 & TP, OM, and PP & 0.970 \\ 
   \hline
1/2 & 500 & 500 & TP, PS, and PP & 0.980 \\ 
  1/2 & 1000 & 1000 & TP, PS, and PP & 0.970 \\ 
  1/2 & 2000 & 2000 & TP, PS, and PP & 0.947 \\ 
  1/2 & 50000 & 50000 & TP, PS, and PP & 0.934 \\ 
   \hline
\end{tabular}
\caption{Coverage with (Trial Size):(Total Sample Size) Ratio of 1/2} 
\end{table}
\begin{table}[ht]
\centering
\begin{tabular}{rlllr}
  \hline
Ratio & Trial Size & Target Size & Correctly-Specified Models & Estimated Coverage \\ 
  \hline
1/21 & 500 & 10000 & All & 0.940 \\ 
  1/21 & 1000 & 20000 & All & 0.949 \\ 
  1/21 & 2000 & 40000 & All & 0.950 \\ 
  1/21 & 50000 & 1000000 & All & 0.943 \\ 
   \hline
1/21 & 500 & 10000 & None & 0.000 \\ 
  1/21 & 1000 & 20000 & None & 0.000 \\ 
  1/21 & 2000 & 40000 & None & 0.000 \\ 
  1/21 & 50000 & 1000000 & None & 0.000 \\ 
   \hline
1/21 & 500 & 10000 & OM and PS & 0.765 \\ 
  1/21 & 1000 & 20000 & OM and PS & 0.738 \\ 
  1/21 & 2000 & 40000 & OM and PS & 0.707 \\ 
  1/21 & 50000 & 1000000 & OM and PS & 0.694 \\ 
   \hline
1/21 & 500 & 10000 & TP, OM, and PP & 0.940 \\ 
  1/21 & 1000 & 20000 & TP, OM, and PP & 0.918 \\ 
  1/21 & 2000 & 40000 & TP, OM, and PP & 0.918 \\ 
  1/21 & 50000 & 1000000 & TP, OM, and PP & 0.976 \\ 
   \hline
1/21 & 500 & 10000 & TP, PS, and PP & 0.985 \\ 
  1/21 & 1000 & 20000 & TP, PS, and PP & 0.978 \\ 
  1/21 & 2000 & 40000 & TP, PS, and PP & 0.962 \\ 
  1/21 & 50000 & 1000000 & TP, PS, and PP & 0.930 \\ 
   \hline
\end{tabular}
\caption{Coverage with (Trial Size):(Total Sample Size) Ratio of 1/21} 
\end{table}
\begin{table}[ht]
\centering
\begin{tabular}{rlllr}
  \hline
Ratio & Trial Size & Target Size & Correctly-Specified Models & Estimated Coverage \\ 
  \hline
1/3 & 500 & 1000 & All & 0.947 \\ 
  1/3 & 1000 & 2000 & All & 0.956 \\ 
  1/3 & 2000 & 4000 & All & 0.952 \\ 
  1/3 & 50000 & 100000 & All & 0.947 \\ 
   \hline
1/3 & 500 & 1000 & None & 0.000 \\ 
  1/3 & 1000 & 2000 & None & 0.000 \\ 
  1/3 & 2000 & 4000 & None & 0.000 \\ 
  1/3 & 50000 & 100000 & None & 0.000 \\ 
   \hline
1/3 & 500 & 1000 & OM and PS & 0.876 \\ 
  1/3 & 1000 & 2000 & OM and PS & 0.814 \\ 
  1/3 & 2000 & 4000 & OM and PS & 0.826 \\ 
  1/3 & 50000 & 100000 & OM and PS & 0.813 \\ 
   \hline
1/3 & 500 & 1000 & TP, OM, and PP & 0.934 \\ 
  1/3 & 1000 & 2000 & TP, OM, and PP & 0.922 \\ 
  1/3 & 2000 & 4000 & TP, OM, and PP & 0.940 \\ 
  1/3 & 50000 & 100000 & TP, OM, and PP & 0.959 \\ 
   \hline
1/3 & 500 & 1000 & TP, PS, and PP & 0.980 \\ 
  1/3 & 1000 & 2000 & TP, PS, and PP & 0.958 \\ 
  1/3 & 2000 & 4000 & TP, PS, and PP & 0.958 \\ 
  1/3 & 50000 & 100000 & TP, PS, and PP & 0.921 \\ 
   \hline
\end{tabular}
\caption{Coverage with (Trial Size):(Total Sample Size) Ratio of 1/3} 
\end{table}

\clearpage
\ifSubfilesClassLoaded{
  \bibliography{bibliography}
}{}
\end{document}
\end{document}